\newtheorem{theorem}{Theorem}
\newtheorem{define}[theorem]{Definition}
\newtheorem{lem}[theorem]{Lemma}
\newtheorem{assump}[theorem]{Assumption}
\begin{document}

\title{Privacy Preserving Event Detection}

\author{Xiaoshan~Wang
        and~Tan~F.~Wong
\thanks{Manuscript received December 1, 2023; revised December 3, 2024.}
\thanks{The authors are with the Department of Electrical and Computer Engineering, University of Florida, Gainesville, FL 32611,  USA (e-mail: xiaoshanwang@ufl.edu; twong@ufl.edu). This material is based upon work supported by the National Science Foundation under Grant No. 2106589. Any opinions, findings, and conclusions or recommendations expressed in this material are those of the author(s) and do not necessarily reflect the views of the National Science Foundation.}
}

\maketitle

\begin{abstract}
  This paper presents a privacy-preserving event detection scheme
  based on measurements made by a network of sensors. A diameter-like
  decision statistic made up of the marginal types of the measurements
  observed by the sensors is employed. The proposed detection scheme
  can achieve the best type-I error exponent as the type-II error rate
  is required to be negligible.  Detection performance with
  finite-length observations is also demonstrated through a simulation
  example of spectrum sensing.  Privacy protection is achieved by
  obfuscating the sensors' marginal types with random zero-modulo-sum
  numbers that are generated and distributed via the exchange of
  encrypted messages among the sensors. The privacy-preserving
  performance against ``honest but curious'' adversaries, including
  colluding sensors, the fusion center, and external eavesdroppers, is
  analyzed through a series of cryptographic games. It is shown that
  the probability that any probabilistic polynomial time adversary
  successfully estimates the sensors' measured types cannot be much
  better than independent guessing, when there are at least two
  non-colluding sensors.
\end{abstract}

\begin{IEEEkeywords}
Privacy protection, event detection, $K$-sample problem, cryptographic game, wireless sensor network.
\end{IEEEkeywords}

%
\IEEEpeerreviewmaketitle


\section{Introduction}

A typical event detection system consists of a network of sensors
distributed in a target area for collecting and reporting measurement
data to a fusion center which aggregates the reported data to make a
detection decision. In this paper, we develop a privacy-preserving
event detection scheme, in which the sensors obfuscate the square
roots of the marginal types (empirical distributions) of their
measurements with random zero-modulo-sum (ZMS) numbers before
uploading them to the fusion center, which then performs a binary
hypothesis test based on a \emph{diameter-like} statistic that
measures the similarity level of the uploaded types. In this way, the
target event can be detected without exposing the data of individual
sensors. The proposed scheme results from a joint detection-privacy
design approach aiming to achieve two interconnected objectives.  One
objective is to construct a test that can achieve the best detection
error exponents, and the other is to develop a privacy-preserving
protocol that can minimize the probability of potential attackers
successfully estimating the sensors' types.


\subsection{$K$-sample problem} \label{sec:k-sample} 

Crowd-sensing of spectrum occupancy in, e.g., the citizen broadband
radio service (CBRS) band by smart phones is a practical application
that motivates the event detection problem considered (see
Section~\ref{sec:simulation} for a more detailed
example). Intuitively, the distributions of the received powers
measured by distributed sensors would be different when a potential
source is transmitting because of the different distances between the
source and the sensors. On the other hand, when the source is silent,
the received power distributions would be similar because only noise
is present in the sensors' received signals. Thus, comparing the
similarity of the power distributions across the sensors would allow
us to determine whether the source is transmitting or not, without any
\emph{a priori} information of the sensors' power distributions.

This simple intuitive approach is effectively a generalization to the
classical $K$-sample problem, which is to test whether the multiple
samples are drawn from the same unspecified distribution. During the
past decades, a variety of tests have been proposed to solve this
problem. Many of these tests, such as
\cite{kruskal1952use}--\nocite{terry1952some,
  puri1965some,mack1981k,scholz1987k}\cite{murakami2006k},
are based on ranking the samples.
The ranking operation requires the sensors to report all their
observations to the fusion center for calculating the decision
statistic. As a result, privacy protection would be required for the
raw data of all sensors.
Some non-ranking tests, such as ~\cite{quade1966analysis,szekely2004testing},
also have the same requirement of the
raw sensor observations be available at the fusion center, and hence
require complicated privacy protection mechanism.  The decision
statistics of the tests proposed in~\cite{kiefer1959k,chen2014bayesian},
on the other hand, can be expressed as functions of distributed
components that can be calculated at the sensors.  Nonetheless, these
functions have complicated forms, which may require multiple rounds of
obfuscation to protect the privacy of the distributed components.  In
general, the detection performance of all the above tests is analyzed
based on the limiting and/or approximate distributions of the
statistics, and is verified through the simulations with artificial or
real world data sets (see \cite{szekely2004testing}).  While there
appears to be no error-exponent analysis specific for the $K$-sample
problem available in the literature, results for the general composite
hypothesis testing problem~\cite{Zeitouni1991} apply.

We consider the generalization to the basic $K$-sample problem that
the marginal distributions of the sensors' observations do not need to
be exactly the same under the null hypothesis. In
Section~\ref{sec:detection}, we propose a novel specialization of the
composite hypothesis testing formulation to this generalized
$K$-sample problem by ways of a diameter measure that characterizes
the level of similarity between the sensors' marginal distributions.
The proposed formulation allows us to establish the important result
that the marginal types of the sensors' observations are sufficient in
achieving optimal worst-case type-I error exponent, whereas this
result is not readily available from the general large deviation
theory based analysis in~\cite{Zeitouni1991}. This result is critical
to our goal of protecting the privacy of the sensors' data because it
supports the use of a simple zero-modulo-sum (ZMS) obfuscation scheme
to hide the sensors' marginal types when a diameter measure based on
the Hellinger distance is employed, as will be further discussed in
Section~\ref{sec:zms} below.
The performance of a hypothesis test that employs the marginal
sensors' types to form the decision statistic is investigated in
Section~\ref{sec:simulation} by a simulation example of a spectrum
sensing scenario employing the CBRS channel model.

\subsection{Privacy protection} \label{sec:protect}
In many applications, it is desirable to protect the sensors' data
and/or statistics reported to the fusion center as they may expose
private information about the sensors themselves. A number of
mechanisms have been proposed to provide some form of privacy
protection.

An intuitive approach is to homomorphically encrypt the measurement
data and have the message transmission, statistic computation, and
event detection all conducted in the ciphertext domain. For example,
ref. \cite{lihong} designs a received-signal-strength fingerprinting
localization scheme called PriWFL by leveraging the Paillier
cryptosystem to preserve location privacy of users and data privacy of
the service provider. The PriWFL scheme is extended in \cite{mine} to
support channel state information fingerprinting localization and to
give further protection on position privacy of the localization
infrastructure.
The main drawback of homomorphic encryption is its high computational
overheads.

Another approach is based on compressive sensing (CS). Pseudo-random
measurement matrices are employed to linearly encode the sensors'
measurements, which are then recovered at the fusion center.
Based on this approach, a privacy-preserving federated learning (FL)
scheme for spectrum detection in CBRS is proposed
in~\cite{wang2020privacy}, and
a multi-level privacy-preserving scheme for the users with different
privilege levels to acquire and analyze the data is constructed
in~\cite{multics2022}.
A critical issue of the CS approach is how to generate and distribute
the secret measurement matrices. In the above examples, the required
secrecy is generated from channel reciprocity between wireless
transceivers~\cite{wang2020privacy} and from a chaotic
system~\cite{multics2022}. It is however difficult to obtain
verifiable secrecy from both of these mechanisms. 


Another large category of privacy-preserving techniques involves
perturbing the sensors' original data with well designed noise such
that the perturbed data can still yield an acceptable level of
performance. A popular design methodology of perturbation is based on
differential privacy (DP)~\cite{dwork2006}, which aims to constrain
the distance between any pair of outputs provided the input
collections only differ in one data point. Under the DP constraint,
ref.~\cite{wei2020federated} develops a FL scheme called NbAFL, which
lets the fusion center perturb the global model in the downlink
transmission and the users perturb the local models in the uplink
transmission. In~\cite{seif2020wireless}, FL is implemented under the
DP constraint over a Gaussian multiple access channel to extract
privacy benefit from the underlying physical layer characteristics.
The main shortcoming of perturbation is the inevitable performance
degradation caused by the introduced noise. In addition, when the
number of sensors and the dimension of data are large, the DP
guarantee may not be practically sufficient. More importantly, the DP
guarantee is derived from a defender's perspective rather than against
the objective and/or capability of a potential attacker.


\subsection{Zero-modulo-sum obfuscation} \label{sec:zms} 

As discussed in Section~\ref{sec:k-sample}, our main result of the
generalized $K$-sample problem is that the marginal types of the
sensors' observations are sufficient to achieve the optimal worst-case
type-I error exponent.  In particular, if the Hellinger diameter
measure of the sensors' marginal types is employed to construct the
decision statistic used by a test performed at the fusion center, the
resulting statistic can be expressed in terms of the sum of the square
roots of the sensors' marginal types. This simple but key observation
allows us to employ a classical ZMS obfuscation scheme to protect the
sensors' data privacy in lieu of the other approaches with their
respective shortcomings summarized in Section~\ref{sec:protect}.

ZMS obfuscation is widely used in many different applications. We
highlight here some related recent works. A zero-sum (but not modulo
sum) obfuscating mechanism is adopted in \cite{taoshu} as an
intermediate step to achieve privacy-preserving
localization. Ref.~\cite{ukil2010privacy} applies ZMS obfuscation to
perform data aggregation in wireless sensor networks, where the data
are obfuscated in a round-robin order through all sensors. Similarly,
ZMS obfuscation is applied in~\cite{danezis2013smart} to a smart grid,
where data aggregation is conducted with the help of hash
functions. In \cite{modulo}, the protocols of secret sharing and
multi-party anonymous authentication are developed with ZMS
obfuscation, and the detection of dishonest participants is discussed.
Another related work is~\cite{secagg}, in which a secure aggregation
protocol, called SecAgg, is proposed for FL. The protocol utilizes
random numbers generated by pseudo random generators (PRGs) to
obfuscate model updates from the FL participants. The seeds of PRGs
are negotiated via a Diffie-Hellman exchange between each participant
pair, including any malicious participants. 

In our case, each sensor generates a collection of uniform ZMS random
numbers, among which one number is kept secret to the sensor itself,
and other numbers are confidentially sent to other sensors by way of a
public key cryptosystem. Then, each sensor obfuscates its measured
square-root type by calculating the modulo sum of the type, the
self-kept number and the received numbers such that all the
obfuscation can be eventually canceled out at the fusion center.  The
detailed protocol to apply this ZMS obfuscation scheme is discussed in
Section~\ref{sec:protocol}.


In Section~\ref{sec:analysis}, we analytically quantify the privacy
protection performance of the ZMS obfuscation scheme under an ``honest
but curious'' threat model in which the adversary may include external
eavesdroppers, the fusion center, and a subset of sensors all
colluding to estimate the other sensors' marginal types. We apply the
standard attacker-challenger formalism in cryptographic analysis to
show that any probabilistic polynomial time (PPT) attacker cannot
improve the probability of correctly estimating the sensors' marginal
types beyond independent guessing given the information that she can
obtain from her own measurement and that is ``leaked'' to her via the
proposed protocol, provided that the public key cryptosystem to used
distribute the ZMS random numbers is secure under the chosen plaintext
attack (CPA) criterion. 

The prevailing privacy analysis methodology for the ZMS obfuscation
approach against the honest-but-curious attacker is through the notion
of \emph{view}~\cite{secagg,so2022lightsecagg,liu2023long}.  The view
approach essentially establishes that all internal states and received
messages (and hence any estimator generated from this set of
information) of the attacker during the execution of the ZMS protocol
can be emulated by a PPT simulator taking the same set of inputs in
that the distribution of the simulated internal states and messages is
indistinguishable from that of the real ones obtained during the
protocol execution. This leads to the interpretation that the attacker
cannot learn anything new more than its own inputs.  The view notion
is inadequate as a proof of achieving privacy in that it fails to
directly bound the performance of every estimator that the attacker
may construct using the information available to it.  The privacy
analysis in Section~\ref{sec:analysis}, by contrast, gives a direct
and strong bound on the estimation performance of the attacker. This
is particularly important for rigorous integration of the privacy
constraint in the detection design. Specially, this privacy bound
ensures that the additional requirement of privacy protection does not
fundamentally require any tradeoff in achieving the optimal worst-case
type-I error exponent in the generalized $K$-sample problem.

\section{Notation and Assumptions} \label{sec:notation}

\subsection{Basic Notation}

We use uppercase letters and the corresponding lowercase letters to
denote random variables and the values taken by the random variables,
respectively. We use boldface letters to denote an indexed collection
of random variables and values. Script letters are generally reserved
for index sets and alphabets. When an index set is employed as a
subscript, we refer to the collection of random variables (or values)
indexed over the set. For convenience, we slightly abuse notation by
using a single index to also denote a singleton index set containing
only that index. For example, given a sensor network with $K$ sensors,
$\mathcal{K}=\{1,2,\ldots,K\}$ denotes the set of sensor indices,
$\mathcal{X}$ denotes the finite alphabet of sensor measurements,
$\mathbf{X}_k=[X_{k,i}]_{i=1}^t \in\mathcal{X}^t$ denotes the
$t$-length measurement sequence of the $k$th sensor, and
$\mathbf{X}_{\mathcal{K}}=[\mathbf{X}_k]_{k\in\mathcal{K}}
\in\mathcal{X}^{Kt}$ denote the collection of measurement sequences
from all sensors.

For the rest of the paper, we assume the sensor measurement alphabet
$\mathcal{X}$ is finite with $\mathcal{P}(\mathcal{X})$ denoting the
set of distributions (probability mass functions) over
$\mathcal{X}$. The distribution of a random variable $X$ over
$\mathcal{X}$ is denoted by $p_X$. When convenient, we may write a
distribution $p \in \mathcal{P}(\mathcal{X})$ as a vector, i.e.,
$\mathbf{p}=[p(x)]_{x\in\mathcal{X}}$. For any $t$-length measurement
sequence $\mathbf{X}_{k}$,
$\tilde{q}_{\mathbf{X}_{k}}(x)=\frac{1}{t} \cdot \left(\text{number of
    occurrences of~} x \text{~in~} \mathbf{X}_k \right)$ denotes the
\emph{type (empirical distribution)} of $\mathbf{X}_{k}$. The set of
all possible types of $t$-length sequences is denoted by
$\tilde{\mathcal{Q}}_t(\mathcal{X})$. Note that
$\bigcup_{t=1}^{\infty} \tilde{\mathcal{Q}}_t(\mathcal{X})$ is dense
in $\mathcal{P}(\mathcal{X})$. Furthermore, for any
$\tilde{q} \in \tilde{\mathcal{Q}}_t(\mathcal{X})$, we denote its
\emph{type class} by
$\mathcal{T}(\tilde{q}) =\{\mathbf{x} \in \mathcal{X}^{t}:
\tilde{q}_{\mathbf{x}} = \tilde{q}\}$.

A vector of $K$ marginal distributions is denoted by
$\mathbf{p}_{\mathcal{K}} = [p_k]_{k\in\mathcal{K}} \in
\mathcal{P}^K(\mathcal{X})$, with each
$p_k\in\mathcal{P}(\mathcal{X})$.
With a slight abuse of notation, we also use the same notation
$\mathbf{p}_{\mathcal{K}}$ to denote a general joint distribution in
$\mathcal{P}(\mathcal{X^K})$. When necessary to highlight the former
case, we will explicitly state
$\mathbf{p}_{\mathcal{K}} \in \mathcal{P}^K(\mathcal{X})$. The same
convention applies to vectors of marginal types in
$\tilde{\mathcal{Q}}_t^K(\mathcal{X})$ and joint types in
$\tilde{\mathcal{Q}}_t(\mathcal{X}^K)$.

We will use the following \emph{diameter} measure to characterize the
degree of similarity between marginal distributions:
\begin{define} \label{hd}
Let $d:  \mathcal{P}^K(\mathcal{X}) \rightarrow [0,\infty)$. We call
$d(\cdot)$ a diameter measure if 
\begin{itemize}
\item $d(\mathbf{p}_{\mathcal{K}}) =0$ if and only if the marginal
  distributions in $\mathbf{p}_{\mathcal{K}}$ are identical, and
\item $d(\cdot)$ is continuous in $\mathcal{P}^K(\mathcal{X})$.
\end{itemize}
\end{define}
The diamater measure naturally extends to any general
$\mathbf{p}_{\mathcal{K}} \in \mathcal{P}(\mathcal{X}^K)$ in that the
marginals of $\mathbf{p}_{\mathcal{K}}$ are employed when calculating
$d(\mathbf{p}_{\mathcal{K}})$.

For the privacy-preserving protocol and its performance analysis
in~Sections~\ref{sec:protocol} and~\ref{sec:analysis}, we will
specialize to the following choice of the diameter measure based
on the Hellinger distance:
\begin{align}
d(\mathbf{p}_{\mathcal{K}})
&= \sum_{k\in\mathcal{K}}\sum_{l\in\mathcal{K}}d^2_H(p_k,p_l) 
= 
K^{2}- \sum_{x\in\mathcal{X}}
              \left(\sum_{k=1}^{K}\sqrt{{p}_{k}(x)}\right)^2
\label{equ:d}
\end{align}
where the $p_k$'s in~\eqref{equ:d} are the corresponding marginals of
$\mathbf{p}_{\mathcal{K}}$, and for any marginal pair
$p_k, p_l \in \mathcal{P}(\mathcal{X})$,
\begin{align*}
d^2_H(p_k,p_l) 
&= \frac{1}{2}\sum_{x\in\mathcal{X}}\left
  (\sqrt{{p}_{k}(x)}-\sqrt{{p}_{l}(x)}\right)^{2}
\end{align*}
is the Hellinger distance square between them~\cite{hd}.  For
convenience, we will call
the specialized diameter measure in~\eqref{equ:d} the \emph{Hellinger
  diameter}. It is not hard to show that the Hellinger diameter is
bounded, i.e.,  for every 
$\mathbf{p}_{\mathcal{K}} \in \mathcal{P}(\mathcal{X}^K)$, 
$0 \leq d(\mathbf{p}_{\mathcal{K}}) \leq d_{\max}$  with 
\begin{align}
d_{\max} = 
& K(K-1) -  \left\lfloor \frac{K}{|\mathcal{X}|} \right\rfloor (K
                                  - |\mathcal{X}| +
                                  K\bmod |\mathcal{X}|).
\label{equ:dmax}
\end{align}

For each $x_0 \in \mathcal{X}$, the indicator function
$\delta_{x_0}(x) = 1$ if $x = x_0$, and $\delta_{x_0}(x) = 0$
otherwise. This definition naturally extends when the arguments are
collections. Any other function $F(X)$ in this paper, unless otherwise
stated, is assumed stochastic. That is, $F(X)$ is random and is
conditionally independent of all other random variables given its
input $X$.

\subsection{Fixed-point Arithmetics}
Let $N$ be a positive integer and $\mathcal{N}_{m}$ be the collection
of all $m$-bit fixed-point numbers that quantize the interval
$[0, N)$, i.e.,
$\mathcal{N}_{m} = \left\{0,\frac{N}{2^m},\ldots,\frac{(2^m-1)N}{2^m}
\right\}$. We ``quantize'' each
$\tilde{q} \in \mathcal{\tilde Q}_t(\mathcal{X})$ by mapping
$\sqrt{\tilde{q}(x)}$, for each $x \in \mathcal{X}$, to its closest
value in $\mathcal{N}_{m}$.  The set of these quantized square-root
types is denoted by $\mathcal{Q}_t(\mathcal{X})$. More specifically,
every $\tilde{q} \in \mathcal{\tilde Q}_t(\mathcal{X})$ is mapped to a
$q \in \mathcal{Q}_t(\mathcal{X})$ that satisfies
$q(x) \in \mathcal{N}_m$, $0 \leq q(x) < 1$, and
$|\sqrt{\tilde{q}(x)} - q(x)| \leq 2^{-m-1}$ for every
$x \in \mathcal{X}$. Note that $q^2$ may not be a true type; however
it must satisfy
$\left|\sum_{x\in\mathcal{X}} q^2(x) - 1\right| \leq 2^{-m}
|\mathcal{X}|$. We assume that $m$ is chosen large enough to guarantee
$q^2$ is sufficiently close to a true type.  We also note that
$|\mathcal{Q}_t(\mathcal{X})| \leq |\mathcal{\tilde Q}_t(\mathcal{X})|
\leq (t+1)^{|\mathcal{X}|}$~\cite[Theorem~11.1.1]{info}. With a large
enough $m$ (i.e., $m = \mathcal{O}(\log_2 t)$), we assume
$|\mathcal{Q}_t(\mathcal{X})|$ to have the same order as
$|\mathcal{\tilde Q}_t(\mathcal{X})|$.

Let $\oplus$ and $\ominus$ denote addition and subtraction modulo $N$
over the fixed-point numbers in $\mathcal{N}_{m}$, respectively. Note
that $\mathcal{N}_{m}$ is closed under both the operations. If the
operands of $\oplus$ or $\ominus$ are indexed collections, it means
performing the $\oplus$ or $\ominus$ operation elementwise. For any
$x_0 \in \mathcal{N}_{m}$, the indicator function
$\delta_{x_0}(x) = \delta_0(x \ominus x_0)$ for all
$x \in \mathcal{N}_{m}$. We will omit the subscript $0$ in $\delta_0$
and write $\delta(x \ominus x_0)$ as the indicator function.

For a collection of random variables $[Y_k(x)]$ on $\mathcal{N}_{m}$
indexed by $k \in \mathcal{K}$ and $x\in\mathcal{X}$, we write
$\mathbf{Y}_{\mathcal{L}}(x) = [Y_{k}(x)]_{k\in\mathcal{L}}$ and
$\mathbf{Y}_{\mathcal{L}} =
[\mathbf{Y}_{\mathcal{L}}(x)]_{x\in\mathcal{X}}$ for any
$\mathcal{L} \subseteq \mathcal{K}$. We use the notation
$\mathbf{Y}_{\mathcal{K}} \sim {u}(\mathcal{N}^{K|\mathcal{X}|}_{m})$
to say $\mathbf{Y}_{\mathcal{K}}$ is uniformly distributed on
$\mathcal{N}^{K|\mathcal{X}|}_{m}$, i.e., all elements in
$\mathbf{Y}_{\mathcal{K}}$ are independent and identically distributed
(i.i.d.) according to ${u}(\mathcal{N}_{m})$.
Similarly, for a collection of random variables $[R_{k,l}(x)]$ on
$\mathcal{N}_{m}$ indexed by $(k,l)\in \mathcal{K}^2$ and
$x\in\mathcal{X}$, we write
$\mathbf{R}_{\mathcal{I},\mathcal{J}}(x) =
[R_{k,l}(x)]_{k\in\mathcal{I},l\in\mathcal{J}}$ and
$\mathbf{R}_{\mathcal{I},\mathcal{J}} =
[\mathbf{R}_{\mathcal{I},\mathcal{J}}(x)]_{x\in\mathcal{X}}$ for any
$\mathcal{I},\mathcal{J} \subseteq \mathcal{K}$. In addition, we
define
$\mathbf{\Sigma}_{\mathbf{Y}_\mathcal{L}}(x) =
\bigoplus_{k\in\mathcal{L}}Y_{k}(x)$,
$\mathbf{\Sigma}_{\mathbf{Y}_\mathcal{L}} =
[\Sigma_{\mathbf{Y}_\mathcal{L}}(x)]_{x\in\mathcal{X}}$,
$\mathbf{\Sigma}_{\mathbf{R}_{\mathcal{I},\mathcal{J}}}(x) =
[\bigoplus_{k\in\mathcal{I}}R_{k,l}(x)]_{l\in\mathcal{J}}$, and
$\mathbf{\Sigma}_{\mathbf{R}_{\mathcal{I},\mathcal{J}}} =
[\mathbf{\Sigma}_{\mathbf{R}_{\mathcal{I},\mathcal{J}}}(x)]_{x\in\mathcal{X}}$.

\subsection{Cryptographic Assumptions} \label{sec:crypto_assump}

We review here some standard cryptographic concepts and assumptions
useful for constructing and analyzing our privacy preserving mechanism
in later sections. In particular, we will follow the standard
cryptographic methodology that defines an attack experiment involving
two interactive parties, namely a \emph{challenger} and an
\emph{attacker}, and evaluates the probability advantage of the
attacker winning the experiment. To that end, the attack
experiment will be based on the well known chosen-plaintext attack (CPA)
model~\cite{intro}.

Consider 
\begin{itemize} 
\item a public-key cryptographic scheme $\Pi= (S,E,D)$ with security
  parameter $n$, and
\item a probabilistic, polynomial-time (PPT) attacker, whose running
  time is polynomial in $n$.
\end{itemize} 
The functions $S$, $E$, and $D$ represent the algorithms of key
generation, encryption, and decryption, respectively. The security
parameter $n$ is usually formulated in the unary form as $1^{n}$, a
string of $n$ $1$'s.
In
this paper, we restrict each plaintext in $\Pi$ to be an $m$-bit
message corresponding to a fixed-point number in $\mathcal{N}_m$, and
the resulting ciphertext space is denoted by $\mathcal{C}$. As shown
in Figure~\ref{fig:cpa}, the CPA experiment is defined as
follows: 
\begin{figure}
  \centering
  \includegraphics[width=0.42 \textwidth]
  {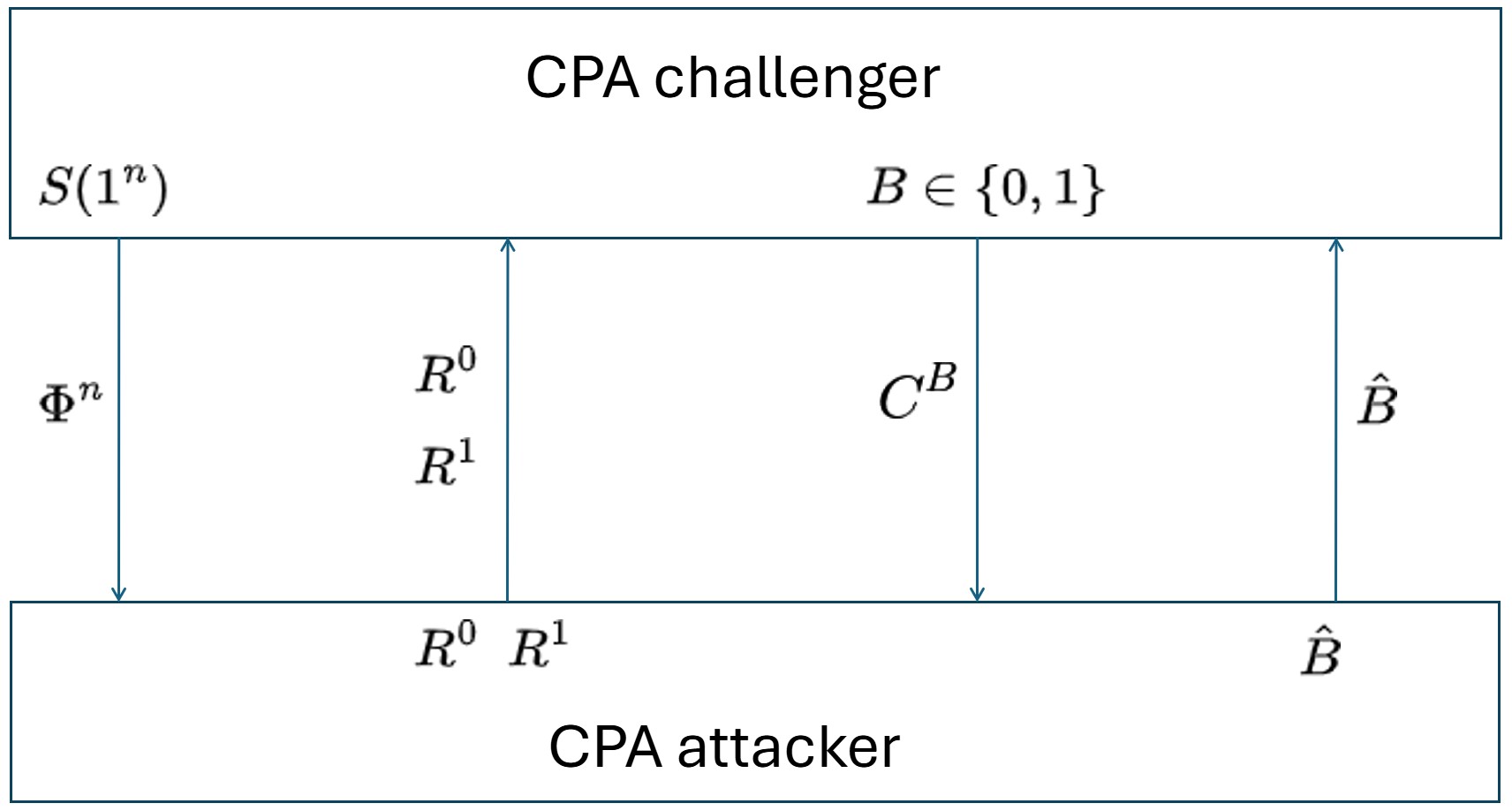}
  \caption{The CPA experiment.}\label{fig:cpa}
\end{figure}
\begin{enumerate}
\item The challenger runs $S(1^{n})$ to generate a pair of public key
  $\Phi^{n}\in\mathcal{E}_n$ and private key
  $\Psi^{n}\in\mathcal{D}_n$, and then gives $\Phi^n$ to the
  attacker. This means that the attacker can encrypt any plaintext by
  executing $E(\cdot;\Phi^n)$ herself.

\item The attacker generates a pair of challenge messages
  $[R^{0},R^{1}] \sim {u}(\mathcal{N}^2_m)$, and gives them to her
  challenger.

\item The challenger selects an independent random bit $B\in\{0,1\}$
  with equal probabilities, computes the ciphertext
  $C^B = E(R^{B};\Phi^n)\in\mathcal{C}$, and gives $C^B$ to
  the attacker.

\item The attacker outputs a bit
  $\hat{B} = \hat{B}(C^B,R^0,R^1,\Phi^n)$ as her estimate of
  $B$. Then, she reports $\hat{B}$ to her challenger.
\end{enumerate} 
If $\hat{B} = B$, it is said that the attacker wins the CPA
experiment.  Define the probability advantage of the attacker winning
the CPA experiment as
\begin{align}\label{equ:adv}
  F_{\mathrm{CPA}}(n) = \Pr(\hat{B} = B)-\frac{1}{2}.
\end{align} 

Based on the CPA experiment described above, we define the CPA
security of a public key scheme as follows~\cite{intro}:
\begin{define}
  A public key encryption scheme $\Pi= (S,E,D)$ is said to be
  CPA-secure if there exists a negligible function\footnote{A function
    $\varepsilon(n)$ is \emph{negligible} if for every polynomial
    function $\mathrm{poly}(n)$, there exists an $N$ such that for all
    integers $n \geq N$, it holds that
    $\varepsilon(n)\leq 1/\mathrm{poly}(n)$~\cite{intro}.}
  $\varepsilon(n)$ such that $F_{\mathrm{CPA}}(n)\leq \varepsilon(n)$
  for all PPT attackers.
\end{define}
Note that many practical public-key cryptographic schemes, such as
ElGamal~\cite{elgamal} and RSA-OAEP~\cite{rsa}, have been shown to be
CPA-secure.

For the rest of this paper, we make the following assumptions on the
cryptographic resources available to the sensors.
\begin{assump} \textbf{(Cryptographic resource)} \label{a:resource} 
  The same CPA-secure public-key cryptographic scheme $\Pi=(S,E,D)$
  with security parameter $n$ is made available to each sensor, which 
  maintains its own pair of public and private keys. The public keys
  for encrypting messages are known to all entities in the network
  while the private keys for decrypting messages are secret.
\end{assump}


\begin{assump} \textbf{(Independent Encryptions)} \label{a:crypto}
  Every use of the encryption function is conditionally independent of
  all other uses given the plaintexts and public keys. More precisely,
  let $M$ be any positive integer and $C_k={E}(R_k;\Phi_k^n)$ for
  $k=1,2,\ldots,M$. Then for any $c_k\in\mathcal{C}$,
  $r_k\in\mathcal{N}_m$, and $\phi_k\in\mathcal{E}_n$, 
\begin{align*}
  p_{C_1,\ldots,C_M\mid R_1,\ldots,R_M,\Phi_1^n,\ldots,\Phi_M^n}
  (c_1,\ldots,c_M\mid r_1,\ldots,r_M, \hspace{30pt} \\
  \phi_1^n,\ldots,\phi_M^n)
=\prod_{k=1}^{M}p_{C_k\mid R_k,\Phi_k^n}(c_k\mid r_k,\phi_k^n).
\end{align*}
\end{assump}

Furthermore, we adopt the following ``bar'' notation to simplify
discussion in later sections. For any
$\mathcal{I},\mathcal{J}\subseteq\mathcal{K}$ and a collection of
plaintexts
$\mathbf{R}_{\mathcal{I},\mathcal{J}}=[R_{k,l}(x)]_{k\in\mathcal{I},l\in\mathcal{J},x\in\mathcal{X}}$,
the corresponding ``barred'' collection is defined as ${\mathbf{\bar
    R}}_{\mathcal{I},\mathcal{J}}=[R_{k,l}(x)]_{k\in\mathcal{I},l\in\mathcal{J}\setminus
  k,x\in\mathcal{X}}$. Given a collection of public keys
$\mathbf{\Phi}_{\mathcal{J}}^n=[\Phi_l^n]_{l\in\mathcal{J}}$ with
$\Phi_l^n\in\mathcal{E}_n$, we write
${\mathbf{\bar C}}_{\mathcal{I},\mathcal{J}}= E(\mathbf{\bar
  R}_{\mathcal{I},\mathcal{J}};\mathbf{\Phi}_{\mathcal{J}}^n)$ as a
shorthand for the operation of computing the ciphertext
$C_{k,l}(x)=E(R_{k,l}(x);\Phi_{l}^n)$ for each $k\in\mathcal{I}$,
$l\in\mathcal{J}\setminus k$, $x\in\mathcal{X}$, and outputting the
whole ciphertext collection
$\mathbf{\bar
  C}_{\mathcal{I},\mathcal{J}}=[C_{k,l}(x)]_{k\in\mathcal{I},l\in\mathcal{J}\setminus
  k,x\in\mathcal{X}}$. Similarly, given a collection of private keys
$\mathbf{\Psi}_{\mathcal{J}}^n=[\Psi_l^n]_{l\in\mathcal{J}}$ with
$\Psi_l^n\in\mathcal{D}_n$, we write
${\mathbf{\bar R}}_{\mathcal{I},\mathcal{J}}=
D(\bar{\mathbf{C}}_{\mathcal{I},\mathcal{J}};\mathbf{\Psi}^n_{\mathcal{J}})$
as a shorthand for the operation of computing the plaintext
$R_{k,l}(x)=D(C_{k,l}(x);\Psi_{l}^n)$ for each $k\in\mathcal{I}$,
$l\in\mathcal{J}\setminus k$, $x\in\mathcal{X}$, and outputting the
whole plaintext collection
${\mathbf{\bar
    R}}_{\mathcal{I},\mathcal{J}}=[R_{k,l}(x)]_{k\in\mathcal{I},l\in\mathcal{J}\setminus
  k,x\in\mathcal{X}}$. It is obvious that if
$\mathcal{I}\cap\mathcal{J}=\emptyset$, then
$\mathbf{R}_{\mathcal{I},\mathcal{J}}={\mathbf{\bar
    R}}_{\mathcal{I},\mathcal{J}}$ and
$\mathbf{C}_{\mathcal{I},\mathcal{J}}={\mathbf{\bar
    C}}_{\mathcal{I},\mathcal{J}}$.

\section{Event Detection} \label{sec:detection} 

In this section, we introduce the formulation of the generalized
$k$-sample problem, propose a test that facilitates privacy
protection, and show that the proposed test can achieve good detection
performance.

\subsection{Problem Formulation} \label{sec:formulation}
As mentioned before, we consider a network of $K$ sensors together
with a fusion center that aggregates information from the sensors to
perform detection of a target event. The network size $K$ is assumed
to be fixed and known to all entities in the sensor
network. Communications between the fusion center and the sensors are
assumed public. All messages sent by any entity are observable by all
entities within and outside of the network. Moreover, all entities
agree on a positive number $N>K$, a large enough integer $m$, and thus
the resulting fixed-point domain $\mathcal{N}_m$ beforehand.

Let $\mathbf{X}_{k}$ be a $t$-length measurement vector made by the
$k$th sensor, for $k\in\mathcal{K} = \{1,2,\ldots,K\}$. The elements
of $\mathbf{X}_{k}$ are i.i.d. according to the marginal distribution
$p_{\theta,k}$ over the common finite alphabet $\mathcal{X}$. The
parameter $\theta\in\{0,1\}$ represents the system state indicating
whether the target event happens ($\theta=1$) or not ($\theta=0$). The
distributions $p_{0,k}$ and $p_{1,k}$ may contain \emph{private}
information about the $k$th sensor. For convenience, we write the
distributions as vectors:
$\mathbf{p}_{0,k}=[p_{0,k}(x)]_{x\in\mathcal{X}}$ and
$\mathbf{p}_{1,k}=[p_{1,k}(x)]_{x\in\mathcal{X}}$, and consider the
joint distributions
$\mathbf{p}_{0,\mathcal{K}} \in \mathcal{P}(\mathcal{X}^K)$ and
$\mathbf{p}_{1,\mathcal{K}} \in \mathcal{P}(\mathcal{X}^K)$, whose
marginals are respectively given by
$[\mathbf{p}_{0,k}]_{k\in\mathcal{K}}$ and
$[\mathbf{p}_{1,k}]_{k\in\mathcal{K}}$.  We assume that neither
$\mathbf{p}_{0,\mathcal{K}}$ nor $\mathbf{p}_{1,\mathcal{K}}$ is
known. However, it is known that they satisfy the condition
\begin{equation} \label{equ:restriction}
d\left(\mathbf{p}_{0,\mathcal{K}} \right)\leq
d_{0}< d_{1}\leq
d \left(\mathbf{p}_{1,\mathcal{K}} \right)
\end{equation}
for some $0 \leq d_{0} < d_{1}$, where $d(\cdot)$ is a
diameter measure satisfying the conditions in Defintion~\ref{hd}.

The objective of the fusion center is to make a decision on the system
state $\theta$ based on the whole set of sensor measurements
$\mathbf{X}_{\mathcal{K}}\in \mathcal{X}^{Kt}$. In this section, we
temporarily ignore any privacy concern and assume that any necessary
statistics (e.g., $\mathbf{X}_{\mathcal{K}}$) for decision are made
available to the fusion center. In Section~\ref{sec:protocol}, we will
present a protocol to protect privacy specifically for the application
of the following binary hypothesis test at the fusion center to make a
decision on $\theta$: The $k$th sensor calculates the type
$\tilde{Q}_k = \tilde{q}_{\mathbf{X}_k}$ from its measurement sequence
$\mathbf{X}_k$, and sends $\tilde{Q}_{k}$ to the fusion center. The
fusion center collects the whole set of sensor types
$\mathbf{\tilde{Q}}_{\mathcal{K}} \in
\tilde{\mathcal{Q}}_t^K(\mathcal{X})$ from the $K$ sensors, calculate
the diameter of $\mathbf{\tilde{Q}}_{\mathcal{K}}$, and then decides
\begin{equation}\label{equ:hypo}
\begin{aligned}
H_{0} &: \theta=0\quad \quad \text{if} \
          d({\mathbf{\tilde Q}}_{\mathcal{K}}) < \gamma \\
H_{1} &:  \theta=1\quad \quad \text{if} \
d({\mathbf{\tilde Q}}_{\mathcal{K}}) \geq \gamma
\end{aligned}
\end{equation}
where $\gamma \geq 0$ is a detection threshold. Note that the decision
statistic $d({\mathbf{\tilde Q}}_{\mathcal{K}})$ employed in the test
above depends only on the marginal types
$[\tilde{Q}_k]_{k \in \mathcal{K}}$, each of which can be calculated
at the corresponding sensor based on its own measurement vector.

\subsection{Error Exponents} \label{sec:errexp} 

In this section, we analyze the detection performance of the binary
test~\eqref{equ:hypo}.  To that end, define the following two sets of
joint distributions:
\begin{align*}
\mathcal{P}_{0,\mathcal{K}} &=\{\mathbf{p}_{\mathcal{K}}
\in\mathcal{P}(\mathcal{X}^K):d(\mathbf{p}_{\mathcal{K}})\leq d_{0}\}\\
\mathcal{P}_{1, \mathcal{K}} &=\{\mathbf{p}_{\mathcal{K}}
\in\mathcal{P}(\mathcal{X}^K):d(\mathbf{p}_{\mathcal{K}})\geq d_{1}\}.
\end{align*}
For a binary hypothesis test with acceptance region
$\mathcal{R}_t\subseteq \mathcal{X}^{Kt}$, 
we define the worst-case error probability of the first type as
\begin{equation}\label{equ:mu}
\mu_{t}=\max_{\mathbf{p}_{0,\mathcal{K}}\in\mathcal{P}_{0,
    \mathcal{K}}}
\mathbf{p}_{0,\mathcal{K}}(\mathcal{R}_t^c)
\end{equation}
and the worst-case error probability of the second type as
\begin{equation}\label{equ:lambda}
\lambda_{t}=\max_{\mathbf{p}_{1,\mathcal{K}}\in\mathcal{P}_{1,
    \mathcal{K}}}
\mathbf{p}_{1,\mathcal{K}}(\mathcal{R}_t).
\end{equation}
Based on these error probabilities, we define an achievable error
exponent pair as follows:
\begin{define} 
  A non-negative error exponent pair $(\alpha,\beta)$ is said to be
  \emph{achievable} if there is a sequence of acceptance regions
such that
\begin{align}
\label{equ:alpha}
\liminf_{t\rightarrow \infty}-\frac{1}{t}\log_2 \mu_{t} &\geq \alpha
\\
\label{equ:beta}
\liminf_{t\rightarrow \infty}-\frac{1}{t}\log _2\lambda_{t} &\geq \beta.
\end{align}
A non-negative error exponent of the first type $\alpha$ is said to be
achievable if there is a sequence of acceptance regions such
that~\eqref{equ:alpha} is satisfied and
$\lim_{t\rightarrow \infty} \lambda_{t} =0$.
\end{define}
Clearly, if $(\alpha,\beta)$ is achievable and $\beta>0$, then
$\alpha$ is achievable. Thus, $(\alpha,\beta)$ being an achievable
error exponent pair is a stronger condition.

For $\mathbf{p}_{\mathcal{K}} \in \mathcal{P}(\mathcal{X}^K)$, define
\begin{align*} 
\Delta_0(\mathbf{p}_{\mathcal{K}} ) &=
  \min_{\mathbf{p}_{0,\mathcal{K}}\in\mathcal{P}_{0,\mathcal{K}}}
  D(\mathbf{p}_{\mathcal{K}} \| \mathbf{p}_{0,\mathcal{K}}) \\
\Delta_1(\mathbf{p}_{\mathcal{K}} ) &=
  \min_{\mathbf{p}_{1,\mathcal{K}}\in\mathcal{P}_{1,\mathcal{K}}}
  D(\mathbf{p}_{\mathcal{K}} \| \mathbf{p}_{1,\mathcal{K}})
\end{align*} 
where $D(\cdot\|\cdot)$ is the Kullback-Leibler (KL) divergence.  For
$\gamma \geq 0$, define the function\footnote{By convention, we set
  the minimum (or infimum) value over an empty set to be $\infty$.}
\begin{equation*}
\alpha_*(\gamma) =
\min_{\mathbf{p}_{\mathcal{K}} \in \mathcal{P}(\mathcal{X}^K): d(\mathbf{p}_{\mathcal{K}}) \geq \gamma}
\Delta_0(\mathbf{p}_{\mathcal{K}}).
\end{equation*}
Further, for $\alpha \geq 0$, define
\begin{align*}
\gamma_*(\alpha) 
&= 
\inf\{ \gamma \geq 0: \alpha_*(\gamma) \geq \alpha\} ,
\\
\beta_*(\alpha) 
&=
\inf_{\mathbf{p}_{\mathcal{K}}\in \mathcal{P}(\mathcal{X}^K): d(\mathbf{p}_{\mathcal{K}}) < \gamma_*(\alpha)}
\Delta_1(\mathbf{p}_{\mathcal{K}}),
\\
\beta^*(\alpha) 
&=
\inf_{\mathbf{p}_{\mathcal{K}}\in \mathcal{P}(\mathcal{X}^K): \Delta_0(\mathbf{p}_{\mathcal{K}}) < \alpha}
\Delta_1(\mathbf{p}_{\mathcal{K}}).
\end{align*}

The formulation presented above is a specialization of the compositie
hypothesis testing framework to the generalized $K$-sample problem
using a diameter measure to characterize the degree of similarity of
the sensors' marginal types. The restriction imposed
by~\eqref{equ:restriction} allows us to consider non-trivial
worst-case type-I and type-II error exponents in~\eqref{equ:alpha}
and~\eqref{equ:beta}. That in turn allows us to describe the optimal
detection performance as the boundary of the achievable region of
error exponent pairs.  More importantly, all these conveniences lead
us to the following theorem which shows that the proposed
test~\eqref{equ:hypo} gives good detection performance. Note that this
result is difficult to obtain directly using the large deviation
analysis on the general compositie hypothesis testing formulation
in~\cite{Zeitouni1991}.
\begin{theorem} \label{thm:errexp} 
  Suppose $0 \leq d_{0} < d_{1}$. Then
\begin{enumerate}[label=(\roman*)]
\item $\beta_*(\alpha) \leq \beta^*(\alpha)$,
\item $\beta^*(\alpha) > 0$ if and only if $\beta_*(\alpha) > 0$,
\item $\beta^*(\alpha) = \sup\{ \beta: (\alpha,\beta) \text{~is
  achievable}\}$,
\item $\alpha_*(d_1) = \sup\{ \alpha: \alpha \text{~is
  achievable}\}$, and
\item the test~\eqref{equ:hypo} achieves the error exponent pair
  $(\alpha, \beta_*(\alpha))$ and the optimal error exponent
  $\alpha_*(d_1)$.
\end{enumerate}
\end{theorem}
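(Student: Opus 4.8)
The plan is to run the whole analysis through the method of types~\cite{info}, splitting the five claims into an achievability half (handled by two explicit tests) and a converse half (handled by a Stein-type bound and a type-counting bound). The workhorse is the finite-type estimate that, for $t$ i.i.d. snapshots drawn from a joint law $\mathbf{p}_{\mathcal{K}}\in\mathcal{P}(\mathcal{X}^K)$, the joint type class $\mathcal{T}(\tilde q)$ has probability between $(t+1)^{-|\mathcal{X}|^K}2^{-tD(\tilde q\|\mathbf{p}_{\mathcal{K}})}$ and $2^{-tD(\tilde q\|\mathbf{p}_{\mathcal{K}})}$, with only polynomially many joint types. First I would fix a threshold $\gamma$ and compute the two worst-case exponents of the test~\eqref{equ:hypo}, whose acceptance region $\{\mathbf{x}_{\mathcal{K}}:d(\tilde q_{\mathbf{x}_{\mathcal{K}}})<\gamma\}$ is a union of joint type classes since $d$ sees a joint type only through its marginals. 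Summing over type classes and taking the worst case over $\mathcal{P}_{0,\mathcal{K}}$ and $\mathcal{P}_{1,\mathcal{K}}$ gives $\mu_t\le(t+1)^{|\mathcal{X}|^K}2^{-t\min_{\tilde q:d(\tilde q)\ge\gamma}\Delta_0(\tilde q)}$ and $\lambda_t\le(t+1)^{|\mathcal{X}|^K}2^{-t\min_{\tilde q:d(\tilde q)<\gamma}\Delta_1(\tilde q)}$ (the second bound is uniform in $\mathbf{p}_{1,\mathcal{K}}$ because $\Delta_1(\tilde q)\le D(\tilde q\|\mathbf{p}_{1,\mathcal{K}})$). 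Letting $t\to\infty$, density of types and continuity of $d$ (Definition~\ref{hd}), $\Delta_0$ and $\Delta_1$ make these exponent lower bounds converge to $\alpha_*(\gamma)$ and $\inf_{d(\mathbf{p}_{\mathcal{K}})<\gamma}\Delta_1(\mathbf{p}_{\mathcal{K}})$ respectively.

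To get the achievability in (v) I would take $\gamma=\gamma_*(\alpha)$; monotonicity of $\alpha_*$ (its feasible set $\{d\ge\gamma\}$ shrinks as $\gamma$ grows) and its left-continuity (proved by a compactness argument on the nested closed sets $\{d\ge\gamma\}$) yield type-I exponent $\ge\alpha$ and type-II exponent $\beta_*(\alpha)$. For the optimal type-I exponent I would instead let $\gamma\uparrow d_1$: every $\gamma<d_1$ keeps $\lambda_t\to0$ (under any $\mathbf{p}_{1,\mathcal{K}}$ the marginal diameter of the type concentrates at $d(\mathbf{p}_{1,\mathcal{K}})\ge d_1>\gamma$) while giving type-I exponent $\alpha_*(\gamma)\uparrow\alpha_*(d_1)$, so $\sup\{\alpha:\alpha\text{ achievable}\}\ge\alpha_*(d_1)$, the achievability half of (iv); this is the punchline that the marginal types alone suffice for the optimal type-I exponent. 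The matching converse for (iv) is a Chernoff--Stein argument: choose $\mathbf{p}_{0,\mathcal{K}}^{\ast},\mathbf{p}_{1,\mathcal{K}}^{\ast}$ attaining $\alpha_*(d_1)=D(\mathbf{p}_{1,\mathcal{K}}^{\ast}\|\mathbf{p}_{0,\mathcal{K}}^{\ast})$ (possible by compactness); for any region with $\lambda_t\to0$ we have $\mathbf{p}_{1,\mathcal{K}}^{\ast}(\mathcal{R}_t)\to0$, and Stein's lemma bounds the exponent of $\mathbf{p}_{0,\mathcal{K}}^{\ast}(\mathcal{R}_t^{c})\le\mu_t$ by $D(\mathbf{p}_{1,\mathcal{K}}^{\ast}\|\mathbf{p}_{0,\mathcal{K}}^{\ast})=\alpha_*(d_1)$, hence $\alpha\le\alpha_*(d_1)$.

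For (iii) I would prove achievability with the joint-type Hoeffding test having acceptance region $\{\Delta_0(\tilde q)<\alpha\}$: because $D(\tilde q\|\mathbf{p}_{0,\mathcal{K}})\ge\Delta_0(\tilde q)$ for every $\mathbf{p}_{0,\mathcal{K}}\in\mathcal{P}_{0,\mathcal{K}}$, its type-I exponent is at least $\alpha$, while its worst-case type-II exponent converges to $\inf_{\Delta_0(\mathbf{p}_{\mathcal{K}})<\alpha}\Delta_1(\mathbf{p}_{\mathcal{K}})=\beta^*(\alpha)$. For the converse, given any $(\alpha,\beta)$-achieving sequence and any $\mathbf{p}_{\mathcal{K}}$ with $\Delta_0(\mathbf{p}_{\mathcal{K}})<\alpha$, pick $\mathbf{p}_{0,\mathcal{K}}$ with $D(\mathbf{p}_{\mathcal{K}}\|\mathbf{p}_{0,\mathcal{K}})<\alpha$; since $\mathbf{p}_{0,\mathcal{K}}(\mathcal{R}_t^{c})\le\mu_t\le2^{-t(\alpha-o(1))}$ and $\mathbf{p}_{0,\mathcal{K}}$ is constant on each type class, the fraction of the type class of $\mathbf{p}_{\mathcal{K}}$ lying in $\mathcal{R}_t^{c}$ is at most a polynomial times $2^{-t(\alpha-D(\mathbf{p}_{\mathcal{K}}\|\mathbf{p}_{0,\mathcal{K}}))}\to0$. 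Thus almost all of that type class lies in $\mathcal{R}_t$, and evaluating under the minimizing $\mathbf{p}_{1,\mathcal{K}}$ for $\Delta_1(\mathbf{p}_{\mathcal{K}})$ gives $\lambda_t\ge(1-o(1))(t+1)^{-|\mathcal{X}|^K}2^{-t\Delta_1(\mathbf{p}_{\mathcal{K}})}$, so $\beta\le\Delta_1(\mathbf{p}_{\mathcal{K}})$; the infimum over such $\mathbf{p}_{\mathcal{K}}$ yields $\beta\le\beta^*(\alpha)$.

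Finally, (i) is immediate: $(\alpha,\beta_*(\alpha))$ is achievable by (v) while $\beta^*(\alpha)=\sup\{\beta:(\alpha,\beta)\text{ achievable}\}$ by (iii), so $\beta_*(\alpha)\le\beta^*(\alpha)$. For (ii) I would show both $\beta_*(\alpha)>0$ and $\beta^*(\alpha)>0$ are equivalent to the single condition $\alpha_*(d_1)>\alpha$: compactness and continuity of $\Delta_1$ give $\beta_*(\alpha)>0$ iff the compact set $\{d\le\gamma_*(\alpha)\}$ misses $\mathcal{P}_{1,\mathcal{K}}$, i.e.\ iff $\gamma_*(\alpha)<d_1$, which by left-continuity of $\alpha_*$ is exactly $\alpha_*(d_1)>\alpha$; the same argument on $\{\Delta_0\le\alpha\}$ gives $\beta^*(\alpha)>0$ iff $\min_{\mathbf{p}_{\mathcal{K}}\in\mathcal{P}_{1,\mathcal{K}}}\Delta_0(\mathbf{p}_{\mathcal{K}})=\alpha_*(d_1)>\alpha$. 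I expect the main obstacle to be the converse type-counting step of (iii)---making the claim that almost all of a type class falls in $\mathcal{R}_t$ uniform and rigorous against an arbitrary region---together with the boundary behaviour of $\alpha_*$ and $\gamma_*$: matching the proposed test's exponents to the exact target pair at a jump of $\alpha_*$, and resolving the equalities in (ii), require the left-continuity lemma above and a diagonal choice of thresholds $\gamma_t\downarrow\gamma_*(\alpha)$ (respectively $\gamma_t\uparrow d_1$).
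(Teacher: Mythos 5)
Your overall machinery is essentially the paper's: Sanov/type-counting upper bounds for both tests, the Hoeffding test with acceptance region $\{\Delta_0(\tilde q)<\alpha\}$ for the achievability half of (iii), the type-class counting argument for the converse of (iii), and a Stein-type converse for (iv). Two organizational choices differ, both legitimate: you package the (iv) converse as an invocation of Stein's lemma at a minimizing pair $(\mathbf{p}^*_{0,\mathcal{K}},\mathbf{p}^*_{1,\mathcal{K}})$, where the paper re-derives the same bound by hand over all pairs and then minimizes; and you obtain (i) operationally from (iii) and (v), whereas the paper proves (i) analytically from the inclusion $\{\Delta_0<\alpha\}\subseteq\{d<\gamma_*(\alpha)\}$ and then uses (i) inside (ii) and (iv). The paper's ordering is not cosmetic, though, because of the issue below.

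The genuine gap is in (v), and since your (i) leans on (v), it propagates. With the fixed threshold $\gamma=\gamma_*(\alpha)$ the type-II side is fine (the exponent is $\geq\inf_{d<\gamma_*(\alpha)}\Delta_1=\beta_*(\alpha)$ with no continuity needed), but the type-I exponent you get is $\alpha_*(\gamma_*(\alpha))$, so you must show $\alpha_*(\gamma_*(\alpha))\geq\alpha$. Left-continuity of $\alpha_*$ --- which is what your nested-compact-sets argument proves --- gives exactly the opposite inequality: since $\alpha_*(\gamma)<\alpha$ for every $\gamma<\gamma_*(\alpha)$ by definition of the infimum, left-continuity yields $\alpha_*(\gamma_*(\alpha))=\lim_{\gamma\uparrow\gamma_*(\alpha)}\alpha_*(\gamma)\leq\alpha$. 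What is needed is \emph{right}-continuity (this is precisely what the paper invokes), i.e., that $\inf_{\{d>\gamma\}}\Delta_0=\min_{\{d\geq\gamma\}}\Delta_0$; that does not follow from compactness of nested decreasing sets, because as $\gamma_n\downarrow\gamma$ the sets $\{d\geq\gamma_n\}$ \emph{increase} to $\{d>\gamma\}$. Your fallback of diagonal thresholds $\gamma_t\downarrow\gamma_*(\alpha)$ repairs the type-I side (monotonicity alone gives $\alpha_*(\gamma_t)\geq\alpha$ for $\gamma_t>\gamma_*(\alpha)$) but sacrifices the type-II side: the exponent it yields is only $\lim_t\inf_{\{d<\gamma_t\}}\Delta_1=\inf_{\{d\leq\gamma_*(\alpha)\}}\Delta_1$, which for a general diameter measure satisfying only Definition~\ref{hd} can be strictly smaller than $\beta_*(\alpha)=\inf_{\{d<\gamma_*(\alpha)\}}\Delta_1$ (e.g., if $d$ has a local minimum at level exactly $\gamma_*(\alpha)$, so that part of the level set $\{d=\gamma_*(\alpha)\}$ is not in the closure of $\{d<\gamma_*(\alpha)\}$). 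So neither device establishes that the test~\eqref{equ:hypo} achieves the pair $(\alpha,\beta_*(\alpha))$; you need the right-continuity property, as in the paper. A cousin of the same left/right boundary issue affects your proof of (ii) at the single value $\alpha=\alpha_*(d_1)$, where your two claimed equivalences to ``$\alpha_*(d_1)>\alpha$'' need not both hold; the paper sidesteps this by showing directly that $\beta^*(\alpha)>0$ forces $\gamma_*(\alpha)\leq d_1-\eta$ for some $\eta>0$, hence $\beta_*(\alpha)>0$.
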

\begin{proof}
  The proof of the theorem is given in Appendix~\ref{app:EXPproof}.
\end{proof}
As shown in the proof of part (iii) in Appendix~\ref{app:EXPproof},
The Hoeffding test~\cite{Hoeffding1965} using the decision statistic
$\Delta_0({\mathbf{\tilde Q}}_{\mathcal{K}})$, where
${\mathbf{\tilde Q}}_{\mathcal{K}} \in
\tilde{\mathcal{Q}}_t(\mathcal{X}^K)$ is the joint type of all sensor
measurements $\mathbf{X}_{\mathcal{K}}$ (see~\eqref{equ:besthypo}),
can achieve the best error exponent pair $(\alpha, \beta^*(\alpha))$.
However, $\mathbf{X}_{\mathcal{K}}$ must be made available at the
fusion center in order to calculate
$\Delta_0({\mathbf{\tilde Q}}_{\mathcal{K}})$. This in turn makes
protecting private information of the individual sensors much more
difficult.

The test~\eqref{equ:hypo} generally does not achieve the best error
exponent pair $(\alpha, \beta^*(\alpha))$, even when the joint
distributions in the sets $\mathcal{P}_{0,\mathcal{K}}$ and
$\mathcal{P}_{1,\mathcal{K}}$ are restricted to products of marginals
(i.e., the observations are independent across the sensors). To see
that, consider the simple case where $K=2$, $d(\cdot)$ is the
Hellinger diameter, $d_0=0$, $d_{\max}=2$, and both $X_1$ and $X_2$
are independent binary random variables with $p_{X_1}(1) = q_1$ and
$p_{X_2}(1) = q_2$. The joint distribution $\mathbf{p}_{\mathcal{K}}$
is parameterized by $(q_1,q_2)$. In this case,
$d(\mathbf{p}_{\mathcal{K}}) = d(q_1,q_2) = 2\left(1-\sqrt{q_1q_2} -
  \sqrt{(1-q_1)(1-q_2)}\right)$,
$\Delta_0(\mathbf{p}_{\mathcal{K}}) = \Delta_0(q_1,q_2) =
2H_2\left(\frac{q_1+q_2}{2}\right) -H_2(q_1)-H_2(q_2)$, and
$\alpha_*(\gamma) = 2H_2\left(
  \frac{\gamma}{2}(1-\frac{\gamma}{4})\right) - H_2\left(
  \gamma(1-\frac{\gamma}{4})\right)$, where $H_2(\cdot)$ is the binary
entropy function. For any $d_1>0$, brute-force searching for the
minimum values of
$\Delta_1(\mathbf{p}_{\mathcal{K}})=\Delta_1(q_1,q_2)$ over the
respective boundaries $\Delta_0(\mathbf{p}_{\mathcal{K}}) = \alpha$
and $d(\mathbf{p}_{\mathcal{K}}) = \gamma_*(\alpha)$ numerically
calculates $\beta^*(\alpha)$ and $\beta_*(\alpha)$ for
$0<\alpha \leq \alpha_*(d_1)$.  This calculation reveals that
$\beta^*(\alpha)>\beta_*(\alpha)$ for $0<\alpha < \alpha_*(d_1)$.

While suboptimal in the stronger sense of achieving
$(\alpha, \beta^*(\alpha))$, Theorem~\ref{thm:errexp}(ii) ensures that
the test~\eqref{equ:hypo} is able to achieve a positive error exponent
pair whenever the Hoeffding test can. In addition,
Theorem~\ref{thm:errexp}(v) also asserts that the
test~\eqref{equ:hypo} is optimal in the weaker sense that it can
achieve the best error exponent of the first kind. The fact that the
test~\eqref{equ:hypo} using only marginal types is sufficient to
achieve the weaker optimality can be regarded as an inherent property
of the generalized $K$-sample problem as the result holds for any
diameter measure. The main advantage of using the
test~\eqref{equ:hypo} is that a simple privacy-preserving protocol can
be developed to support calculating the decision statistic at the
fusion center when the Hellinger diameter is used in the test.  The
details of the protocol will be discussed in
Section~\ref{sec:protocol}. In summary, Theorem~\ref{thm:errexp}
implies that the additional requirement of privacy protection does not
fundamentally require any tradeoff in the weaker optimal detection
performance in the generalized $K$-sample problem.

\section{Privacy Preserving Protocol} \label{sec:protocol} 

Henceforth, we consider the use of the Hellinger diameter~\eqref{equ:d} in the test~\eqref{equ:hypo} for privacy protection.  To
perform the test~\eqref{equ:hypo}, the sensors must send their
respective types to the fusion center in the form of messages over the
sensor networks. As discussed before, the measurement distributions of
each sensor may contain private information about that sensor. Our
privacy goal is to protect this private information from adversaries
both internal and external to the sensor network. The type
$\tilde{Q}_k$ that the $k$th sensor sends to the fusion center in the
test~\eqref{equ:hypo} is an estimate of the measurement distribution
$p_{0,k}$ or $p_{1,k}$ of the sensor. Thus, we must protect the types
$\tilde{Q}_{k}$ from any adversaries. That means no entities other
than the $k$th sensor should have access to $\tilde{Q}_{k}$. We will
provide a more precise and quantitative specification of this notion of
privacy protection later in Section~\ref{sec:analysis}.  In this
section, we specify the privacy threat model and describe a simple
protocol based on public key cryptography to protect
$\mathbf{\tilde{Q}}_{\mathcal{K}}$ from any adversary under the threat
model with no loss in weak optimal detection performance as discussed
in Section~\ref{sec:errexp}.

\subsection{Threat Model}
\label{sec:threat_model}

We consider a threat model in which potential adversaries may be an
outside eavesdropper, the fusion center, and/or a subset of the $K$
sensors.  We restrict these adversaries to be ``honest but curious.''
That means any adversary, while attempting to obtain information about
the measurement distributions of the sensors, will not act in any way
that may disrupt proper execution of the hypothesis
test~\eqref{equ:hypo} by the fusion center. For example, no adversary
may inject messages containing false information (or no information)
about the set of types $\mathbf{\tilde{Q}}_{\mathcal{K}}$ that may
cause the test~\eqref{equ:hypo} to fail.

As discussed in Section~\ref{sec:formulation}, we assume all messages
passed between the fusion center and the sensors are available to all
entities under this thread model. No raw measurements, i.e.,
$\mathbf{X}_{\mathcal{K}}$, are sent to the fusion center, which
performs the test~\eqref{equ:hypo} based solely on the messages
that it receives from the sensors. In addition to observing the
messages in the network, an adversarial sensor obviously has access to
its own measurements. 

We allow the adversaries to collude in that they may share all network
messages and sensor measurements among themselves. In this sense, it
is more convenient to consider all colluding adversaries as a single
adversarial entity (\emph{the attacker}) that has access to all the
messages and sensor measurements available to the set. For the rest of
the paper, we will denote the set of adversarial sensors by the index
subset $\mathcal{L} \subseteq \mathcal{K}$. Hence, the attacker has
access to all network messages and the measurement collection
$\mathbf{X}_{\mathcal{L}}$. Also, we assume that $\mathcal{L}$ is
known to the attacker but not to any nonadversarial sensors in
$\mathcal{K} \setminus \mathcal{L}$. We will describe the exact
contents of the network messages and a more precise model of how the
attacker may act in Section~\ref{sec:analysis} after the detailed
protocol steps are laid out below.

\subsection{Privacy-preserving
  Protocol} \label{sec:protocol_description}

The basic idea of the proposed privacy-preserving protocol is to let
the sensors use secret random numbers in $\mathcal{N}_m$ to obfuscate
the messages that report their observed types to the fusion center. To
facilitate the obfuscation operation, the type information is also
quantized to $\mathcal{N}_m$. The modulo-sum of the random numbers is
zero, and hence the obfuscation cancels when the fusion center
combines the messages to perform the hypothesis
test~\eqref{equ:hypo}. Since all network messages are public, the
secret random numbers need to be protected from the attacker via
public key cryptography.

There are three phases in the proposed protocol. In the first phase,
each sensor generates its key pair, and sends the public key to the
other sensors.  In the second phase, each sensor generates a set of
random numbers and encrypt them into ciphertexts, which are then sent
to other sensors in the network. Each sensor then decrypts the
ciphertexts to recover the secret random numbers designated to it.  In
the last phase, each sensor uses the set of secret random numbers
obtained in the first phase to obfuscate its observed type, and then
sends the obfuscated messages to the fusion center. The fusion center
employs the whole collection of messages received from all the sensors
to calculate the decision statistic to perform the hypothesis
test~\eqref{equ:hypo}.  The pseudo code shown in
Algorithm~\ref{alg:pride} summarizes the following detailed steps in
the three phases of the proposed protocol:
\begin{algorithm}
\caption{Privacy-preserving protocol} \label{alg:pride}
\begin{algorithmic}[1]
\Require The public-key cryptographic scheme $\Pi=(S,E,D)$ and the
  set of sensor measurements types $\mathbf{X}_{\mathcal{K}}$

\Ensure The decision statistic $\tilde{d}(\mathbf{G}_{\mathcal{K}})$
  required for performing the hypothesis test~\eqref{equ:hypo} at
  the fusion center

\hspace*{-40pt} \emph{Phase 1)}
\For {each $k \in\mathcal{K}$}
\State \hspace*{-10pt} The $k$th sensor:
\State runs $S(1^n)$ to generate the key pair $(\Phi_k^n, \Psi_k^n)$
\State sends the public key  $\Phi_k^n$ to all other sensors
\State calculates its quantized square-root type $Q_k$ from $\mathbf{X}_k$
\EndFor

\hspace*{-40pt} \emph{Phase 2)}
\For {each $x\in\mathcal{X}$}
\For {each $k \in\mathcal{K}$}
\State \hspace*{-10pt} The $k$th sensor:
\State generates $u(\mathcal{N}_m)$-i.i.d. 
    $[R_{k,l}(x)]_{l \in \mathcal{K}\setminus k}$
\State calculates $R_{k,k}(x)$ according to~\eqref{equ:Rkk}
\For {each $l\in\mathcal{K}\setminus k$}
\State encrypts $C_{k,l}(x) = E(R_{k,l}(x); \Phi_{l}^n)$
\State sends the ciphertext $C_{k,l}(x)$ to the $l$th sensor
\EndFor
\EndFor
\For {each $k \in\mathcal{K}$}
\State \hspace*{-10pt} The $k$th sensor:
\For {each $l\in\mathcal{K}\setminus k$}
\State receives $C_{l,k}(x)$  from the $l$th sensor
\State decrypts $R_{l,k}(x) = D(C_{l,k}(x);\Psi_{k}^n)$ 
\EndFor
\State  calculates $\Sigma_{\mathbf{R}_{\mathcal{K},k}}(x)$
\EndFor
\EndFor

\hspace*{-40pt} \emph{Phase 3)}
\For {each $k \in\mathcal{K}$}
\State \hspace*{-10pt} The $k$th sensor:
\For {each $x\in\mathcal{X}$}
\State calculates $G_k(x)$ according to~\eqref{equ:Gk}
\State sends  the obfuscated message $G_k(x)$ to the fusion center
\EndFor
\EndFor
\State The fusion center:
\State \hspace*{10pt}receives the whole set of obfuscated messages
$\mathbf{G}_{\mathcal{K}}$ from all $K$ sensors
\State \hspace*{10pt}calculates $\tilde{d}(\mathbf{G}_{\mathcal{K}})$
according to~\eqref{d}
\State \Return $\tilde{d}(\mathbf{G}_{\mathcal{K}})$
\end{algorithmic}
\end{algorithm}

\paragraph*{Phase 1} For each $k\in\mathcal{K}$, the $k$th sensor runs
$S(1^n)$ to generate the key pair $(\Phi_k^n, \Psi_k^n)$, sends the
public key $\Phi_k^n$ to all other sensors. Then, the $k$th sensor
calculates $\tilde{Q}_k = p_{\mathbf{X}_k}$ from the $t$-length
observation vector $\mathbf{X}_k$, and for each $x\in\mathcal{X}$, it
quantizes $\sqrt{\tilde{Q}_{k}(x)}$ to $\mathcal{N}_m$ to obtain the
quantized value $Q_{k}(x)$.

\paragraph*{Phase 2} For each $x\in\mathcal{X}$ and $k\in\mathcal{K}$,
the $k$th sensor generates a collection of
$u(\mathcal{N}_m)$-i.i.d. random numbers
$[R_{k,l}(x)]_{l \in \mathcal{K}\setminus k}$, and it
calculates
\begin{equation}\label{equ:Rkk}
  R_{k,k}(x)=\ominus\bigoplus_{l\in\mathcal{K}\setminus k} R_{k,l}(x).
\end{equation}
For each $x\in\mathcal{X}$ and $l\in\mathcal{K}\setminus k$, the $k$th sensor generates the
ciphertext $C_{k,l}(x)=E(R_{k,l}(x);\Phi_{l}^n)$
by encrypting $R_{k,l}(x)$ using the public key $\Phi_{l}^n$,
and sends the ciphertext $C_{k,l}(x)$ to the $l$th
sensor\footnote{The sole purpose of public-key encryption here is to
  make sure that no entity other than the $l$th sensor is able to
  obtain $[\mathbf{R}_{k,l}]_{k \in \mathcal{K}\setminus l}$.}. After
the above round of transmission of ciphertexts, the $k$th sensor
receives the ciphertext collection
$[C_{l,k}(x)]_{l\in\mathcal{K}\setminus k,x\in\mathcal{X}}$ from the other
sensors, and it recovers each
$R_{l,k}(x)=D(C_{l,k}(x);\Psi_{k}^n)$ by
decrypting $C_{l,k}(x)$ using its own private key
$\Psi_{k}^n$. Then, the $k$th sensor computes the secret random number
collection
$\mathbf{\Sigma}_{\mathbf{R}_{\mathcal{K},k}}$.

\paragraph*{Phase 3} For each $x\in\mathcal{X}$ and $k\in\mathcal{K}$, the $k$th sensor
constructs the obfuscated message by
\begin{equation}\label{equ:Gk}
G_{k}(x)={Q}_{k}(x)\oplus
\Sigma_{\mathbf{R}_{\mathcal{K},k}}(x).
\end{equation}
It then sends the collection $\mathbf{G}_k$ to the fusion center.
After receiving the whole set of obfuscated messages
$\mathbf{G}_{\mathcal{K}}= [\mathbf{G}_k]_{k\in\mathcal{K}}$ from all
$K$ sensors, the fusion center calculates 
\begin{equation}\label{d}
\tilde{d}(\mathbf{G}_{\mathcal{K}}) =  K^{2}- \sum_{x\in\mathcal{X}}
\left(\bigoplus_{k=1}^{K}G_{k}(x) \right)^{2},
\end{equation}
which will be used in place of the decision statistic
$d(\mathbf{\tilde Q}_{\mathcal{K}})$ in~\eqref{equ:hypo}.

In the description of the proposed protocol above, we have implicitly
assumed that all sensors, adversarial or not, faithfully follow
the protocol steps. Nevertheless, it is possible for an adversarial
sensor to behave deviantly while still satisfying the requirement in
Section~\ref{sec:threat_model} above not disrupting proper execution
of the test~\eqref{equ:hypo}, so long as~\eqref{equ:Rkk}
and~\eqref{equ:Gk} are both followed. Based on this assumption, we
establish below the ``correctness'' of the proposed protocol by
investigating the detection performance of the hypothesis
test~\eqref{equ:hypo} with $\tilde{d}(\mathbf{G}_{\mathcal{K}})$ as
the decision statistic, while a precise specification of the steps
allowed to be taken by the adversarial sensors under the threat model
described in Section~\ref{sec:threat_model} will be provided in
Section~\ref{sec:analysis}.

Choose $N > K$. From~\eqref{equ:Rkk},
\begin{equation} \label{e:sum0} \bigoplus_{k=1}^{K}
  \Sigma_{\mathbf{R}_{\mathcal{K},k}}(x)=\bigoplus_{k=1}^{K}\bigoplus^{K}_{l=1}R_{l,k}(x)=0,
\end{equation} 
for each $x\in\mathcal{X}$.
Combining this with (\ref{equ:Gk}) and (\ref{d}) gives
\begin{align}
\tilde{d}(\mathbf{G}_{\mathcal{K}})
&=K^{2}-\sum_{x\in\mathcal{X}}\left(\bigoplus_{k=1}^{K}Q_{k}(x)\right)^{2}
\notag \\
& =K^{2}-\sum_{x\in\mathcal{X}} \left(\sum_{k=1}^{K}Q_{k}(x)\right)^{2},
\label{equ:diameter}
\end{align}
where the last equality results since
$\sum_{k=1}^{K}Q_{k}(x)\leq K < N$. From~\eqref{equ:diameter}, it is
easy to see that
\begin{equation}\label{equ:dGerr}
| d(\mathbf{\tilde Q}_{\mathcal{K}}) -
\tilde{d}(\mathbf{G}_{\mathcal{K}})| \leq 2^{-m} K^2 |\mathcal{X}|.
\end{equation}
Thus, using $\tilde{d}(\mathbf{G}_{\mathcal{K}})$ instead as the
decision statistic in~\eqref{equ:hypo} is equivalent to perturbing the
decision threshold $\gamma$. Recall from Theorem~\ref{thm:errexp} that
the decision threshold parameterizes the boundary of region of all
error exponent pairs achievable by the test~\eqref{equ:hypo}. Hence,
as long as $m$ is chosen large enough so that the perturbation bound
above is small (i.e., $m=\mathcal{O}(\log_2 K^2 |\mathcal{X}|)$),
using $\tilde{d}(\mathbf{G}_{\mathcal{K}})$ will cause only a small
shift from the target error exponent pair along that boundary.



\subsection{Simulation Example} \label{sec:simulation}

In this section, we present a simulation example to demonstrate the
detection performance of the privacy-preserving event detection
protocol described in Section~\ref{sec:protocol_description}. The
application scenario considered in this example also helps to motivate
the abstract formulation of the detection problem given in
Section~\ref{sec:formulation}.

\subsubsection{Simulation Scenario}
We consider a simple crowd spectrum sensing application scenario in
which smart phones act as spectrum sensors trying to detect whether a
specific frequency band is occupied in their vicinity. Each phone uses
its radio to make received power measurements at the frequency band of
interest, calculates the quantized square-root type of the power
measurements, and sends messages to a fusion center following the
protocol in Section~\ref{sec:protocol_description}.

In the simulation, as shown in Figure~\ref{fig:setup},
\begin{figure}
\centering
\includegraphics[width=0.4\textwidth]{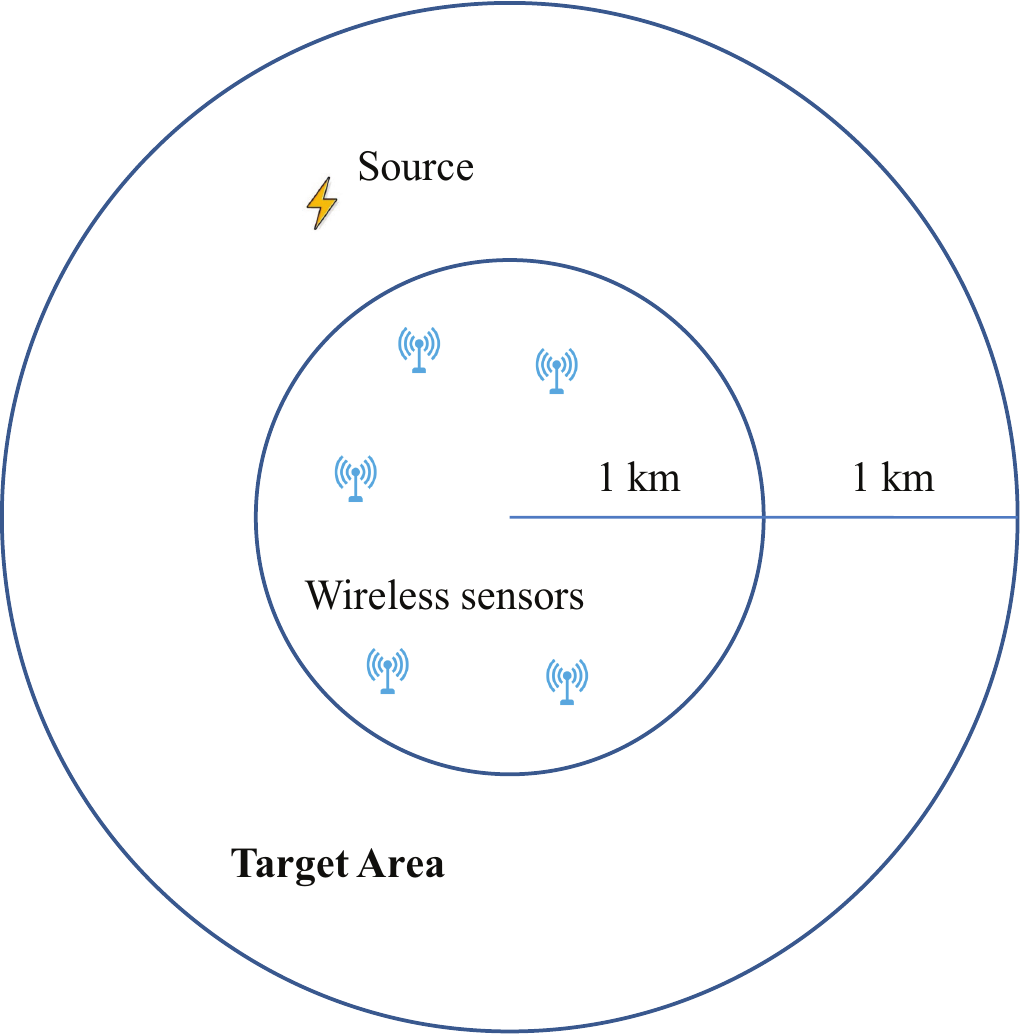}
\caption{The source and sensor regions in the crowd spectrum sensing example.}
\label{fig:setup}
\end{figure}
we consider a circular area with a radius of $2$~km, within which
there is a signal source at an unknown location that may transmit at
the frequency band of interest with an unknown power. If the source
does not transmit (i.e., the frequency band is not occupied),
$\theta=0$; otherwise, $\theta=1$. There are $K$ spectrum sensors,
uniformly distributed in a concentric circle with a radius of $1$~km,
for detecting whether the source transmits or not. The propagation
loss from the source to the sensors is modeled by the CBRS channel
model given in~\cite[pp. 12-13]{<1km} for distances less than $1$~km
and by the Hata model given in~\cite[Eqn. (A-3)]{>1km} for
distances greater than $1$~km. In both cases, the carrier frequency is
fixed at $3625$~MHz, the height of the source antenna is chosen to be
$20$~m, and the antenna height of each sensor is chosen to be $1.5$~m.

We assume that the radio receivers in the spectrum sensors suffer only
from i.i.d. thermal noise, whose effects on the received power level
is modeled by an additive Chi-square distributed component with two
degrees of freedom. The source power and noise power are set to
$25$~dBm and $-103$~dBm, respectively. The measured power in the
decibel scale at each sensor is uniformly quantized to $128$ levels in
the range from $-130$~dBm to $-60$~dBm.  We also set $m=13$.  No
information about the locations of the source and sensors, the channel
model, or the thermal noise described above is made available to the
sensors or the fusion center.  In this case, $|\mathcal{X}|=128$ and
$d_0=0$.

We note that the case of $d_0 = d_1$ is excluded
(see~\eqref{equ:restriction}) in the generalized $K$-sample problem
formulation in Section~\ref{sec:detection}. For the simulation example
described above, the case of $d_0=d_1$ corresponds to the scenario in
which all sensors have the same power measurement distribution when
the source transmits. As the
channel model assumed is isotropic, this can happen only if the
sensors either are co-located, or are at the same distance from the
signal source. With the sensors uniformly distributed in the circular
area shown in Figure~\ref{fig:setup}, it is highly unlikely that we
encounter such a contrived case. As a matter of fact, in the
simulation results shown below, we do not have a single instance of
occurrence of this contrived case in 900 different location
configurations that are randomly generated. In practice, the occurrence
of the case $d_0=d_1$ will be even rarer because of anisotropic
channel conditions, sensor movement, and other channel variations. In
all, the generalized $K$-sample problem formulation with 
$d_1 > d_0$ is a practically robust approach to tackle the crowd
spectrum sensing problem.

\subsubsection{Simulation Results}
We consider two simulation experiments. In the first experiment,
we set the number of sensors $K=8$. We select the length of the
measurement sequences $t$ from $360$ to $600$ at an increment interval
of $20$. We select $30$ groups of random sensor locations and $30$
random source locations uniformly distributed in their respective
areas. This set of random locations form $900$ different
configurations, from which we obtain the worst-case error
probabilities of the first and second types.  For each value of $t$
and each configuration of locations, we conduct the detection
simulation $7.2\times 10^6$ times. For each value of $t$, we find the
largest testing threshold $\gamma$ that makes $\lambda_t$ no more than
$5 \times 10^{-4}$, $5\times 10^{-5}$, and $5\times 10^{-6}$
respectively, and record the corresponding values of
$-\frac{1}{t} \log_2 \mu_t$. These values serve as estimates of the
error exponent of the first type. The results are plotted in
Figure~\ref{fig:exponent}. 
\begin{figure}
\centering
\includegraphics[width=0.49\textwidth]
{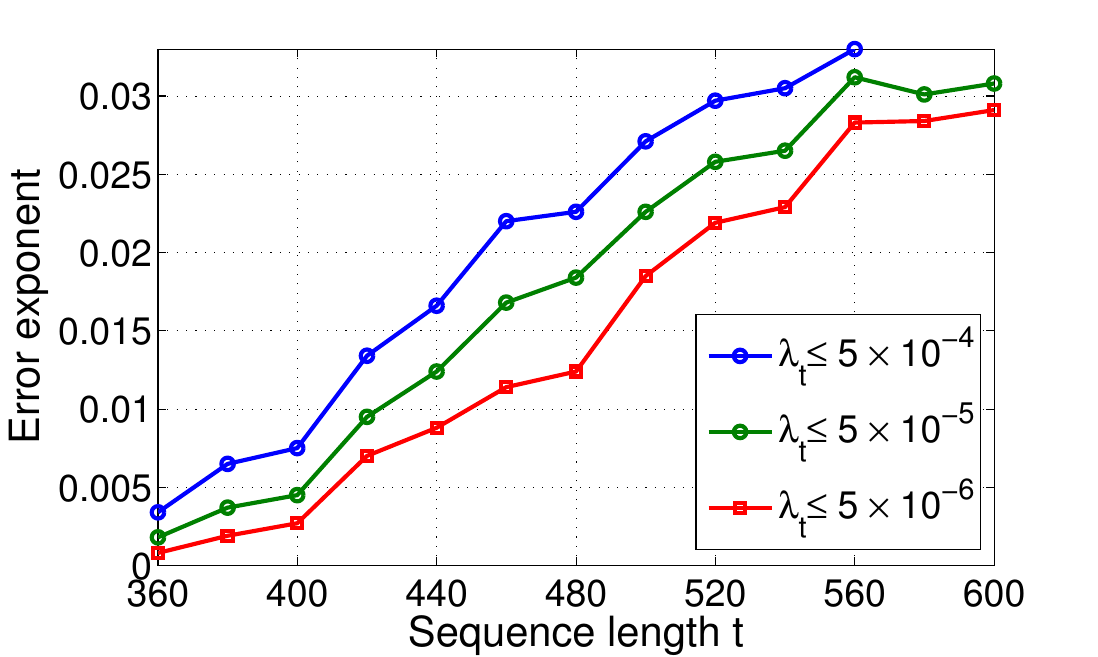}
\caption{Plots of $-\frac{1}{t} \log_2\mu_t$ vs. $t$ for different
  bounds on $\lambda_t$.}
\label{fig:exponent}
\end{figure}
It can be seen that the value of $-\frac{1}{t} \log_2 \mu_t$ increases
as the sequence length $t$ grows, and it levels off as $t$ becomes
large. The results indicate that a positive error exponent of the
first type is achieved, and thus the condition that $d_1 > d_0$ is
valid among the $900$ configurations.

In the second experiment, we fix $t=500$ and consider two different
numbers of sensors, $K=7$ and $8$. For both cases, we select
$900$ random location configurations as in the first experiment to
obtain the worst-cases error probabilities. The receiver operation
characteristic (ROC) curves for the cases of $K=7$ and $K=8$ are
plotted in Figure~\ref{fig:K}.
\begin{figure}
\centering
\includegraphics[width=0.45\textwidth]
{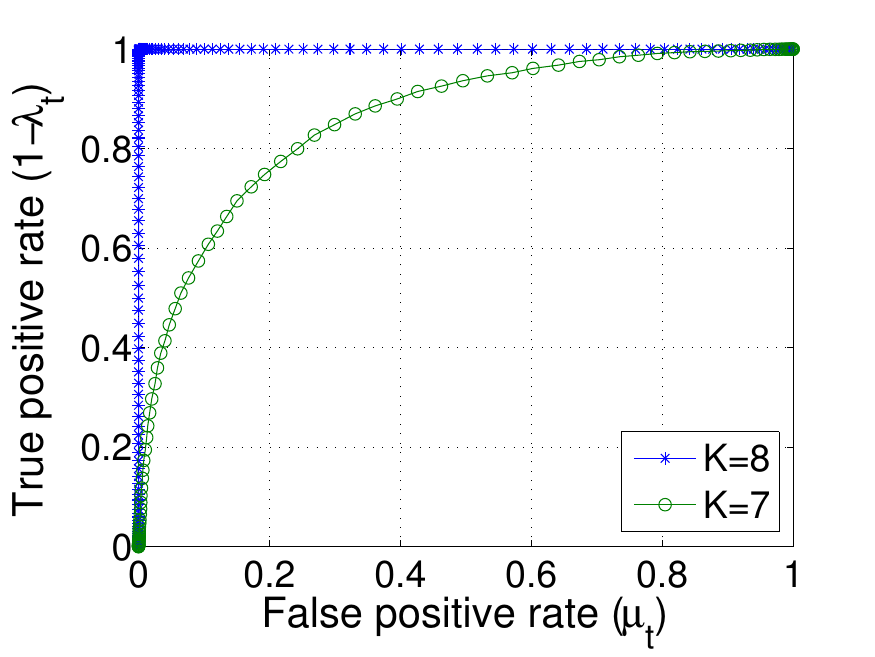}
\caption{The ROC curves for the networks with 7 and 8 sensors.}
\label{fig:K}
\end{figure}
For each case, the ROC curve is obtained from $2\times 10^5$
simulation runs. It can be seen from the figure that at $t=500$
dropping a single sensor from $K=8$ to $7$ significantly degrade the
worst-case detection performance. These results indicate that the
error exponents achieved by the test~\eqref{equ:hypo} with $K=7$
sensors, albeit may still be positive, seem to be smaller than those
achieved by the test~\eqref{equ:hypo} with $K=8$ sensors.



\section{Privacy Analysis}\label{sec:analysis}

In this section, we present a privacy-preserving performance analysis
on the protocol proposed in Section~\ref{sec:protocol} based on the
attacker-challenger formalism described in
Section~\ref{sec:crypto_assump}. The attacker $\mathscr{A}$ is the
combined entity consisting of an external eavesdropper, the fusion
center, and the set of adversarial sensors as discussed in
Section~\ref{sec:threat_model}.  Since no knowledge about how many or
the identity of the set of adversarial sensors is required for the
protocol to operate, we may set
$\mathcal{L} = \{K-L+1, K-L+2, \ldots, K\}$ without any loss of
generality in the analysis below, where $L$ denotes the number of
adversarial sensors. The challenger $\mathscr{C}$, on the other hand,
can be thought of as a fictitious entity that maintains the operation
of the proposed protocol in that it provides all the available inputs
to the attacker in accordance to the protocol. From
Section~\ref{sec:protocol}, these inputs include the public keys
$\mathbf{\Phi}^n_{\mathcal{K} \setminus \mathcal{L}}$, the ciphertexts
$\mathbf{\bar C}_{\mathcal{K}\setminus \mathcal{L},\mathcal{K}}$, and
the obfuscated messages
$\mathbf{G}_{\mathcal{K} \setminus \mathcal{L}}$ sent by the
non-adversarial sensors.

In addition to the above inputs provided by the challenger, the
attacker obviously has access to the quantized square-root types
$\mathbf{Q}_{\mathcal{L}}$ observed by the adversarial sensors as well
as the key pairs
$(\mathbf{\Phi}^n_{\mathcal{L}}, \mathbf{\Psi}^n_{\mathcal{L}})$, the
secret random numbers $\mathbf{R}_{\mathcal{L},\mathcal{K}}$, the
ciphertexts $\mathbf{\bar C}_{\mathcal{L},\mathcal{K}}$, and the
obfuscated messages $\mathbf{G}_{\mathcal{L}}$ generated by
themselves. The goal of the attacker is to produce an estimate of the
square-root types $\mathbf{Q}_{\mathcal{K}\setminus\mathcal{L}}$
observed by the non-adversarial sensors from all available information
described above. The attacker does not need to follow the exact steps in the
protocol proposed in Section~\ref{sec:protocol} as long as any
deviations must not disrupt proper execution of the test
in~\eqref{equ:hypo}. Specifically, we allow the adversarial sensors in
Phase:
\begin{enumerate} 
\item to wait until receiving the public keys
  $\mathbf{\Phi}^n_{\mathcal{K} \setminus \mathcal{L}}$ from the
  non-adversarial sensors before generating their key pairs as any
  general PPT functions of
  $\mathbf{\Phi}^n_{\mathcal{K} \setminus \mathcal{L}}$, i.e.,
  \begin{equation} \label{equ:key_fn}
  (\mathbf{\Phi}^n_{\mathcal{L}}, \mathbf{\Psi}^n_{\mathcal{L}})
  = (\mathbf{\Phi}^n_{\mathcal{L}}(\mathbf{\Phi}^n_{\mathcal{K}
    \setminus \mathcal{L}}),
  \mathbf{\Psi}^n_{\mathcal{L}}(\mathbf{\Phi}^n_{\mathcal{K} \setminus
    \mathcal{L}})),
\end{equation}

\item to wait until receiving the ciphertexts
  $\mathbf{\bar C}_{\mathcal{K}\setminus \mathcal{L},\mathcal{K}}$
  from the non-adversarial sensors and decrypting to obtain
  $\mathbf{R}_{\mathcal{K}\setminus\mathcal{L},\mathcal{L}}$ before
  generating their random numbers as any general PPT functions of the
  information they possess up to that point, i.e.,
 \begin{equation} \label{equ:RLK_fn}
  \mathbf{R}_{\mathcal{L},\mathcal{K}} =
  \mathbf{R}_{\mathcal{L},\mathcal{K}}(\mathbf{Q}_{\mathcal{L}},
  \mathbf{\Phi}^n_{\mathcal{K}},\mathbf{\Psi}^n_{\mathcal{L}},\mathbf{\bar
    C}_{\mathcal{K}\setminus \mathcal{L},\mathcal{K}},
  \mathbf{R}_{\mathcal{K}\setminus\mathcal{L},\mathcal{L}})
 \end{equation} 
 with the restriction that~\eqref{equ:Rkk} must be satisfied for all
 $R_{l,l}$ where $l \in \mathcal{L}$, and

\item to use all information available at the end of the
  protocol in any general PPT estimator for
  $\mathbf{Q}_{\mathcal{K}\setminus\mathcal{L}}$, i.e.,
\begin{align} 
& \hspace{-5pt}
    \mathbf{\hat Q}_{\mathcal{K}\setminus \mathcal{L}} = \notag \\
&\mathbf{\hat Q}_{\mathcal{K}\setminus\mathcal{L}} (\mathbf{Q}_{\mathcal{L}}, 
   \mathbf{\Phi}^n_{\mathcal{K}}, \mathbf{\Psi}^n_{\mathcal{L}},
   \mathbf{\bar C}_{\mathcal{K},\mathcal{K}}, \mathbf{G}_{\mathcal{K}},
   \mathbf{R}_{\mathcal{L},\mathcal{K}}, 
   \mathbf{R}_{\mathcal{K}\setminus  \mathcal{L},\mathcal{L}})
\label{equ:Qhat_fn}
\end{align} 
with the restriction that~\eqref{equ:Gk} must be satisfied for all
$G_l$ where $l \in \mathcal{L}$.  
\end{enumerate}

\subsection{Main Result} \label{sec:tea}

Consider the following type estimation attack (TEA) experiment, as
shown in Figure~\ref{fig:tea}, between the attacker $\mathscr{A}$ and
her challenger $\mathscr{C}$ as specified below:
\begin{figure}
  \centering
  \includegraphics[width=0.5\textwidth]{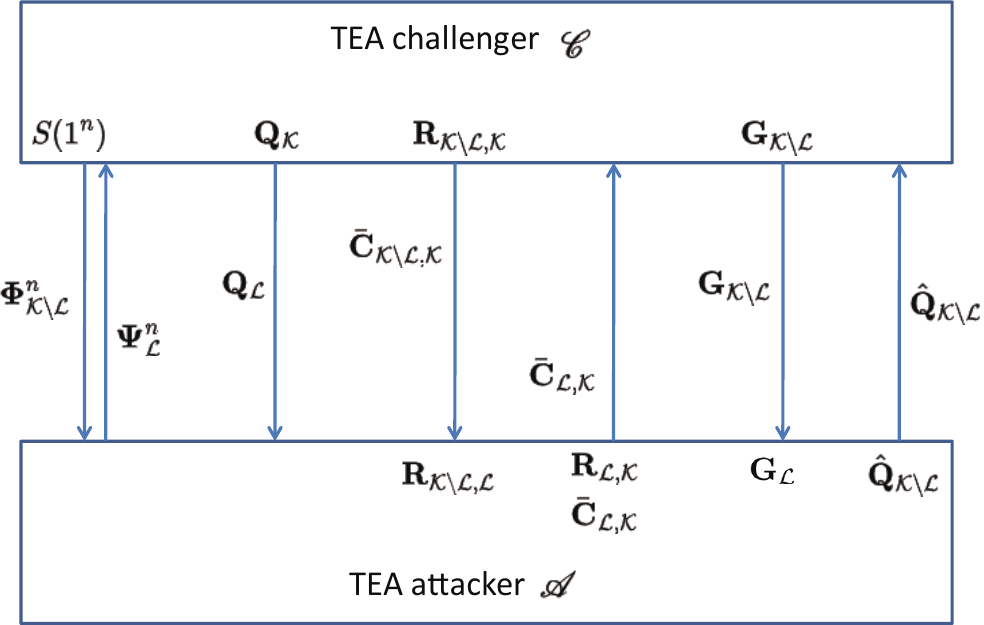}
  \caption{The TEA experiment.}
\label{fig:tea}
\end{figure}
\begin{enumerate}
\item For each $k\in\mathcal{K}\setminus\mathcal{L}$, $\mathscr{C}$
  runs $S(1^{n})$ to get the pair of public key
  $\Phi_{k}^n\in\mathcal{E}_n$ and private key
  $\Psi_{k}^n\in\mathcal{D}_n$, and gives
  $\boldsymbol{\Phi}_{\mathcal{K}\setminus\mathcal{L}}^n$ to
  $\mathscr{A}$. Then, $\mathscr{A}$ generates the set of key pairs
  $(\boldsymbol{\Phi}_{\mathcal{L}}^n,
  \boldsymbol{\Psi}_{\mathcal{L}}^n)$ according to~\eqref{equ:key_fn},
  and gives $\boldsymbol{\Phi}_{\mathcal{L}}^n$ to $\mathscr{C}$.

\item $\mathscr{C}$ draws a collection of quantized square-root types  $\mathbf{Q}_{\mathcal{K}}\in\mathcal{Q}^K_t(\mathcal{X})$ according
  to the distribution $p_{\mathbf{Q}_{\mathcal{K}}}(\cdot;\theta)$,
  and gives $\mathbf{Q}_{\mathcal{L}}$ to $\mathscr{A}$. We assume
  that $\mathscr{A}$ also knows the value of the system state
  $\theta \in \{0,1\}$ and the distribution
  $p_{\mathbf{Q}_{\mathcal{K}}}(\cdot;\theta)$.  Hence, we will simply
  write $p_{\mathbf{Q}_{\mathcal{K}}}(\cdot)$ in place of
  $p_{\mathbf{Q}_{\mathcal{K}}}(\cdot;\theta)$ in the discussion below
  to simplify notation.

\item $\mathscr{C}$ generates $\mathbf{R}_{\mathcal{K}\setminus\mathcal{L},\mathcal{K}}=[\mathbf{R}_{k,l}]_{k\in\mathcal{K}\setminus\mathcal{L},l\in\mathcal{K}}$ with
  $\mathbf{R}_{k,l}$
  i.i.d. $\sim {u}(\mathcal{N}_m^{|\mathcal{X}|})$ for
  $k\neq l$, and  $\mathbf{R}_{k,k}=\ominus\bigoplus_{l\in\mathcal{K}\setminus
    k}\mathbf{R}_{k,l}$ according to~\eqref{equ:Rkk}. Then,
  $\mathscr{C}$ computes
  ${\mathbf{\bar C}}_{\mathcal{K}\setminus\mathcal{L},\mathcal{K}}
  ={E}({\mathbf{\bar R}}_{\mathcal{K}\setminus
    \mathcal{L},\mathcal{K}};\mathbf{\Phi}^n_{\mathcal{K}})$, and
  gives it to $\mathscr{A}$.

\item After receiving ${\mathbf{\bar C}}_{\mathcal{K}\setminus \mathcal{L},\mathcal{K}}$,
$\mathscr{A}$ decrypts to get $\mathbf{R}_{\mathcal{K}\setminus\mathcal{L},\mathcal{L}}
=D(\mathbf{C}_{\mathcal{K}\setminus \mathcal{L},\mathcal{L}};\mathbf{\Psi}^n_{\mathcal{L}})$,
and generates $\mathbf{R}_{\mathcal{L},\mathcal{K}}$ according
to~\eqref{equ:RLK_fn}. Then, $\mathscr{A}$ computes
${\mathbf{\bar C}}_{\mathcal{L},\mathcal{K}} =E({\mathbf{\bar
    R}}_{\mathcal{L},\mathcal{K}};\mathbf{\Phi}^n_{\mathcal{K}})$, and
gives
$\mathbf{\bar C}_{\mathcal{L},\mathcal{K}}$ to $\mathscr{C}$.

\item $\mathscr{C}$ calculates
  $\mathbf{R}_{\mathcal{L},\mathcal{K}\setminus\mathcal{L}} =
  D(\mathbf{C}_{\mathcal{L},\mathcal{K}\setminus\mathcal{L}} ;
  \mathbf{\Psi}^n_{\mathcal{K}\setminus\mathcal{L}})$, computes
  $\mathbf{G}_{\mathcal{K}\backslash \mathcal{L}}
  =\mathbf{Q}_{\mathcal{K}\setminus
    \mathcal{L}}\oplus\mathbf{\Sigma}_{\mathbf{R}_{\mathcal{K},\mathcal{K}\setminus\mathcal{L}}}$
  according to~\eqref{equ:Gk}, and gives
  $\mathbf{G}_{\mathcal{K}\setminus \mathcal{L}}$ to $\mathscr{A}$.

\item $\mathscr{A}$ computes
  $\mathbf{G}_{\mathcal{L}}=\mathbf{Q}_{\mathcal{L}}
  \oplus\mathbf{\Sigma}_{\mathbf{R}_{\mathcal{K},\mathcal{L}}}$
  according to~\eqref{equ:Gk}. Then, according to~\eqref{equ:Qhat_fn},
  $\mathscr{A}$ generates, and reports to $\mathscr{C}$,
  ${\mathbf{\hat Q}}_{\mathcal{K}\setminus \mathcal{L}}$ as her
  estimate of $\mathbf{Q}_{\mathcal{K}\setminus \mathcal{L}}$.
\end{enumerate}
Note that in step 2) above the distribution
$p_{\mathbf{Q}_{\mathcal{K}}}(\cdot;\theta)$ models two different
physical mechanisms that give rise to the randomness of
$\mathbf{Q}_{\mathcal{K}}$. The first mechanism is the choice of
$\mathbf{p}_{\theta, \mathcal{K}}$, which is a random instantiation
from some underlying random model that characterizes attributes, such
as the locations as in the example of Section~\ref{sec:simulation}, of
the sensors. A more conservative deterministic approach is adopted in
the formulation of the event detection problem in
Section~\ref{sec:detection} by treating
$\mathbf{p}_{\theta, \mathcal{K}}$ as deterministic and considering the
worst-case detection errors. It is more convenient to consider a
random model for privacy analysis here. The second mechanism is the
random instantiation of $\mathbf{X}_{\mathcal{K}}$, of which
$\mathbf{Q}_{\mathcal{K}}$ is a function, from
$\mathbf{p}_{\theta, \mathcal{K}}$. This mechanism is modeled in
exactly the same way in the detection problem.

For any radius $\tau\geq 0$ and
$\mathbf{q}_{\mathcal{K} \setminus \mathcal{L}} \in
\mathcal{Q}^{K-L}_t(\mathcal{X})$, define a neighborhood of quantized
square root types around
$\mathbf{q}_{\mathcal{K} \setminus \mathcal{L}}$:
\begin{align*}
 \mathcal{N}_{\tau}(\mathbf{q}_{\mathcal{K} \setminus \mathcal{L}})
 &= \big\{ \mathbf{q}'_{\mathcal{K} \setminus \mathcal{L}} \in
\mathcal{Q}^{K-L}_t(\mathcal{X}):
d_H(\mathbf{q}'^2_{\mathcal{K}\setminus \mathcal{L}},
\mathbf{q}^2_{\mathcal{K}\setminus \mathcal{L}}) \leq \tau, 
\\
& \hspace{20pt}
\boldsymbol{\Sigma}_{\mathbf{q}'_{\mathcal{K}\setminus \mathcal{L}}}
=\boldsymbol{\Sigma}_{\mathbf{q}_{\mathcal{K}\setminus \mathcal{L}}} \big\},
\end{align*}
where $\mathbf{q}^2_{\mathcal{K}\setminus \mathcal{L}}$ denotes
elementwise squaring of the vector
$\mathbf{q}_{\mathcal{K}\setminus \mathcal{L}}$. Then, we say
$\mathscr{A}$ wins the TEA experiment if
${\mathbf{\hat Q}}_{\mathcal{K}\setminus \mathcal{L}}$ is within a
small neighborhood around
$\mathbf{Q}_{\mathcal{K}\setminus \mathcal{L}}$, i.e.,
${\mathbf{\hat Q}}_{\mathcal{K}\setminus \mathcal{L}} \in
\mathcal{N}_{\tau}(\mathbf{Q}_{\mathcal{K}\setminus \mathcal{L}})$.

\begin{theorem}\label{thm:tea}
  Let $\mathbf{\hat Q}_{\mathcal{K}\setminus\mathcal{L}}$ be the
  estimator for ${\mathbf{Q}}_{\mathcal{K}\setminus\mathcal{L}}$ of
  any PPT attacker $\mathscr{A}$ in the TEA experiment above.  Given
  $[\boldsymbol{\Sigma}_{\mathbf{Q}_{\mathcal{K}\setminus\mathcal{L}}},
  \mathbf{Q}_{\mathcal{L}}]$, let
  ${\mathbf{\hat Q}}^{\prime}_{\mathcal{K}\setminus\mathcal{L}}$ be
  another estimator that has the same conditional distribution as
  ${\mathbf{\hat Q}}_{\mathcal{K}\setminus\mathcal{L}}$ but is
  conditionally \textbf{independent} of
  ${\mathbf{Q}}_{\mathcal{K}\setminus\mathcal{L}}$. If $L\leq K-2$,
  then for any $\tau \geq 0$,
  $\boldsymbol{\sigma}\in\mathcal{N}_m^{|\mathcal{X}|}$,
  $\mathbf{q}_{\mathcal{L}}\in\mathcal{Q}_t^L(\mathcal{X})$,
\begin{align}
& \hspace{-20pt}
\Pr(\mathbf{\hat{Q}}_{\mathcal{K}\setminus\mathcal{L}}\in
\mathcal{N}_{\tau}(\mathbf{Q}_{\mathcal{K}
\backslash\mathcal{L}})
\mid \boldsymbol{\Sigma}_{\mathbf{Q}_{\mathcal{K}\backslash\mathcal{L}}}=\boldsymbol{\sigma},
\mathbf{Q}_{\mathcal{L}}=\mathbf{q}_{\mathcal{L}}) \notag \\
\leq &
\Pr(\mathbf{\hat{Q}}^{\prime}_{\mathcal{K}
\setminus\mathcal{L}}\in
\mathcal{N}_{\tau}(\mathbf{Q}_{\mathcal{K}\backslash\mathcal{L}})
\mid \boldsymbol{\Sigma}_{\mathbf{Q}_{\mathcal{K}\backslash\mathcal{L}}}=\boldsymbol{\sigma},
\mathbf{Q}_{\mathcal{L}}=\mathbf{q}_{\mathcal{L}}) \notag \\
& + 8(K-L-1)|\mathcal{X}|\cdot F_{\mathrm{CPA}}(n)
\label{equ:edra}
\end{align}
where $F_{\mathrm{CPA}}(n)$, given in~\eqref{equ:adv}, is the
probability advantage of an attacker in the CPA experiment against the
public-key cryptographic scheme $\Pi$ with security parameter $n$.
\end{theorem}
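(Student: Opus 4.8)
The plan is to relate the attacker's success probability in the real TEA experiment to that in an idealized experiment in which the ciphertexts the attacker cannot decrypt carry no information, and then to show that in the idealized experiment the attacker's estimate is \emph{forced} to be conditionally independent of its target. First I would fix the conditioning event $\{\boldsymbol{\Sigma}_{\mathbf{Q}_{\mathcal{K}\setminus\mathcal{L}}}=\boldsymbol{\sigma},\,\mathbf{Q}_{\mathcal{L}}=\mathbf{q}_{\mathcal{L}}\}$ and note that the attacker can always reconstruct $\boldsymbol{\Sigma}_{\mathbf{Q}_{\mathcal{K}\setminus\mathcal{L}}}$: by \eqref{e:sum0} the masks cancel, so $\bigoplus_{k\in\mathcal{K}}G_k=\bigoplus_{k\in\mathcal{K}}Q_k$, and subtracting the known $\mathbf{Q}_{\mathcal{L}}$ leaves $\boldsymbol{\Sigma}_{\mathbf{Q}_{\mathcal{K}\setminus\mathcal{L}}}$. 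This is why the conditioning is natural and why $\mathcal{N}_{\tau}$ only contains types with the same modulo sum. The only randomness genuinely unavailable to the attacker is the collection of cross numbers $\{R_{k,l}(x):k,l\in\mathcal{K}\setminus\mathcal{L},\,k\neq l\}$ generated by honest sensors and transmitted encrypted under the honest public keys $\Phi_l^{n}$, $l\in\mathcal{K}\setminus\mathcal{L}$.

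The information-theoretic core is an idealized experiment in which, for the hidden indices $k,l\in\mathcal{K}\setminus\mathcal{L}$, each ciphertext $C_{k,l}(x)$ is replaced by an encryption of a fresh dummy value while the true $R_{k,l}(x)$ is still used to form $\mathbf{G}_{\mathcal{K}\setminus\mathcal{L}}$ through \eqref{equ:Gk}. In this world the attacker's view depends on these cross numbers only through the additive masks $\Sigma_{\mathbf{R}_{\mathcal{K},l}}(x)$ in $G_l(x)=Q_l(x)\oplus\Sigma_{\mathbf{R}_{\mathcal{K},l}}(x)$. I would then show that, conditioned on everything the adversary generates or decrypts and on $\boldsymbol{\Sigma}_{\mathbf{Q}_{\mathcal{K}\setminus\mathcal{L}}}$, the mask vector $[\Sigma_{\mathbf{R}_{\mathcal{K},l}}(x)]_{l\in\mathcal{K}\setminus\mathcal{L}}$ is uniform over the coset of the zero-modulo-sum subspace fixed by its (attacker-known) total, and is independent of $\mathbf{Q}_{\mathcal{K}\setminus\mathcal{L}}$. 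The key algebraic fact is that each hidden $R_{k,l}(x)$ enters the mask of sensor $l$ with a $\oplus$ and, through the constraint \eqref{equ:Rkk} on $R_{k,k}(x)$, enters the mask of sensor $k$ with a $\ominus$; hence a spanning set of $K-L-1$ such cross numbers per symbol (e.g.\ a path through the honest sensors) has signed incidence spanning the full $(K-L-1)$-dimensional sum-zero subspace. This is exactly where the hypothesis $L\le K-2$ enters: with at least two honest sensors there is at least one free hidden cross number per symbol, whereas with fewer the corresponding mask is known to the attacker. It follows that in the idealized experiment $\mathbf{G}_{\mathcal{K}\setminus\mathcal{L}}$, and therefore $\mathbf{\hat Q}_{\mathcal{K}\setminus\mathcal{L}}$ via \eqref{equ:Qhat_fn}, is conditionally independent of $\mathbf{Q}_{\mathcal{K}\setminus\mathcal{L}}$ given $[\boldsymbol{\Sigma}_{\mathbf{Q}_{\mathcal{K}\setminus\mathcal{L}}},\mathbf{Q}_{\mathcal{L}}]$, so its win probability coincides with that of the independent copy $\mathbf{\hat Q}'_{\mathcal{K}\setminus\mathcal{L}}$.

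To move from the real experiment to the idealized one I would build a hybrid chain that swaps, one at a time, the ciphertexts of the $K-L-1$ spanning cross numbers over each of the $|\mathcal{X}|$ symbols. For a single swap a reduction $\mathscr{B}$ submits a uniform challenge pair $[R^{0},R^{1}]$ to its CPA challenger, uses $R^{0}$ as the cross number when computing $\mathbf{G}_{\mathcal{K}\setminus\mathcal{L}}$, and embeds the returned ciphertext $C^{B}$ in the corresponding slot of $\mathbf{\bar C}_{\mathcal{K}\setminus\mathcal{L},\mathcal{K}}$; since $\mathscr{B}$ holds all public keys it can faithfully simulate the honest challenger and run the PPT attacker $\mathscr{A}$, including its adaptive choices \eqref{equ:key_fn}--\eqref{equ:RLK_fn}, and it outputs the indicator of $\{\mathbf{\hat Q}_{\mathcal{K}\setminus\mathcal{L}}\in\mathcal{N}_{\tau}(\mathbf{Q}_{\mathcal{K}\setminus\mathcal{L}})\}$. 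When $B=0$ the ciphertext is consistent with the mask used (the real hybrid) and when $B=1$ it is not (the dummy hybrid), so by the factor-$2$ relation in \eqref{equ:adv} the change in win probability across a single swap is a constant multiple of $F_{\mathrm{CPA}}(n)$. Because the right-hand side of \eqref{equ:edra} uses the \emph{real} conditional law of $\mathbf{\hat Q}_{\mathcal{K}\setminus\mathcal{L}}$, I would run this hybrid twice---once with $\mathbf{Q}_{\mathcal{K}\setminus\mathcal{L}}$ coupled to the view and once with it replaced by an independent copy---so that the idealized identity $\Pr(\mathbf{\hat Q}_{\mathcal{K}\setminus\mathcal{L}}\in\mathcal{N}_{\tau}(\mathbf{Q}_{\mathcal{K}\setminus\mathcal{L}}))=\Pr(\mathbf{\hat Q}'_{\mathcal{K}\setminus\mathcal{L}}\in\mathcal{N}_{\tau}(\mathbf{Q}_{\mathcal{K}\setminus\mathcal{L}}))$ can be transported back to the real experiment from both sides. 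Collecting the per-swap bound over the $K-L-1$ spanning terms, the $|\mathcal{X}|$ symbols, and the two passes accounts for the constant and yields $8(K-L-1)|\mathcal{X}|\,F_{\mathrm{CPA}}(n)$.

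I expect the main obstacle to be the information-theoretic independence step and its clean interface with the reduction. One must track precisely which random numbers are known to, generated by, or reconstructable by the \emph{adaptive} adversary---including those pinned down by the constraint \eqref{equ:Rkk} that the adversarial sensors are still required to satisfy---and then verify that the residual hidden randomness is \emph{exactly} a one-time pad concealing every difference among the honest types while leaking only their modulo sum. The delicate point is that each hidden cross number appears in two masks with opposite signs, so a reduction that replaces a single ciphertext must stay consistent with both masks at once; reconciling this two-sided dependence with the adversary's adaptivity is what forces the careful hybrid bookkeeping and ultimately fixes the constant in \eqref{equ:edra}.
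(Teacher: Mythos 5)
Your proposal is correct and follows essentially the same route as the paper's own proof: your ``idealized experiment plus two hybrid passes'' is precisely the paper's decomposition of the TEA experiment into a type-discrimination experiment that is then split into two multi-encryption CPA experiments (Lemma~\ref{lem:mcpa}, one per pass --- real coupling and independent copy) and a purely information-theoretic experiment in which the hidden cross numbers form a one-time pad uniform on the zero-modulo-sum coset (Lemma~\ref{lem:useful2}), while the simulation bookkeeping you flag as the main obstacle --- that the reduction never needs honest private keys because the honest-but-curious constraint~\eqref{equ:RLK_fn} lets it reconstruct the adversary's plaintexts by running her own code --- is exactly what Lemma~\ref{lem:useful3} formalizes. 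The only notable deviation is that you swap a spanning tree of $(K-L-1)|\mathcal{X}|$ ciphertexts per pass whereas the paper swaps the $2(K-L-1)|\mathcal{X}|$ ciphertexts emanating from the two designated honest sensors $\mathcal{I}=\{1,2\}$; both choices make the signed incidence map surjective onto the sum-zero subspace, and your tighter choice would in fact yield the sharper constant $4(K-L-1)|\mathcal{X}|\cdot F_{\mathrm{CPA}}(n)$ (your stated tally of $8(K-L-1)|\mathcal{X}|$ overcounts your own construction, but this only means you prove a slightly stronger bound than~\eqref{equ:edra} requires).
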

The theorem guarantees that as long as the public-key cryptographic
scheme $\Pi$ employed is CPA-secure, any PPT attacker cannot do much
better than independently guessing the value of
${\mathbf{Q}}_{\mathcal{K}\setminus\mathcal{L}}$ given her own
information $\mathbf{Q}_{\mathcal{L}}$ from the adversarial sensors
and the information
$\boldsymbol{\Sigma}_{\mathbf{Q}_{\mathcal{K}\backslash\mathcal{L}}}$
``leaked'' to her via the proposed protocol. 
One may further quantify
the notion of ``not much better'' above, by noting that since
$\mathcal{Q}^{K-L}_t(\mathcal{X})$ has at most
$(t+1)^{(K-L)|\mathcal{X}|}$ elements, we must have
\begin{align*}
& \hspace{-5pt}
\max_{\mathbf{\hat{Q}}^{\prime}_{\mathcal{K} \setminus\mathcal{L}}}
\Pr(\mathbf{\hat{Q}}^{\prime}_{\mathcal{K} \setminus\mathcal{L}} \in
\mathcal{N}_{\tau}(\mathbf{Q}_{\mathcal{K}\backslash\mathcal{L}})
\mid \boldsymbol{\Sigma}_{\mathbf{Q}_{\mathcal{K}\backslash\mathcal{L}}}=\boldsymbol{\sigma},
\mathbf{Q}_{\mathcal{L}}=\mathbf{q}_{\mathcal{L}}) \\
&\geq (t+1)^{-(K-L)|\mathcal{X}|}.
\end{align*}
If $\Pi$ is CPA-secure, then it suffices to choose
$n = \mathcal{O}(t^{\rho (K-L)|\mathcal{X}|})$, for any
$\rho > 0$, to make the bound in~\eqref{equ:edra} non-trivial.  We
emphasize that the direct bound on the successful estimation
probability achieved by the attacker given by Theorem~\ref{thm:tea}
provides a much stronger privacy guarantee than what the \emph{view}
approach can.

The main idea of the proof of Theorem~\ref{thm:tea} is to first reduce
the TEA experiment to a type discrimination attack (TDA) experiment in
which the attacker aims to distinguish between a pair of quantized
square-root types instead. The TDA experiment is then further
decomposed into two CPA experiments and a third one involving only the
secret random numbers. The bound on the correct estimation probability
achieved by the attacker in~\eqref{equ:edra} is obtained from the
advantages of the experiments in the chain of reduction steps
mentioned.  Based on this roadmap, we will construct the proof of
Theorem~\ref{thm:tea} step by step in the rest of this section.

\subsection{Useful Lemmas}

Before proceeding to construct the proof of Theorem~\ref{thm:tea}, we
state here a few lemmas that help to simplify later discussions. As
the proofs of these lemmas are either trivial or technical rather than
illustrative, they are provided for completeness in
Appendix~\ref{sec:prooflemmas}.

\begin{lem}\label{lem:useful2}
  Suppose $L\leq K-2$. Let
  $\mathcal{I}=\{1,2\} \subseteq \mathcal{K} \setminus \mathcal{L}$,
  and
  $\mathbf{R}_{\mathcal{I},\mathcal{K}}=[\mathbf{R}_{k,l}]_{k\in\mathcal{I},l\in\mathcal{K}}$
  be a collection of random variables satisfying $\mathbf{R}_{k,l}$
  i.i.d. $\sim {u}(\mathcal{N}_m^{|\mathcal{X}|})$ for $k\neq l$, and
  $\mathbf{R}_{k,k}=\ominus\bigoplus_{j\in\mathcal{K}\setminus
    k}\mathbf{R}_{k,j}$ according to~\eqref{equ:Rkk}. Then, for any
  $\mathbf{r}_{\mathcal{I},\mathcal{L}}\in\mathcal{N}_m^{2L|\mathcal{X}|}$
  and
  $\boldsymbol{\sigma}_{\mathcal{K}\setminus\mathcal{L}} \in
  \mathcal{N}_m^{(K-L)|\mathcal{X}|}$,
\begin{align}
& \hspace{-10pt}
P_{\boldsymbol{\Sigma}_{\mathbf{R}_{\mathcal{I},\mathcal{K}\setminus\mathcal{L}}}\mid
\mathbf{R}_{\mathcal{I},\mathcal{L}}}
(\boldsymbol{\sigma}_{\mathcal{K}\setminus\mathcal{L}}\mid\mathbf{r}_{\mathcal{I},\mathcal{L}})
  \notag \\
&=
2^{-m(K-L-1)|\mathcal{X}|}\cdot
\delta\left(\boldsymbol{\Sigma}_{\boldsymbol{\sigma}_{\mathcal{K}\setminus\mathcal{L}}}
\oplus\bigoplus_{l\in\mathcal{L}}\boldsymbol{\Sigma}_{\mathbf{r}_{\mathcal{I},l}}\right).
\label{equ:lem2}
\end{align}
\end{lem}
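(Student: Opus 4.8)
The plan is to exploit the mutual independence of all the entries $R_{k,l}(x)$ across the indices $k$, $l$, and $x$, together with the fact that $(\mathcal{N}_m,\oplus)$ is a finite abelian group isomorphic to $\mathbb{Z}/2^m\mathbb{Z}$, of order $2^m$. Since the conditioning variables $\mathbf{R}_{\mathcal{I},\mathcal{L}}$ and the target quantity $\boldsymbol{\Sigma}_{\mathbf{R}_{\mathcal{I},\mathcal{K}\setminus\mathcal{L}}}$ both factor coordinatewise over $x\in\mathcal{X}$, the whole claim factors into a product of $|\mathcal{X}|$ identical per-coordinate statements. First I would fix a single $x$, suppress it from the notation, write $S_l := R_{1,l}\oplus R_{2,l}$ for the $l$th entry of $\boldsymbol{\Sigma}_{\mathbf{R}_{\mathcal{I},\mathcal{K}\setminus\mathcal{L}}}$, and work in the group $G=(\mathcal{N}_m,\oplus)$; it then suffices to show that the conditional law of $(S_l)_{l\in\mathcal{K}\setminus\mathcal{L}}$ places mass $2^{-m(K-L-1)}$ on each admissible configuration, after which the exponent $|\mathcal{X}|$ and the joint indicator in~\eqref{equ:lem2} follow by taking the product over $x$.

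Next I would pin down the support, which is exactly what the $\delta$ factor encodes. For each $k\in\mathcal{I}=\{1,2\}$, definition~\eqref{equ:Rkk} is precisely the zero-modulo-sum identity $\bigoplus_{l\in\mathcal{K}}R_{k,l}=0$. Summing over the two values of $k$ gives $\bigoplus_{l\in\mathcal{K}}S_l=0$, hence $\bigoplus_{l\in\mathcal{K}\setminus\mathcal{L}}S_l=\ominus\bigoplus_{l\in\mathcal{L}}S_l$. Because $L\leq K-2$ guarantees $\{1,2\}\subseteq\mathcal{K}\setminus\mathcal{L}$, we have $\mathcal{I}\cap\mathcal{L}=\emptyset$, so every conditioning entry $R_{k,l}$ with $l\in\mathcal{L}$ is off-diagonal and the right-hand side is a deterministic function of $\mathbf{r}_{\mathcal{I},\mathcal{L}}$. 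This is exactly the argument of the indicator $\delta$ in~\eqref{equ:lem2}, and it confines the support to a single affine hyperplane.

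It then remains to establish uniformity on this hyperplane. The free randomness consists of the off-diagonal entries $R_{1,l}$ ($l\neq 1$) and $R_{2,l}$ ($l\neq 2$), those with $l\in\mathcal{L}$ being frozen by the conditioning, while the diagonal entries $R_{1,1}$ and $R_{2,2}$ are determined through~\eqref{equ:Rkk}. I would view $(S_l)_{l\in\mathcal{K}\setminus\mathcal{L}}$ as an affine image over $G$ of the $2(K-L-1)$ remaining free uniform variables, the shift being the fixed quantity $\ominus\bigoplus_{l\in\mathcal{L}}S_l$ from the previous step. Since a group homomorphism pushes the uniform law forward to the uniform law on its image with fibres of equal size, the claim reduces to showing that the linear part of this map is \emph{surjective} onto the constraint hyperplane $H_0=\{(s_l)_{l\in\mathcal{K}\setminus\mathcal{L}}:\bigoplus_l s_l=0\}$, which has exactly $|G|^{K-L-1}=2^{m(K-L-1)}$ points; surjectivity then forces each point of $H_0$ to receive mass $2^{-m(K-L-1)}$, as required.

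The main obstacle is the coupling created by the two diagonal terms: through~\eqref{equ:Rkk}, $R_{1,1}$ and $R_{2,2}$ tie $S_1$ and $S_2$ to all of the free variables, so the map is not a mere coordinatewise sum and surjectivity must be checked directly. I expect to argue this constructively: given any target $(s_l)$ with $\bigoplus_l s_l=0$, set $R_{1,l}=0$ and $R_{2,l}=s_l$ for each $l\geq 3$ in $\mathcal{K}\setminus\mathcal{L}$, then choose the two remaining free entries $R_{2,1}$ and $R_{1,2}$ so as to realize $S_1=s_1$; the value of $S_2$ is then forced, and it equals $s_2$ precisely because the global sum $\bigoplus_l S_l$ vanishes. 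Exhibiting a preimage for every point of $H_0$ completes the surjectivity check, and the product over $x\in\mathcal{X}$ then yields~\eqref{equ:lem2}.
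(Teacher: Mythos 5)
Your proof is correct, and it takes a genuinely different route from the paper's. The paper computes the conditional pmf directly: writing $\mathcal{J}=\mathcal{K}\setminus(\mathcal{I}\cup\mathcal{L})$, it conditions additionally on $\mathbf{R}_{\mathcal{I},\mathcal{J}}$, disposes of the diagonal coupling by the change of variable $\mathbf{W}=\mathbf{R}_{2,1}\ominus\mathbf{R}_{1,2}$ (which is uniform and independent of $\mathbf{R}_{\mathcal{I},\mathcal{K}\setminus\mathcal{I}}$, so that $\boldsymbol{\Sigma}_{\mathbf{R}_{\mathcal{I},1}}$ is conditionally uniform while $\boldsymbol{\Sigma}_{\mathbf{R}_{\mathcal{I},2}}$ is its forced complement), and then sums out $\mathbf{r}_{\mathcal{I},\mathcal{J}}$ by counting solutions of the per-index deltas, which yields the normalization $2^{-m(K-L-1)|\mathcal{X}|}$ and the single delta in \eqref{equ:lem2}. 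You bypass all of these intermediate conditional densities: after factoring over $x$, you obtain the delta constraint by adding the two ZMS identities from \eqref{equ:Rkk} (valid since $\mathcal{I}\cap\mathcal{L}=\emptyset$ makes the $\mathcal{L}$-columns deterministic under the conditioning), and you obtain uniformity from the abstract fact that an affine map between finite abelian groups pushes the uniform law to the uniform law on its image, a coset of the image of its linear part; this reduces everything to your constructive surjectivity check onto the hyperplane $H_0$, whose cardinality $2^{m(K-L-1)}$ per coordinate then fixes the mass of each admissible point. Your explicit preimage (set $R_{1,l}=0$, $R_{2,l}=s_l$ for $l\in\mathcal{J}$, then solve for $R_{2,1}$ and $R_{1,2}$ to hit $s_1$) does work: the forced value of $S_2$ equals $s_2$ precisely because $\bigoplus_{l}s_l=0$. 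What each approach buys: yours is shorter, more structural, and generalizes verbatim to $|\mathcal{I}|>2$ or to any finite abelian group in place of $(\mathcal{N}_m,\oplus)$; the paper's is a self-contained pmf computation in the same explicit style as its other appendix arguments (e.g., Appendix~\ref{app:A3}), at the cost of the extra conditioning and counting steps. One small imprecision to fix in your write-up: the constant shift of your affine map is a vector supported on the $S_1$ and $S_2$ coordinates (namely $\ominus\bigoplus_{l\in\mathcal{L}}\mathbf{r}_{1,l}$ and $\ominus\bigoplus_{l\in\mathcal{L}}\mathbf{r}_{2,l}$), not the single group element $\ominus\bigoplus_{l\in\mathcal{L}}S_l$; this does not damage the argument, because your support step together with the cardinality count already identifies the correct coset, but the statement as phrased conflates the shift with the constraint value.
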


\begin{lem}\label{lem:useful3}
  Let $B\in\{0,1\}$, $Y\in\mathcal{Y}$, $U\in\mathcal{U}$,
  $V\in\mathcal{V}$, and $W\in\mathcal{W}$ be discrete random
  variables. Let $\hat{B}(Y,V,W)$ be a PPT estimator of $B$.
  If $W$ is conditionally independent of $B$ given
  $[Y,U,V]$ and $W=W(Y,U,V)$ can be generated by a PPT algorithm, then the
  estimator $\hat{B}_0(Y,U,V)=\hat{B}(Y,V,W(Y,U,V))$ is PPT, and for any
  $(u,v)\in\mathcal{U}\times\mathcal{V}$, 
\begin{align}
& \hspace{-10pt}
\Pr(\hat{B}_0 (Y,U,V)=B\mid U=u,V=v) \notag \\
& =\Pr(\hat{B}(Y,V,W)=B\mid U=u,V=v).
\label{equ:useful3}
\end{align}
\end{lem}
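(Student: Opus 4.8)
The statement splits into two independent claims, and I would dispatch them in turn. The first, that $\hat{B}_0$ is PPT, is essentially immediate from the definition $\hat{B}_0(Y,U,V)=\hat{B}(Y,V,W(Y,U,V))$: this is the sequential composition of the PPT algorithm that generates $W(Y,U,V)$ from $(Y,U,V)$, followed by the PPT estimator $\hat{B}$ applied to $(Y,V,W(Y,U,V))$. Since the first algorithm runs in polynomial time, its output has size polynomial in the security parameter, so feeding it into the second PPT algorithm keeps the total running time polynomial; hence $\hat{B}_0$ is PPT.

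The substantive part is the probability identity~\eqref{equ:useful3}, and here the plan is to expand both conditional probabilities by the law of total probability over the shared variables $(Y,B)$ and the value taken by the $W$-argument, then match term by term. On the right-hand side I would write, for fixed $(u,v)$,
\[
\Pr(\hat{B}(Y,V,W)=B\mid U=u,V=v)=\sum_{y,b,w}\Pr(\hat{B}(y,v,w)=b)\,\Pr(Y=y,B=b,W=w\mid U=u,V=v),
\]
where the convention that every function is stochastic (its output conditionally independent of all else given its input) is what lets me pull $\Pr(\hat{B}(y,v,w)=b)$ out as a deterministic factor. For the left-hand side, the fresh draw $W(Y,U,V)$ is likewise stochastic given $(Y,U,V)$, so conditioning further on $(Y=y,U=u,V=v)$ and summing over the generated value $w$ yields
\[
\Pr(\hat{B}_0(Y,U,V)=B\mid U=u,V=v)=\sum_{y,b,w}\Pr(\hat{B}(y,v,w)=b)\,\Pr(W(y,u,v)=w)\,\Pr(Y=y,B=b\mid U=u,V=v).
\]

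To reconcile the two expansions I would factor the right-hand joint term as $\Pr(W=w\mid Y=y,B=b,U=u,V=v)\,\Pr(Y=y,B=b\mid U=u,V=v)$ and invoke the two hypotheses: the conditional independence of $W$ and $B$ given $[Y,U,V]$ drops $b$ from the first factor, reducing it to $\Pr(W=w\mid Y=y,U=u,V=v)$, while the generatability assumption $W=W(Y,U,V)$ identifies this with $\Pr(W(y,u,v)=w)$; substituting makes the right-hand display coincide with the left-hand one. The only place care is needed is keeping the two copies of $W$ conceptually distinct --- the ``real'' $W$ inside $\hat{B}(Y,V,W)$ on the right versus the independently regenerated $W(Y,U,V)$ embedded in $\hat{B}_0$ on the left. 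These are different draws and must not be identified directly; the argument instead shows that, jointly with $B$ and conditioned on $(Y,U,V)$, they induce the same distribution, so the common stochastic estimator $\hat{B}$ produces the same success probability. Making this bookkeeping precise is exactly where both the conditional-independence hypothesis and the stochasticity convention enter, and everything else is routine marginalization.
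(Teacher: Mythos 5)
Your proposal is correct and follows essentially the same route as the paper's proof: both expand the two conditional success probabilities by total probability over $(b,y,w)$, pull out the estimator's conditional law via the stochasticity convention, use the conditional independence of $W$ and $B$ given $[Y,U,V]$ to reduce $p_{W\mid B,Y,U,V}$ to $p_{W\mid Y,U,V}$, and identify that with the law of the regenerated $W(Y,U,V)$ so the two expansions coincide. The paper merely compresses these steps by defining $p_{\hat{B}_0\mid Y,U,V}$ as the $w$-marginalized kernel up front; your explicit bookkeeping of the two copies of $W$ is the same argument spelled out.
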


\begin{lem}\label{lem:useful4}
  Let $Z_1\in\mathcal{Z}_1$, $Z_2\in\mathcal{Z}_2$,
  $Z_3\in\mathcal{Z}_3$, and $W=W(Z_1,Z_2)\in\mathcal{W}$ be discrete
  random variables. If $Z_1$ is conditionally independent of $Z_3$
  given $Z_2$, then $W$ is conditionally independent of $Z_3$ given
  $Z_2$.
\end{lem}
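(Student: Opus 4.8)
The plan is to prove the conditional independence directly at the level of probability mass functions, exploiting the convention introduced in Section~\ref{sec:notation} that the stochastic map $W=W(Z_1,Z_2)$ is conditionally independent of every other random variable—in particular of $Z_3$—given its input $(Z_1,Z_2)$. Thus the starting point is the factorization $p_{W\mid Z_1,Z_2,Z_3}(w\mid z_1,z_2,z_3)=p_{W\mid Z_1,Z_2}(w\mid z_1,z_2)$, valid for every $(z_1,z_2,z_3)$ in the support. (If $W$ happens to be deterministic this holds trivially, since then $W$ is a function of $(Z_1,Z_2)$ alone and carries no dependence on $Z_3$.)

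First I would fix an arbitrary pair $(z_2,z_3)$ with $p_{Z_2,Z_3}(z_2,z_3)>0$ and expand the target conditional by the law of total probability over the intermediate variable $Z_1$:
\begin{equation*}
p_{W\mid Z_2,Z_3}(w\mid z_2,z_3)=\sum_{z_1} p_{W\mid Z_1,Z_2,Z_3}(w\mid z_1,z_2,z_3)\,p_{Z_1\mid Z_2,Z_3}(z_1\mid z_2,z_3).
\end{equation*}
Next I would substitute the stochastic-map factorization into the first factor to drop its dependence on $z_3$, and substitute the hypothesis that $Z_1$ is conditionally independent of $Z_3$ given $Z_2$ into the second factor, i.e.\ $p_{Z_1\mid Z_2,Z_3}(z_1\mid z_2,z_3)=p_{Z_1\mid Z_2}(z_1\mid z_2)$. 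This yields
\begin{equation*}
p_{W\mid Z_2,Z_3}(w\mid z_2,z_3)=\sum_{z_1} p_{W\mid Z_1,Z_2}(w\mid z_1,z_2)\,p_{Z_1\mid Z_2}(z_1\mid z_2)=p_{W\mid Z_2}(w\mid z_2),
\end{equation*}
where the last equality is again the law of total probability, now without the $Z_3$ conditioning. Since the right-hand side does not depend on $z_3$, this establishes $p_{W\mid Z_2,Z_3}=p_{W\mid Z_2}$, which is exactly the claimed conditional independence of $W$ and $Z_3$ given $Z_2$.

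There is no genuine mathematical obstacle here: the result is the standard fact that a function of $(Z_1,Z_2)$ inherits conditional independence from $Z_1$ given $Z_2$. The only points requiring a little care are (i) invoking the stochastic-function convention correctly, so that $W$ is seen to carry no hidden dependence on $Z_3$ beyond what passes through its input, and (ii) restricting all conditioning to events of positive probability; since every variable is discrete, this is handled simply by quantifying over $(z_2,z_3)$ with $p_{Z_2,Z_3}(z_2,z_3)>0$ and summing over the corresponding support of $Z_1$. Accordingly I would keep the write-up to the short chain of equalities above rather than elaborating further.
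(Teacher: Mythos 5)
Your proof is correct; note that the paper itself gives no proof of this lemma at all---Appendix~\ref{sec:prooflemmas} states only that ``the proof of Lemma~\ref{lem:useful4} is trivial and is omitted''---and your short chain of equalities (expanding $p_{W\mid Z_2,Z_3}$ over $Z_1$, using the stochastic-function convention to replace $p_{W\mid Z_1,Z_2,Z_3}$ by $p_{W\mid Z_1,Z_2}$, and the hypothesis to replace $p_{Z_1\mid Z_2,Z_3}$ by $p_{Z_1\mid Z_2}$) is precisely the standard argument the authors evidently had in mind. Your two points of care, invoking the stochastic-map convention so that $W$ carries no hidden dependence on $Z_3$ and restricting to conditioning events of positive probability, are exactly the right ones and complete the argument.
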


\subsection{Multi-Encryption CPA Experiment}

Recall that the privacy-preserving protocol in
Section~\ref{sec:protocol} requires each of the $K$ sensors to send
multiple ciphertexts to other sensors. Thus, to prove
Theorem~\ref{thm:tea}, we need to extend the CPA experiment described
in Section~\ref{sec:crypto_assump} to the multi-sensor, multi-message
setting of $K$ sensors, each encrypting $M_{k}$ messages (plaintexts),
and $M = \sum_{k=1}^{K}M_{k}$:
\begin{enumerate}
\item The challenger runs $S(1^{n})$ to generate the pair of public
  key $\Phi_{k}^n\in\mathcal{E}_n$ and private key
  $\Psi_{k}^n\in\mathcal{D}_n$, for each $k\in\mathcal{K}$. The
  challenger gives the set of public keys
  $\boldsymbol{\Phi}_{\mathcal{K}}^n$ to the attacker.

\item The attacker generates two collections of challenge messages
  $\mathbf{R}^0_{\mathcal{K}}=[[R_{k,i}^{0}]_{i=1}^{M_{k}}]_{k=1}^{K}$
  and
  $\mathbf{R}^1_{\mathcal{K}}=[[R_{k,i}^{1}]_{i=1}^{M_{k}}]_{k=1}^{K}$,
  where $R_{k,i}^{0}$ and $R_{k,i}^{1}$ are i.i.d.
  $\sim u(\mathcal{N}_m)$ for all $k\in\mathcal{K}$ and
  $i=1,2,\ldots,M_{k}$. The attacker gives
  $\mathbf{R}^0_{\mathcal{K}}$ and $\mathbf{R}^1_{\mathcal{K}}$ to the
  challenger.

\item The challenger generates an independent random bit $B=\{0,1\}$
  with equal probabilities, computes the ciphertext collection
  $\mathbf{C}^B_{\mathcal{K}}=[[{E}(R_{k,i}^{B};\Phi_k^n)]_{i=1}^{M_{k}}]_{k=1}^{K}\in\mathcal{C}^M$,
  and gives it to the attacker.

\item The attacker uses the estimator
  $\hat{B}=\hat{B}(\mathbf{C}^B_{\mathcal{K}},
  \mathbf{R}^0_{\mathcal{K}},
  \mathbf{R}^1_{\mathcal{K}},\mathbf{\Phi}^n_{\mathcal{K}})$ to output
  her estimate of $B$, and reports $\hat{B}$ to the challenger.
\end{enumerate}
If $\hat{B}=B$, then the attacker wins the multi-encryption CPA
experiment. The following lemma expresses the winning probability
advantage of the multi-encryption CPA attacker in terms of that of a
CPA attacker:
\begin{lem}\label{lem:mcpa}
  For any PPT attacker in the multi-encryption CPA experiment
  described above, 
\begin{equation}
    \Pr(\hat{B}=B)-\frac{1}{2}\leq M\cdot F_{\mathrm{CPA}}(n).
\end{equation}
\end{lem}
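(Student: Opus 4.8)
The plan is to prove Lemma~\ref{lem:mcpa} by the standard hybrid (telescoping) argument that reduces the multi-encryption CPA experiment to $M$ copies of the single-message CPA experiment of Section~\ref{sec:crypto_assump}. First I would index the $M=\sum_{k}M_k$ plaintext slots linearly by $j=1,\ldots,M$, noting that slot $j$ belongs to some sensor $k(j)$ and is therefore encrypted under $\Phi_{k(j)}^n$. For $j=0,1,\ldots,M$, define the hybrid ciphertext collection $H_j$ in which slots $1,\ldots,j$ carry encryptions of the corresponding entries of $\mathbf{R}^1_{\mathcal{K}}$ while slots $j+1,\ldots,M$ carry encryptions of the corresponding entries of $\mathbf{R}^0_{\mathcal{K}}$; by construction $H_0$ is exactly what the attacker sees when $B=0$ and $H_M$ what she sees when $B=1$. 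Writing $p_j$ for the probability that $\mathscr{A}$ outputs $1$ when fed $H_j$, a one-line computation gives $\Pr(\hat{B}=B)-\tfrac12 = \tfrac12(p_M-p_0) = \sum_{j=1}^{M}\tfrac12(p_j-p_{j-1})$.

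Next, for each $j$ I would build a single-message CPA attacker $\mathscr{A}_j$ from the given multi-encryption attacker $\mathscr{A}$ so that its advantage equals the $j$th telescoping term $\tfrac12(p_j-p_{j-1})$. The attacker $\mathscr{A}_j$ receives the CPA challenge public key $\Phi^n$, plants it as $\Phi^n_{k(j)}$, generates every other key pair herself via $S(1^n)$, and forwards the whole public-key set to $\mathscr{A}$. She relays $\mathscr{A}$'s two uniform challenge collections, submits only the slot-$j$ pair $(R^0_j,R^1_j)$ to her own challenger, and receives $C=E(R^{B'}_j;\Phi^n)$. She then assembles the collection handed to $\mathscr{A}$ by encrypting the $\mathbf{R}^1$ entries in slots $1,\ldots,j-1$ and the $\mathbf{R}^0$ entries in slots $j+1,\ldots,M$ herself (she has all public keys), placing $C$ in slot $j$, and finally outputs $\mathscr{A}$'s bit as her guess $\hat{B}'$. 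When $B'=1$ the assembled collection is exactly $H_j$ and when $B'=0$ it is $H_{j-1}$, so $\Pr(\hat{B}'=B')=\tfrac12 p_j+\tfrac12(1-p_{j-1})=\tfrac12+\tfrac12(p_j-p_{j-1})$, giving $\mathscr{A}_j$ an advantage of $\tfrac12(p_j-p_{j-1})\le F_{\mathrm{CPA}}(n)$. Summing over $j$ then yields $\Pr(\hat{B}=B)-\tfrac12\le M\cdot F_{\mathrm{CPA}}(n)$, as claimed.

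The step I expect to be the main obstacle is verifying that $\mathscr{A}_j$'s emulation of a hybrid toward $\mathscr{A}$ is distributionally \emph{exact} rather than merely close, since only an exact simulation makes each per-hybrid gap a genuine single-CPA advantage. Two things must be checked. First, the key distribution must match: because Assumption~\ref{a:resource} has every sensor use the same scheme $\Pi$ with keys drawn from $S(1^n)$, the $K-1$ keys that $\mathscr{A}_j$ generates together with the one supplied by her challenger are jointly distributed exactly as the real multi-encryption keys. Second, the ciphertexts that $\mathscr{A}_j$ fills in herself must carry the correct joint law alongside the embedded challenge $C$, and this is precisely what Assumption~\ref{a:crypto} (independent encryptions) supplies, so the self-computed encryptions and $C$ together reproduce a real $H_j$ or $H_{j-1}$. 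I would also note that $\mathscr{A}_j$ runs $\mathscr{A}$ once plus $O(M)$ key generations and encryptions and is therefore PPT whenever $M$ is polynomial in $n$, so each $\mathscr{A}_j$ is a legitimate CPA attacker to which the bound $F_{\mathrm{CPA}}(n)$ applies. The telescoping cancellation is exactly what converts the $M$ indistinguishable single-slot swaps into the additive factor $M$ in the final bound, and no appeal to negligibility is needed here since the lemma is stated directly in terms of $F_{\mathrm{CPA}}(n)$.
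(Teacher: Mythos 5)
Your proof is correct and is essentially the paper's own argument: the paper's proof of Lemma~\ref{lem:mcpa} is a one-line appeal to the reduction of~\cite{kmcpa} invoked under Assumption~\ref{a:crypto}, which is exactly the per-slot hybrid/telescoping reduction you spell out. The only difference is that you make explicit the points the paper leaves to the citation --- the uniform slot-$j$ challenge pair matching the CPA experiment's message distribution, the exactness of the simulated hybrids via Assumption~\ref{a:crypto}, and the PPT-ness of each constructed attacker.
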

\begin{proof}
  Based on Assumption~\ref{a:crypto}, the reduction approach
  in~\cite{kmcpa} can be directly used here to establish the lemma.
\end{proof}

\subsection{Type Discrimination Attack (TDA)} \label{sec:tda}

The proof of Theorem~\ref{thm:tea} relies on a simpler version of the
TEA experiment in which the attacker tries to distinguish between a
pair of quantized square-root types instead. We refer to this simpler
experiment as the type discrimination attack (TDA) experiment. The steps
of the TDA experiment between the attacker $\mathscr{A}'$ and her challenger
$\mathscr{C}'$, as shown in Figure~\ref{fig:tda}, are as follows:
\begin{figure}
  \centering
  \includegraphics[width=0.46\textwidth]{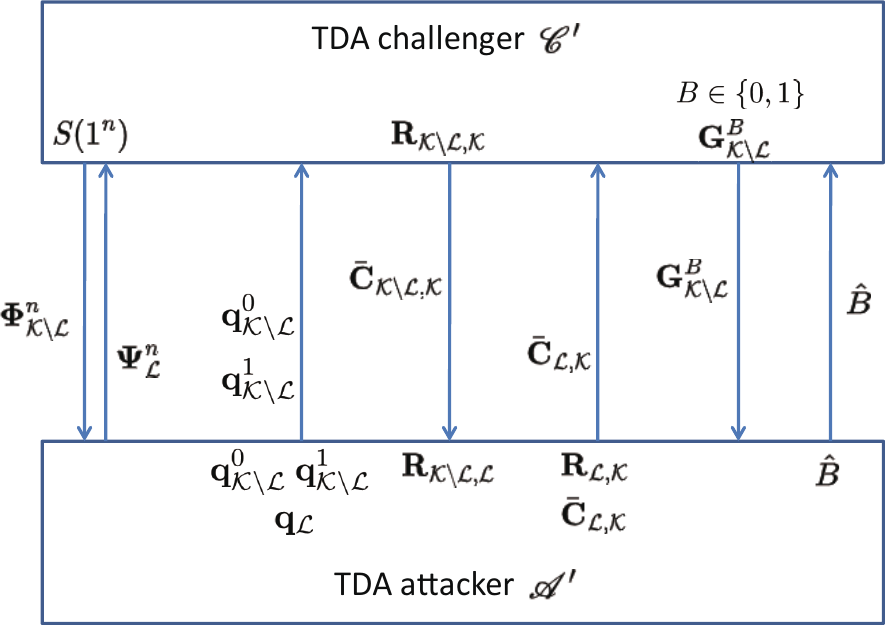}
  \caption{The TDA experiment.}
\label{fig:tda}
\end{figure}
\begin{enumerate}
\item Same as step 1) of the TEA experiment with $\mathscr{A}'$ and
  $\mathscr{C}'$ taking the roles of $\mathscr{A}$ and $\mathscr{C}$,
  respectively.

\item $\mathscr{A}'$ selects three collections of quantized square-root
  types: $\mathbf{q}_{\mathcal{L}}\in\mathcal{Q}_t^{L} (\mathcal{X})$,
  $\mathbf{q}^{0}_{\mathcal{K}\setminus\mathcal{L}}
  \in\mathcal{Q}^{K-L}_t (\mathcal{X})$, and
  $\mathbf{q}^{1}_{\mathcal{K}\setminus\mathcal{L}}
  \in\mathcal{Q}^{K-L}_t (\mathcal{X})$ satisfying
  $\mathbf{\Sigma}_{\mathbf{q}^{0}_{\mathcal{K}\setminus\mathcal{L}}}
  =\mathbf{\Sigma}_{\mathbf{q}^{1}_{\mathcal{K}\setminus\mathcal{L}}}$.
  $\mathscr{A}'$~gives
  $[\mathbf{q}^{0}_{\mathcal{K}\setminus\mathcal{L}},
  \mathbf{q}^{1}_{\mathcal{K}\setminus\mathcal{L}}]$ to $\mathscr{C}'$.

\item Same as step 3) of the TEA experiment with $\mathscr{A}'$ and
  $\mathscr{C}'$ taking the roles of $\mathscr{A}$ and $\mathscr{C}$,
  respectively. 

\item Same as step 4) of the TEA experiment with $\mathscr{A}'$ and
  $\mathscr{C}'$ taking the roles of $\mathscr{A}$ and $\mathscr{C}$,
  respectively.

\item $\mathscr{C}'$ calculates
  $\mathbf{R}_{\mathcal{L},\mathcal{K}\setminus\mathcal{L}} =
  D(\mathbf{C}_{\mathcal{L},\mathcal{K}\setminus\mathcal{L}} ;
  \mathbf{\Psi}^n_{\mathcal{K}\setminus\mathcal{L}})$, generates an
  independent random bit $B\in\{0,1\}$ with equal probabilities,
  computes
  $\mathbf{G}^B_{\mathcal{K}\setminus
    \mathcal{L}}=\mathbf{q}^{B}_{\mathcal{K}\backslash
    \mathcal{L}}\oplus\mathbf{\Sigma}_{\mathbf{R}_{\mathcal{K},
      \mathcal{K}\setminus\mathcal{L}}}$, and gives
  $\mathbf{G}^B_{\mathcal{K}\setminus \mathcal{L}}$ to $\mathscr{A}'$.

\item $\mathscr{A}'$ estimates $B$ using the estimator
  $\hat{B} = \hat{B} ( \mathbf{q}^0_{\mathcal{K}\setminus\mathcal{L}},
  \mathbf{q}^1_{\mathcal{K}\setminus \mathcal{L}},
  \mathbf{q}_{\mathcal{L}}, 
  {\mathbf{\bar C}}_{\mathcal{K},\mathcal{K}},
  \mathbf{G}^B_{\mathcal{K}\setminus \mathcal{L}},
  \mathbf{R}_{\mathcal{L},\mathcal{K}},
  \mathbf{R}_{\mathcal{K}\setminus \mathcal{L},\mathcal{L}},
  \mathbf{\Phi}^n_{\mathcal{K}},\mathbf{\Psi}^n_{\mathcal{L}} )$, and
  reports $\hat{B}$ to $\mathscr{C}'$.
\end{enumerate}
If $\hat{B}=B$, it is said that $\mathscr{A}'$ wins the TDA experiment. The following
lemma expresses the winning probability advantage of the TDA attacker
in terms of that of a CPA attacker:
\begin{lem}\label{lem:tda}
  Suppose $L\leq K-2$.  For any PPT attacker in the TDA experiment
  described above, 
  $\mathbf{q}_{\mathcal{L}} \in \mathcal{Q}^{L}_t (\mathcal{X})$, and 
  $\mathbf{q}^{0}_{\mathcal{K}\setminus\mathcal{L}},
  \mathbf{q}^{1}_{\mathcal{K}\setminus\mathcal{L}} \in \mathcal{Q}^{K-L}_t (\mathcal{X})$
  satisfying
  $\mathbf{\Sigma}_{\mathbf{q}^0_{\mathcal{K}\setminus\mathcal{L}}}
  =\mathbf{\Sigma}_{\mathbf{q}^1_{\mathcal{K}\setminus\mathcal{L}}}$,
\begin{align}
& \hspace{-5pt}
\Pr(\hat{B}=B \mid \mathbf{Q}^{0}_{\mathcal{K}\setminus\mathcal{L}}
=\mathbf{q}^{0}_{\mathcal{K}\setminus\mathcal{L}},
\mathbf{Q}^{1}_{\mathcal{K}\setminus\mathcal{L}}=\mathbf{q}^{1}_{\mathcal{K}\setminus\mathcal{L}},
\mathbf{Q}_{\mathcal{L}}=\mathbf{q}_{\mathcal{L}})
\notag \\
&\leq \frac{1}{2}  + 4(K-L-1)|\mathcal{X}|\cdot F_{\mathrm{CPA}}(n).
\label{equ:tda}
\end{align}
\end{lem}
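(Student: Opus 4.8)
The plan is to bound the conditional winning probability through a three-step chain of hybrid experiments that interpolates between the world in which $\mathscr{C}'$ obfuscates $\mathbf{q}^0_{\mathcal{K}\setminus\mathcal{L}}$ and the world in which it obfuscates $\mathbf{q}^1_{\mathcal{K}\setminus\mathcal{L}}$, realizing the ``two CPA experiments and a third one involving only the secret random numbers'' decomposition. Since $L\leq K-2$, there are at least two honest sensors, so I fix two reference indices $\mathcal{I}=\{1,2\}\subseteq\mathcal{K}\setminus\mathcal{L}$. The key structural observation is that the only information about the honest-to-honest random numbers $\mathbf{R}_{\mathcal{I},\mathcal{K}\setminus\mathcal{L}}$ that ever reaches $\mathscr{A}'$ is carried by the ciphertexts $C_{k,l}(x)=E(R_{k,l}(x);\Phi_l^n)$ with $k\in\mathcal{I}$, $l\in\mathcal{K}\setminus\mathcal{L}$, $l\neq k$, all encrypted under honest public keys. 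These are the $M=2(K-L-1)|\mathcal{X}|$ messages on which the reductions act, while the numbers $\mathbf{R}_{\mathcal{I},\mathcal{L}}$ that sensors $1,2$ send to adversarial sensors are decrypted by $\mathscr{A}'$ and hence known.

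The two outer steps are cryptographic. I would introduce a \emph{decoupled} hybrid in which $\mathscr{C}'$ encrypts independent fresh $u(\mathcal{N}_m)$ dummies in place of the reference plaintexts, while still forming the obfuscated messages $\mathbf{G}^B_{\mathcal{K}\setminus\mathcal{L}}=\mathbf{q}^B_{\mathcal{K}\setminus\mathcal{L}}\oplus\boldsymbol{\Sigma}_{\mathbf{R}_{\mathcal{K},\mathcal{K}\setminus\mathcal{L}}}$ from the \emph{true} random numbers according to~\eqref{equ:Gk}. Any PPT distinguisher separating the real experiment from this hybrid can be recast as a multi-encryption CPA attacker on the $M$ reference ciphertexts, so by Lemma~\ref{lem:mcpa} its advantage is at most $M\cdot F_{\mathrm{CPA}}(n)$. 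Here Lemmas~\ref{lem:useful3} and~\ref{lem:useful4} do the technical work: they certify that the constructed CPA distinguisher is PPT because it regenerates every quantity derivable from its inputs (the decrypted $\mathbf{R}_{\mathcal{K}\setminus\mathcal{L},\mathcal{L}}$, the messages $\mathbf{G}_{\mathcal{L}}$ computed via~\eqref{equ:Rkk} and~\eqref{equ:Gk}), and that these regenerated quantities are conditionally independent of the embedded CPA bit, so embedding preserves the attacker's conditional success probability. The same reduction is applied symmetrically on the other side, at $\mathbf{q}^1_{\mathcal{K}\setminus\mathcal{L}}$.

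The middle step is information-theoretic. Once the reference ciphertexts are replaced by dummies, $\mathscr{A}'$ has no information about $\mathbf{R}_{\mathcal{I},\mathcal{K}\setminus\mathcal{L}}$ beyond what is fixed by $\mathbf{R}_{\mathcal{I},\mathcal{L}}$, so Lemma~\ref{lem:useful2} applies and, conditioned on the attacker's entire view, the reference contribution $\boldsymbol{\Sigma}_{\mathbf{R}_{\mathcal{I},\mathcal{K}\setminus\mathcal{L}}}$ to the mask is uniform over the affine subspace of $\mathcal{N}_m^{(K-L)|\mathcal{X}|}$ cut out by the delta factor in~\eqref{equ:lem2}, i.e.\ over a coset of the sum-zero subspace. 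Writing $\mathbf{G}^B_{\mathcal{K}\setminus\mathcal{L}}=\mathbf{q}^B_{\mathcal{K}\setminus\mathcal{L}}\oplus\boldsymbol{\Sigma}_{\mathbf{R}_{\mathcal{I},\mathcal{K}\setminus\mathcal{L}}}\oplus(\text{remaining mask})$ and invoking the hypothesis $\boldsymbol{\Sigma}_{\mathbf{q}^0_{\mathcal{K}\setminus\mathcal{L}}}=\boldsymbol{\Sigma}_{\mathbf{q}^1_{\mathcal{K}\setminus\mathcal{L}}}$, the difference $\mathbf{q}^0_{\mathcal{K}\setminus\mathcal{L}}\ominus\mathbf{q}^1_{\mathcal{K}\setminus\mathcal{L}}$ lies in that sum-zero subspace; adding the uniform reference contribution therefore makes the conditional laws of $\mathbf{G}^B_{\mathcal{K}\setminus\mathcal{L}}$ for $B=0$ and $B=1$ coincide. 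Hence in the decoupled hybrid the attacker's view is statistically independent of $B$ and her conditional winning probability is exactly $\tfrac12$.

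Assembling the chain and summing the penalties gives the bound: each of the two multi-encryption CPA transitions contributes a distinguishing advantage of at most $2M\cdot F_{\mathrm{CPA}}(n)$ (the standard factor $2$ relating a distinguishing advantage to a CPA winning advantage), and with $M=2(K-L-1)|\mathcal{X}|$ these combine to yield $\Pr(\hat{B}=B\mid\cdot)\leq\tfrac12+4(K-L-1)|\mathcal{X}|\cdot F_{\mathrm{CPA}}(n)$, which is~\eqref{equ:tda}. I expect the main obstacle to be the bookkeeping of the reductions rather than the underlying idea: one must verify that the CPA distinguisher can faithfully simulate all of $\mathscr{C}'$'s remaining behaviour (decryptions with honest private keys and formation of $\mathbf{G}^B_{\mathcal{K}\setminus\mathcal{L}}$) using only the ciphertext challenge and the public keys, and, most delicately, that the decoupled reference contribution is uniform conditioned on the \emph{full} attacker view and not merely marginally — which is precisely the conditional statement that Lemma~\ref{lem:useful2} is designed to supply, with Lemmas~\ref{lem:useful3} and~\ref{lem:useful4} transferring the relevant success probabilities.
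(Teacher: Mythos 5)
Your proposal is correct and follows essentially the same route as the paper's proof: your two CPA transitions (one for each value of $B$) are exactly the paper's first two constructed experiments, each bounded via Lemma~\ref{lem:mcpa} after the Lemma~\ref{lem:useful3}/Lemma~\ref{lem:useful4} bookkeeping, and your decoupled hybrid---where Lemma~\ref{lem:useful2} makes the conditional laws of $\mathbf{G}^0_{\mathcal{K}\setminus\mathcal{L}}$ and $\mathbf{G}^1_{\mathcal{K}\setminus\mathcal{L}}$ coincide given the attacker's full view, forcing a winning probability of exactly $\tfrac{1}{2}$---is the paper's third experiment. The only cosmetic differences are that the paper encrypts all-zero strings rather than fresh uniform dummies in the hybrid, and that it sums the three one-sided winning-probability inequalities directly instead of converting to distinguishing advantages with the factor-of-two relation, both routes yielding the same final bound $\tfrac{1}{2}+4(K-L-1)|\mathcal{X}|\cdot F_{\mathrm{CPA}}(n)$.
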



\begin{proof}
  The main idea of the proof is to use the TDA attacker $\mathscr{A}'$
  to construct three attackers in three new experiments. The first two
  experiments can be reduced to multi-encryption CPA experiments by way
  of Lemma~\ref{lem:useful3}. Thus, Lemma~\ref{lem:mcpa} gives the
  probability advantages of the attackers in these two experiments. On
  the other hand, the probability advantage of the attacker winning
  the third experiment can be analyzed using
  Lemma~\ref{lem:useful2}. Then, the probability advantage of
  $\mathscr{A}'$ winning her TDA experiment can be derived from the
  probability advantages of the new attackers winning their respective
  experiments.  For convenience, we write $\mathcal{I}=\{1,2\}$ and
  $\mathcal{J}=\mathcal{K}\setminus(\mathcal{I} \cup \mathcal{L})$
throughout the rest of the proof.

  As shown in Figure~\ref{fig:tdaa1}, we construct the first
  experiment with attacker $\mathscr{A}_{1}$ and challenger
  $\mathscr{C}_{1}$ as follows:
\begin{figure}
  \centering
  \includegraphics[width=0.47\textwidth]{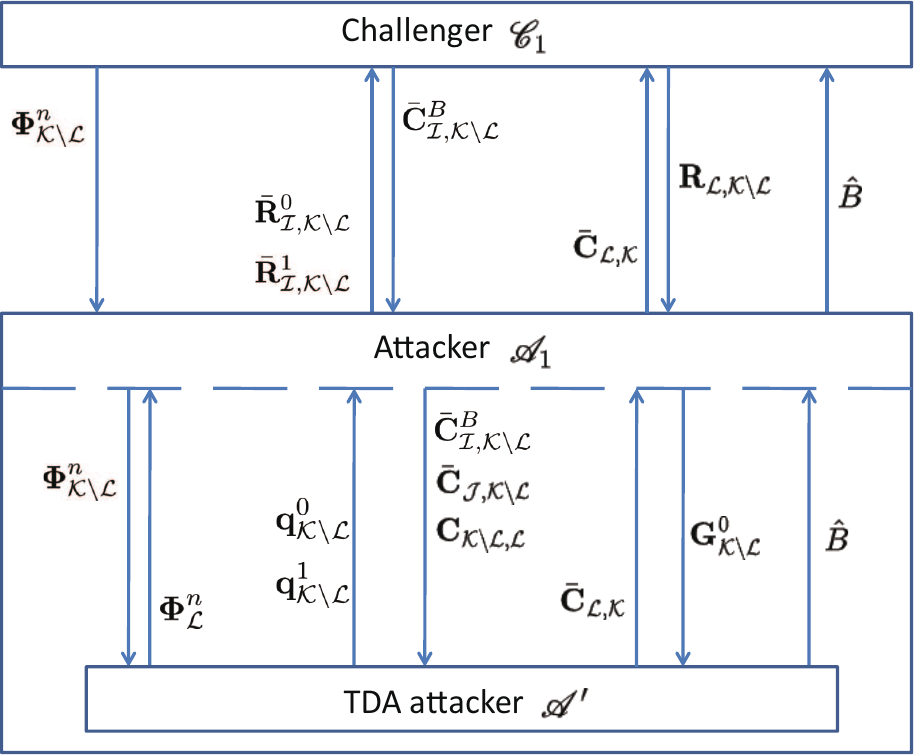}\\
\caption{The first constructed experiment for proving Lemma \ref{lem:tda}.}\label{fig:tdaa1}
\end{figure}

\begin{enumerate}

\item $\mathscr{C}_{1}$ runs $S(1^{n})$ to get the key pair collection
  $(\boldsymbol{\Phi}^n_{\mathcal{K} \setminus \mathcal{L}},
  \boldsymbol{\Psi}^n_{\mathcal{K} \setminus \mathcal{L}})$, and gives
  $\boldsymbol{\Phi}^n_{\mathcal{K} \setminus \mathcal{L}}$ to
  $\mathscr{A}_1$, who passes it on to $\mathscr{A}'$. Then,
  $\mathscr{A}'$ generates the set of key pairs
  $(\boldsymbol{\Phi}_{\mathcal{L}}^n,
  \boldsymbol{\Psi}_{\mathcal{L}}^n)$ according to~\eqref{equ:key_fn},
  and gives $\boldsymbol{\Phi}_{\mathcal{L}}^n$ to $\mathscr{A}_1$. 

\item $\mathscr{A}'$ selects $\mathbf{q}_{\mathcal{L}}$,
  $\mathbf{q}^{0}_{\mathcal{K}\setminus \mathcal{L}}$, and
  $\mathbf{q}^{1}_{\mathcal{K}\setminus \mathcal{L}}$ as in step 2) of
  the TDA experiment, and then passes
  $[\mathbf{q}^{0}_{\mathcal{K}\setminus \mathcal{L}},
  \mathbf{q}^{1}_{\mathcal{K}\setminus \mathcal{L}}]$ to
  $\mathscr{A}_1$.

\item $\mathscr{A}_{1}$ generates
  $\mathbf{R}_{\mathcal{K}\setminus\mathcal{L},\mathcal{K}}$ with
  $\mathbf{R}_{k,l}$
  i.i.d. $\sim u(\mathcal{N}_m^{|\mathcal{X}|})$ for
  $k\neq l$, and
  $\mathbf{R}_{k,k}=\ominus\bigoplus_{l\in\mathcal{K}\setminus
    k}\mathbf{R}_{k,l}$ according to~\eqref{equ:Rkk}. Then,
  $\mathscr{A}_{1}$ sets
  ${\mathbf{\bar R}}^0_{\mathcal{I},\mathcal{K}\setminus\mathcal{L}}
  ={\mathbf{\bar R}}_{\mathcal{I},\mathcal{K}\setminus\mathcal{L}}$
  and
  ${\mathbf{\bar R}}^1_{\mathcal{I},\mathcal{K}\setminus\mathcal{L}} =
  0^{2(K-L-1)|\mathcal{X}|}$, and gives these two collections to
  $\mathscr{C}_1$.

\item $\mathscr{C}_{1}$ selects an independent bit $B\in\{0,1\}$ with
  equal probabilities, computes
  ${\mathbf{\bar C}}^B_{\mathcal{I},\mathcal{K}\setminus\mathcal{L}}
  =E({\mathbf{\bar R}}^B_{\mathcal{I},\mathcal{K}\setminus\mathcal{L}}
  ; \boldsymbol{\Phi}^n_{\mathcal{K}\setminus\mathcal{L}})$, and gives
  ${\mathbf{\bar C}}^B_{\mathcal{I},\mathcal{K}\setminus\mathcal{L}}$
  to $\mathscr{A}_{1}$.

\item $\mathscr{A}_{1}$ computes
  ${\mathbf{\bar C}}_{\mathcal{J},\mathcal{K}\setminus\mathcal{L}}
  =E({\mathbf{\bar R}}_{\mathcal{J},\mathcal{K}\setminus\mathcal{L}};
  \boldsymbol{\Phi}^n_{\mathcal{K}\setminus\mathcal{L}})$,
  $\mathbf{C}_{\mathcal{K}\setminus\mathcal{L},\mathcal{L}}
  =E(\mathbf{R}_{\mathcal{K}\setminus\mathcal{L},\mathcal{L}} ;
  \boldsymbol{\Phi}^n_{\mathcal{L}})$, and gives
  ${\mathbf{\bar C}}^B_{\mathcal{I},\mathcal{K}\setminus\mathcal{L}}$,
  ${\mathbf{\bar C}}_{\mathcal{J},\mathcal{K}\setminus\mathcal{L}}$,
  and $\mathbf{C}_{\mathcal{K}\setminus\mathcal{L},\mathcal{L}}$ to
  $\mathscr{A}'$.

\item $\mathscr{A}'$ follows step 4) of the TDA experiment to decrypt
  $\mathbf{C}_{\mathcal{K}\setminus\mathcal{L},\mathcal{L}}$, generate
  $\mathbf{R}_{\mathcal{L}, \mathcal{K}}$, encrypt to obtain
  $\mathbf{\bar C}_{\mathcal{L}, \mathcal{K}}$, and send
  $\mathbf{\bar C}_{\mathcal{L}, \mathcal{K}}$ to $\mathscr{A}_1$, who
  then passes on
  $\mathbf{C}_{\mathcal{L}, \mathcal{K} \setminus\mathcal{L}}$ to
  $\mathscr{C}_{1}$.

\item $\mathscr{C}_{1}$ calculates
  $\mathbf{R}_{\mathcal{L}, \mathcal{K} \setminus\mathcal{L}} =
  D(\mathbf{C}_{\mathcal{L}, \mathcal{K} \setminus\mathcal{L}} ;
  \mathbf{\Psi}^n_{\mathcal{K} \setminus\mathcal{L}})$ and sends
  $\mathbf{R}_{\mathcal{L}, \mathcal{K} \setminus\mathcal{L}}$ to
  $\mathscr{A}_1$.

\item $\mathscr{A}_1$ computes
  $\mathbf{G}^0_{\mathcal{K}\setminus\mathcal{L}} =
  \mathbf{q}^{0}_{\mathcal{K}\setminus\mathcal{L}} \oplus
  \mathbf{\Sigma}_{\mathbf{R}_{\mathcal{K},\mathcal{K}\setminus
      \mathcal{L}}}$, and gives
  $\mathbf{G}^0_{\mathcal{K}\setminus\mathcal{L}}$ to $\mathscr{A}'$.

\item $\mathscr{A}'$ uses
  $\hat{B}=\hat{B}(\mathbf{q}^0_{\mathcal{K}\setminus\mathcal{L}},
  \mathbf{q}^1_{\mathcal{K}\setminus\mathcal{L}},
  \mathbf{q}_{\mathcal{L}}, [ \mathbf{\bar
    C}^B_{\mathcal{I},\mathcal{K}\setminus\mathcal{L}},
  \mathbf{\bar C}_{\mathcal{J},\mathcal{K}\setminus\mathcal{L}},\\ \mathbf{C}_{\mathcal{K}\setminus\mathcal{L},\mathcal{L}},
  \mathbf{\bar C}_{\mathcal{L},\mathcal{K}}],  \mathbf{G}^0_{\mathcal{K}\setminus\mathcal{L}},
  \mathbf{R}_{\mathcal{L},\mathcal{K}},\mathbf{R}_{\mathcal{K}\setminus
    \mathcal{L},\mathcal{L}},
  \boldsymbol{\Phi}^n_{\mathcal{K}},\boldsymbol{\Psi}^n_{\mathcal{L}})$
  in step 6) of the TDA experiment with the input arguments as
  specified to estimate $B$, and reports $\hat{B}$ to
  $\mathscr{A}_{1}$, who passes it on to $\mathscr{C}_{1}$.
\end{enumerate}

We will use Lemmas~\ref{lem:useful3} and~\ref{lem:useful4} below. To
match the notation in the lemmas, let
$\mathbf{Q}^0_{\mathcal{K}\setminus\mathcal{L}} \sim
\delta_{\mathbf{q}^0_{\mathcal{K}\setminus\mathcal{L}}}$,
$\mathbf{Q}^1_{\mathcal{K}\setminus\mathcal{L}} \sim
\delta_{\mathbf{q}^1_{\mathcal{K}\setminus\mathcal{L}}}$,
$\mathbf{Q}_{\mathcal{L}} \sim \delta_{\mathbf{q}_{\mathcal{L}}}$,
$Y = {\mathbf{\bar
    C}}^B_{\mathcal{I},\mathcal{K}\setminus\mathcal{L}}$,
$U = {\mathbf{\bar R}}_{\mathcal{I},\mathcal{K}\setminus\mathcal{L}}$,
$V=\boldsymbol{\Phi}^n_{\mathcal{K}\setminus\mathcal{L}}$,
$Z_0= [\mathbf{Q}^0_{\mathcal{K}\setminus\mathcal{L}},
\mathbf{Q}^1_{\mathcal{K}\setminus\mathcal{L}},
\mathbf{Q}_{\mathcal{L}}, {\mathbf{\bar
    C}}_{\mathcal{J},\mathcal{K}\setminus\mathcal{L}},
\mathbf{C}_{\mathcal{K}\setminus\mathcal{L},\mathcal{L}},
\mathbf{R}_{\mathcal{K}\setminus\mathcal{L},\mathcal{L}},
\boldsymbol{\Phi}^n_{\mathcal{L}},
\boldsymbol{\Psi}^n_{\mathcal{L}}]$,
$Z_1 = [Z_0, {\mathbf{\bar
    R}}_{\mathcal{J},\mathcal{K}\setminus\mathcal{L}}]$,
$Z_2 = [Y,U,V]$, and
$W = [{\mathbf{\bar C}}_{\mathcal{L},\mathcal{K}},
\mathbf{G}^0_{\mathcal{K}\setminus\mathcal{L}}, 
\mathbf{R}_{\mathcal{L},\mathcal{K}}, Z_0]$.

Note that
${\mathbf{\bar C}}_{\mathcal{L},\mathcal{K}} = E( {\mathbf{\bar
    R}}_{\mathcal{L},\mathcal{K}};
\boldsymbol{\Phi}^n_{\mathcal{K}})$,
$\mathbf{G}^0_{\mathcal{K}\setminus\mathcal{L}}$ is a function of
$[\mathbf{Q}^0_{\mathcal{K}\setminus\mathcal{L}}, {\mathbf{\bar
    R}}_{\mathcal{K}\setminus\mathcal{L},\mathcal{K}},
\mathbf{R}_{\mathcal{L},\mathcal{K}\setminus\mathcal{L}}]$, and
$\mathbf{R}_{\mathcal{L},\mathcal{K}}$ is a function of
$[ \mathbf{Q}_{\mathcal{L}},
\mathbf{\Phi}^n_{\mathcal{K}},\mathbf{\Psi}^n_{\mathcal{L}},
{\mathbf{\bar C}}^B_{\mathcal{K}\setminus \mathcal{L},\mathcal{K}},
\mathbf{R}_{\mathcal{K}\setminus \mathcal{L},\mathcal{L}}]$
(see~\eqref{equ:RLK_fn}). Hence, $W$ can be expressed as a function
of $[Z_1, Z_2]$. Since the functions $S$, $E$, $D$, and
$\mathbf{R}_{\mathcal{L},\mathcal{K}}$ are all PPT, the generation of
$W$ from $Z_1$ and $Z_2$ is also PPT. According to
Lemma~\ref{lem:useful4}, if $Z_1$ is conditionally independent of $B$
given $Z_2$, then $W$ will also be conditionally independent of $B$
given $Z_2$. The conditional independence between $Z_1$ and $B$ given
$Z_2$ is established by~\eqref{equ:a1-1},
where the first equality is due to Assumption~\ref{a:crypto}, and the
second equality results because
$[{\mathbf{\bar{C}}}^B_{\mathcal{I},\mathcal{K}\setminus\mathcal{L}},B]$ is
conditionally independent of
$[{\mathbf{\bar R}}_{\mathcal{J},\mathcal{K}\setminus\mathcal{L}},
\mathbf{R}_{\mathcal{K}\setminus\mathcal{L},\mathcal{L}},
\boldsymbol{\Phi}^n_{\mathcal{L}},\boldsymbol{\Psi}^n_{\mathcal{L}}]$
given
$[{\mathbf{\bar R}}_{\mathcal{I},\mathcal{K}\setminus\mathcal{L}},
\boldsymbol{\Phi}^n_{\mathcal{K}\setminus\mathcal{L}}]$.
\begin{figure*}[!hbtp]
\begin{align}
&\hspace{-5pt}
p_{\mathbf{Q}^0_{\mathcal{K}\setminus\mathcal{L}},
\mathbf{Q}^1_{\mathcal{K}\setminus\mathcal{L}},
\mathbf{Q}_{\mathcal{L}},
{\mathbf{\bar C}}_{\mathcal{J},\mathcal{K}\setminus\mathcal{L}},
\mathbf{C}_{\mathcal{K}\setminus\mathcal{L},\mathcal{L}},
{\mathbf{\bar R}}_{\mathcal{J},\mathcal{K}\setminus\mathcal{L}},
\mathbf{R}_{\mathcal{K}\setminus\mathcal{L},\mathcal{L}},
\boldsymbol{\Phi}^n_{\mathcal{L}},\boldsymbol{\Psi}^n_{\mathcal{L}}
\mid {\mathbf{\bar C}}^B_{\mathcal{I},\mathcal{K}\setminus\mathcal{L}},
{\mathbf{\bar R}}_{\mathcal{I},\mathcal{K}\setminus\mathcal{L}},
\boldsymbol{\Phi}^n_{\mathcal{K}\setminus\mathcal{L}},B}
(\mathbf{q}'_{\mathcal{K}\setminus\mathcal{L}},
\mathbf{q}''_{\mathcal{K}\setminus\mathcal{L}},
\mathbf{q}'_{\mathcal{L}},
\bar{\mathbf{c}}_{\mathcal{J},\mathcal{K}\setminus\mathcal{L}},
\mathbf{c}_{\mathcal{K}\setminus\mathcal{L},\mathcal{L}},
\bar{\mathbf{r}}_{\mathcal{J},\mathcal{K}\setminus\mathcal{L}},
\mathbf{r}_{\mathcal{K}\setminus\mathcal{L},\mathcal{L}},
\notag \\
& \hspace{20pt}
\boldsymbol{\phi}^n_{\mathcal{L}},
\boldsymbol{\psi}^n_{\mathcal{L}}
\mid\bar{\mathbf{c}}_{\mathcal{I},\mathcal{K}\setminus\mathcal{L}},
\bar{\mathbf{r}}_{\mathcal{I},\mathcal{K}\setminus\mathcal{L}},
\boldsymbol{\phi}^n_{\mathcal{K}\setminus\mathcal{L}},b)
\notag \\
&=
\delta_{\mathbf{q}^0_{\mathcal{K}\setminus\mathcal{L}}}
(\mathbf{q}'_{\mathcal{K}\setminus\mathcal{L}})
\cdot \delta_{\mathbf{q}^1_{\mathcal{K}\setminus\mathcal{L}}}
(\mathbf{q}''_{\mathcal{K}\setminus\mathcal{L}})
\cdot \delta_{\mathbf{q}_{\mathcal{L}}}(\mathbf{q}'_{\mathcal{L}})
\cdot p_{{\mathbf{\bar C}}_{\mathcal{J},\mathcal{K}\setminus\mathcal{L}}\mid
{\mathbf{\bar R}}_{\mathcal{J},\mathcal{K}\setminus\mathcal{L}},
\boldsymbol{\Phi}^n_{\mathcal{K}\setminus\mathcal{L}}}
({\mathbf{\bar c}}_{\mathcal{J},\mathcal{K}\setminus\mathcal{L}}\mid
{\mathbf{\bar r}}_{\mathcal{J},\mathcal{K}\setminus\mathcal{L}},
\boldsymbol{\phi}^n_{\mathcal{K}\setminus\mathcal{L}})
\cdot p_{\mathbf{C}_{\mathcal{K}\setminus\mathcal{L},\mathcal{L}}
\mid \mathbf{R}_{\mathcal{K}\setminus\mathcal{L},\mathcal{L}},\boldsymbol{\Phi}^n_{\mathcal{L}}}
(\mathbf{c}_{\mathcal{K}\setminus\mathcal{L},\mathcal{L}} \mid
\notag \\
& \hspace{20pt}
\mathbf{r}_{\mathcal{K}\setminus\mathcal{L},\mathcal{L}},
\boldsymbol{\phi}^n_{\mathcal{L}})
\cdot p_{\bar{\mathbf{R}}_{\mathcal{J},\mathcal{K}\setminus\mathcal{L}},
\mathbf{R}_{\mathcal{K}\setminus\mathcal{L},\mathcal{L}},
\boldsymbol{\Phi}^n_{\mathcal{L}},\boldsymbol{\Psi}^n_{\mathcal{L}}
\mid{\mathbf{\bar C}}^B_{\mathcal{I},\mathcal{K}\setminus\mathcal{L}},
{\mathbf{\bar R}}_{\mathcal{I},\mathcal{K}\setminus\mathcal{L}},
\boldsymbol{\Phi}^n_{\mathcal{K}\setminus\mathcal{L}},B}
({\mathbf{\bar r}}_{\mathcal{J},\mathcal{K}\setminus\mathcal{L}},
\mathbf{r}_{\mathcal{K}\setminus\mathcal{L},\mathcal{L}},
\boldsymbol{\phi}^n_{\mathcal{L}},\boldsymbol{\psi}^n_{\mathcal{L}}
\mid {\mathbf{\bar c}}_{\mathcal{I},\mathcal{K}\setminus\mathcal{L}},
{\mathbf{\bar r}}_{\mathcal{I},\mathcal{K}\setminus\mathcal{L}},
\boldsymbol{\phi}^n_{\mathcal{K}\setminus\mathcal{L}},b)
\notag \\
&=
\delta_{\mathbf{q}^0_{\mathcal{K}\setminus\mathcal{L}}}
(\mathbf{q}'_{\mathcal{K}\setminus\mathcal{L}})
\cdot \delta_{\mathbf{q}^1_{\mathcal{K}\setminus\mathcal{L}}}
(\mathbf{q}''_{\mathcal{K}\setminus\mathcal{L}})
\cdot \delta_{\mathbf{q}_{\mathcal{L}}}(\mathbf{q}'_{\mathcal{L}})
\cdot p_{{\mathbf{\bar C}}_{\mathcal{J},\mathcal{K}\setminus\mathcal{L}}\mid
{\mathbf{\bar R}}_{\mathcal{J},\mathcal{K}\setminus\mathcal{L}},
\boldsymbol{\Phi}^n_{\mathcal{K}\setminus\mathcal{L}}}
({\mathbf{\bar c}}_{\mathcal{J},\mathcal{K}\setminus\mathcal{L}}\mid
{\mathbf{\bar r}}_{\mathcal{J},\mathcal{K}\setminus\mathcal{L}},
\boldsymbol{\phi}^n_{\mathcal{K}\setminus\mathcal{L}})
\cdot p_{\mathbf{C}_{\mathcal{K}\setminus\mathcal{L},\mathcal{L}}
\mid \mathbf{R}_{\mathcal{K}\setminus\mathcal{L},\mathcal{L}}, \boldsymbol{\Phi}^n_{\mathcal{L}}}
(\mathbf{c}_{\mathcal{K}\setminus\mathcal{L},\mathcal{L}} \mid
\notag \\
& \hspace{20pt}
\mathbf{r}_{\mathcal{K}\setminus\mathcal{L},\mathcal{L}},
\boldsymbol{\phi}^n_{\mathcal{L}})
\cdot p_{{\mathbf{\bar R}}_{\mathcal{J},\mathcal{K}\setminus\mathcal{L}},
\mathbf{R}_{\mathcal{K}\setminus\mathcal{L},\mathcal{L}},
\boldsymbol{\Phi}^n_{\mathcal{L}},\boldsymbol{\Psi}^n_{\mathcal{L}}
\mid {\mathbf{\bar R}}_{\mathcal{I},\mathcal{K}\setminus\mathcal{L}},
\boldsymbol{\Phi}^n_{\mathcal{K}\setminus\mathcal{L}}}
({\mathbf{\bar r}}_{\mathcal{J},\mathcal{K}\setminus\mathcal{L}},
\mathbf{r}_{\mathcal{K}\setminus\mathcal{L},\mathcal{L}},
\boldsymbol{\phi}^n_{\mathcal{L}},\boldsymbol{\psi}^n_{\mathcal{L}}
\mid {\mathbf{\bar r}}_{\mathcal{I},\mathcal{K}\setminus\mathcal{L}},
\boldsymbol{\phi}^n_{\mathcal{K}\setminus\mathcal{L}}).
\label{equ:a1-1}
\end{align}
\end{figure*}

Now, we can apply Lemma~\ref{lem:useful3} with $Y$, $U$, $V$, $B$, and
$W$ as specified above to get a reduced PPT estimator
$\hat{B}_0(Y,U,V)$ satisfying
\begin{align}
& \hspace{-5pt}
\Pr(\hat{B}(Y,V,W)=B\mid {\mathbf{\bar R}}_{\mathcal{I},\mathcal{K}\setminus\mathcal{L}}
=\mathbf{\bar r}_{\mathcal{I},{\mathcal{K}}\setminus\mathcal{L}},
\boldsymbol{\Phi}^n_{\mathcal{K}\setminus\mathcal{L}}=\boldsymbol{\phi}^n_{\mathcal{K}\setminus\mathcal{L}},
\notag \\
& \hspace{20pt}
\mathbf{Q}^0_{\mathcal{K}\setminus\mathcal{L}}=\mathbf{q}^0_{\mathcal{K}\setminus\mathcal{L}},
\mathbf{Q}^1_{\mathcal{K}\setminus\mathcal{L}}=\mathbf{q}^1_{\mathcal{K}\setminus\mathcal{L}},
\mathbf{Q}_{\mathcal{L}}=\mathbf{q}_{\mathcal{L}}) 
\notag \\
&=
\Pr(\hat{B}_1(Y,U,V)=B \mid {\mathbf{\bar R}}^0_{\mathcal{I},\mathcal{K}
\setminus\mathcal{L}}
=\mathbf{\bar r}_{\mathcal{I},\mathcal{K}\setminus\mathcal{L}},
\notag \\
& \hspace{20pt}
{\mathbf{\bar R}}^1_{\mathcal{I},\mathcal{K}\setminus
\mathcal{L}}=0^{2(K-L-1)|\mathcal{X}|},
\boldsymbol{\Phi}^n_{\mathcal{K}\setminus\mathcal{L}}=\boldsymbol{\phi}^n_{\mathcal{K}\setminus\mathcal{L}}),
\label{equ:reduce1}
\end{align}
where the additional conditioning on
$\mathbf{Q}^0_{\mathcal{K}\setminus\mathcal{L}}$,
$\mathbf{Q}^1_{\mathcal{K}\setminus\mathcal{L}}$,
$\mathbf{Q}_{\mathcal{L}}$, and
${\mathbf{\bar R}}^1_{\mathcal{I},\mathcal{K}\setminus\mathcal{L}}$
applies because of the triviality of those random variables.  Let
$\mathbb{Q}(\mathbf{q}^0_{\mathcal{K}\setminus\mathcal{L}},
\mathbf{q}^1_{\mathcal{K}\setminus\mathcal{L}},\mathbf{q}_{\mathcal{L}})$
be the shorthand notation for the event
$\{\mathbf{Q}^0_{\mathcal{K}\setminus\mathcal{L}}
=\mathbf{q}^0_{\mathcal{K}\setminus\mathcal{L}},
\mathbf{Q}^1_{\mathcal{K}\setminus\mathcal{L}}
=\mathbf{q}^1_{\mathcal{K}\setminus\mathcal{L}},
\mathbf{Q}_{\mathcal{L}}=\mathbf{q}_{\mathcal{L}} \}$.  Clearly,
\eqref{equ:reduce1} further implies
\begin{align}
& \hspace{-5pt}
\Pr(\hat{B}(Y,V,W)=B \mid \mathbb{Q}(\mathbf{q}^0_{\mathcal{K}\setminus\mathcal{L}},
\mathbf{q}^1_{\mathcal{K}\setminus\mathcal{L}},\mathbf{q}_{\mathcal{L}}))
\notag \\
&=
\Pr(\hat{B}_1(Y,U,V)=B) 
\leq \frac{1}{2} + 2(K-L-1)|\mathcal{X}|\cdot F_{\mathrm{CPA}}(n),
\label{equ:a1-6}
\end{align}
where the equality results from the fact that we set
${\mathbf{\bar R}}^0_{\mathcal{I},\mathcal{K} \setminus\mathcal{L}} =
{\mathbf{\bar R}}_{\mathcal{I},\mathcal{K} \setminus\mathcal{L}}$ and
${\mathbf{\bar R}}^1_{\mathcal{I},\mathcal{K} \setminus\mathcal{L}}$
is trivially distributed, and the inequality is due to
Lemma~\ref{lem:mcpa} as the reduced estimator given by
Lemma~\ref{lem:useful3}
$\hat{B}_1(Y,U,V) = \hat{B}_1({\mathbf{\bar
    C}}^B_{\mathcal{I},\mathcal{K}\setminus\mathcal{L}}, {\mathbf{\bar
    R}}^0_{\mathcal{I},\mathcal{K}\setminus\mathcal{L}}, {\mathbf{\bar
    R}}^1_{\mathcal{I},\mathcal{K}\setminus\mathcal{L}},
\boldsymbol{\Phi}^n_{\mathcal{K}\setminus\mathcal{L}})$ is in the form
of the estimator in the multi-encryption CPA experiment with
$\mathscr{A}_1$ and $\mathscr{C}_1$ respectively as the CPA attacker
and challenger.

For cleaner notation in what follows, we write
$\boldsymbol{\Gamma}_0=E
(\mathbf{\bar{R}}_{\mathcal{I},\mathcal{K}\setminus\mathcal{L}};
\boldsymbol{\Phi}^n_{\mathcal{K}\setminus\mathcal{L}})$,
$\boldsymbol{\Gamma}_1=E(0^{2(K-L-1)|\mathcal{X}|};
\boldsymbol{\Phi}^n_{\mathcal{K}\setminus\mathcal{L}})$, and
$\boldsymbol{\Xi} = [\mathbf{Q}^0_{\mathcal{K}\setminus\mathcal{L}},
\mathbf{Q}^1_{\mathcal{K}\setminus \mathcal{L}},
\mathbf{Q}_{\mathcal{L}}, \mathbf{\bar
  C}_{\mathcal{J},\mathcal{K}\setminus\mathcal{L}},
\mathbf{C}_{\mathcal{K}\setminus\mathcal{L},\mathcal{L}}, \mathbf{\bar
  C}_{\mathcal{L},\mathcal{K}},
\mathbf{R}_{\mathcal{L},\mathcal{K}}, \mathbf{R}_{\mathcal{K}\setminus
  \mathcal{L},\mathcal{L}}, \\
\mathbf{\Phi}^n_{\mathcal{K}},\mathbf{\Psi}^n_{\mathcal{L}}]$.
Then, it is
simple to check in~\eqref{equ:a1-6} that
$\hat{B}(Y,V,W) = \hat{B}(\boldsymbol{\Xi}, \boldsymbol{\Gamma}_0,
\mathbf{G}^0_{\mathcal{K}\setminus\mathcal{L}})$ given $B=0$, and
$\hat{B}(Y,V,W) = \hat{B}(\boldsymbol{\Xi}, \boldsymbol{\Gamma}_1,
\mathbf{G}^0_{\mathcal{K}\setminus\mathcal{L}})$ given $B=1$.
Moreover, notice that both
$\hat{B} (\boldsymbol{\Xi}, \boldsymbol{\Gamma}_0,
\mathbf{G}^0_{\mathcal{K}\setminus\mathcal{L}})$ and
$\hat{B} (\boldsymbol{\Xi}, \boldsymbol{\Gamma}_1,
\mathbf{G}^0_{\mathcal{K}\setminus\mathcal{L}})$ are conditionally
independent of $B$ given
$[
\mathbf{Q}^0_{\mathcal{K}\setminus\mathcal{L}},
\mathbf{Q}^1_{\mathcal{K}\setminus\mathcal{L}},\mathbf{Q}_{\mathcal{L}}]$.
Hence, \eqref{equ:a1-6} implies
\begin{align}
& \hspace{-5pt}
\frac{1}{2}\Pr(\hat{B}
(\boldsymbol{\Xi}, \boldsymbol{\Gamma}_0,
\mathbf{G}^0_{\mathcal{K}\setminus\mathcal{L}})=0 \mid
\mathbb{Q}(\mathbf{q}^0_{\mathcal{K}\setminus\mathcal{L}},
\mathbf{q}^1_{\mathcal{K}\setminus\mathcal{L}},\mathbf{q}_{\mathcal{L}}))
\notag \\
&
+\frac{1}{2}\Pr(\hat{B}
(\boldsymbol{\Xi},\boldsymbol{\Gamma}_1,\mathbf{G}^0_{\mathcal{K}\setminus\mathcal{L}})=1
\mid \mathbb{Q}(\mathbf{q}^0_{\mathcal{K}\setminus\mathcal{L}},
\mathbf{q}^1_{\mathcal{K}\setminus\mathcal{L}},\mathbf{q}_{\mathcal{L}})) 
\notag \\
& \leq \frac{1}{2} + 2(K-L-1)|\mathcal{X}| \cdot F_{\mathrm{CPA}}(n).
\label{equ:a1}
\end{align}

Next, we construct the second experiment with attacker $\mathscr{A}_2$
and challenger $\mathscr{C}_2$ in the same way as in the previous
experiment, except that $\mathscr{A}_2$ assigns
${\mathbf{\bar
    R}}^0_{\mathcal{I},\mathcal{K}\setminus\mathcal{L}}=0^{2(K-L-1)|\mathcal{X}|}$,
${\mathbf{\bar R}}^1_{\mathcal{I},\mathcal{K}\setminus\mathcal{L}}
={\mathbf{\bar R}}_{\mathcal{I},\mathcal{K}\setminus\mathcal{L}}$ in
step 3) and computes
$\mathbf{G}^1_{\mathcal{K}\setminus\mathcal{L}}=\mathbf{q}^{1}_{\mathcal{K}\setminus\mathcal{L}}
\oplus\mathbf{\Sigma}_{\mathbf{R}_{\mathcal{K},\mathcal{K}\setminus
    \mathcal{L}}}$ in step 8).  Following a similar analysis, we 
get for this experiment,
\begin{align}
& \hspace{-5pt}
\frac{1}{2}\Pr(\hat{B}
(\boldsymbol{\Xi}, \boldsymbol{\Gamma}_1,
\mathbf{G}^1_{\mathcal{K}\setminus\mathcal{L}})=0 \mid
\mathbb{Q}(\mathbf{q}^0_{\mathcal{K}\setminus\mathcal{L}},
\mathbf{q}^1_{\mathcal{K}\setminus\mathcal{L}},\mathbf{q}_{\mathcal{L}}))
\notag \\
&
+\frac{1}{2}\Pr(\hat{B}
(\boldsymbol{\Xi},\boldsymbol{\Gamma}_0,\mathbf{G}^1_{\mathcal{K}\setminus\mathcal{L}})=1
\mid \mathbb{Q}(\mathbf{q}^0_{\mathcal{K}\setminus\mathcal{L}},
\mathbf{q}^1_{\mathcal{K}\setminus\mathcal{L}},\mathbf{q}_{\mathcal{L}})) 
\notag \\
& \leq \frac{1}{2} + 2(K-L-1)|\mathcal{X}| \cdot F_{\mathrm{CPA}}(n).
\label{equ:a2}
\end{align}

As shown in Figure \ref{fig:tdaa3}, we construct the third
  experiment with attacker $\mathscr{A}_{3}$ and challenger
  $\mathscr{C}_{3}$ as follows:
\begin{figure}
  \centering
\includegraphics[width=0.49\textwidth]{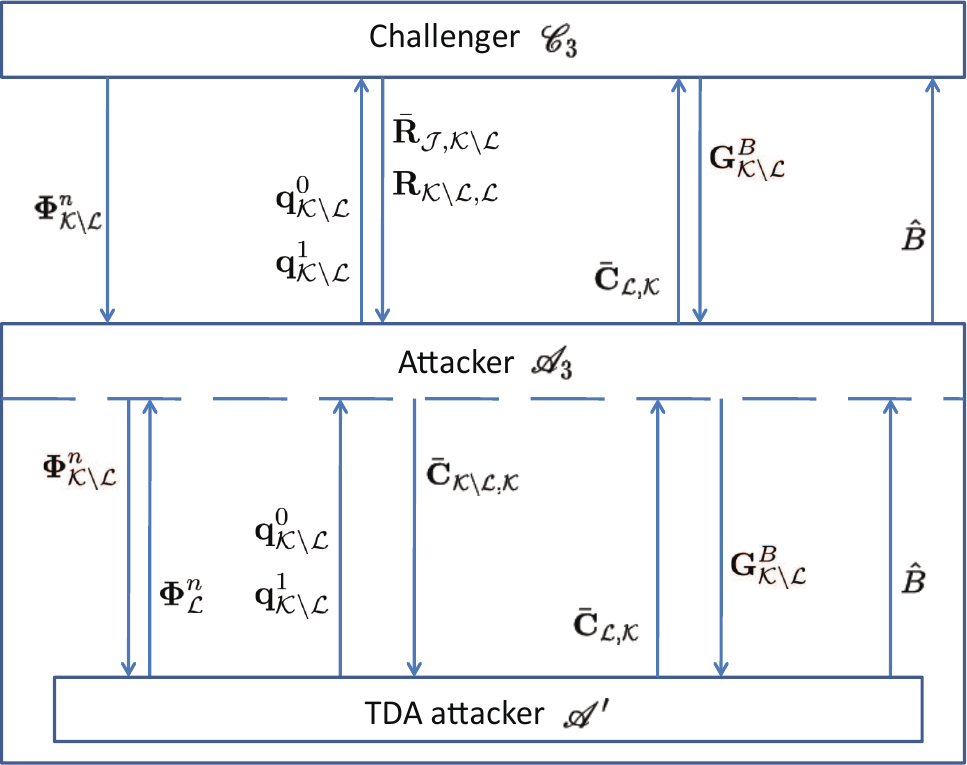}\\
\caption{The third constructed experiment for proving Lemma \ref{lem:tda}.}\label{fig:tdaa3}
\end{figure}

\begin{enumerate}

\item Same as step 1) in the first experiment with $\mathscr{A}_3$ and
  $\mathscr{C}_{3}$ taking the roles of $\mathscr{A}_1$ and
  $\mathscr{C}_{1}$, respectively.
  
\item $\mathscr{A}'$ selects $\mathbf{q}_{\mathcal{L}}$,
  $\mathbf{q}^{0}_{\mathcal{K}\setminus \mathcal{L}}$, and
  $\mathbf{q}^{1}_{\mathcal{K}\setminus \mathcal{L}}$ as in step 2) of
  the TDA experiment, and then passes
  $[\mathbf{q}^{0}_{\mathcal{K}\setminus \mathcal{L}},
  \mathbf{q}^{1}_{\mathcal{K}\setminus \mathcal{L}}]$ to
  $\mathscr{A}_3$, who then passes them on to $\mathscr{C}_3$.

\item $\mathscr{C}_{3}$ generates
  $\mathbf{R}_{\mathcal{K}\setminus\mathcal{L},\mathcal{K}}$ with
  $\mathbf{R}_{k,l}$
  i.i.d. $\sim u(\mathcal{N}_m^{|\mathcal{X}|})$ for
  $k\neq l$, and
  $\mathbf{R}_{k,k}=\ominus\bigoplus_{l\in\mathcal{K}\setminus
    k}\mathbf{R}_{k,l}$ according to~\eqref{equ:Rkk}, and then gives
  $[{\mathbf{\bar R}}_{\mathcal{J},\mathcal{K}\setminus\mathcal{L}},
  \mathbf{R}_{\mathcal{K}\setminus\mathcal{L},\mathcal{L}}]$ to
  $\mathscr{A}_3$.

\item $\mathscr{A}_{3}$ computes
  ${\mathbf{\bar C}}_{\mathcal{I},\mathcal{K}
    \setminus\mathcal{L}}=E(0^{2(K-L-1)|\mathcal{X}|};
  \boldsymbol{\Phi}^n_{\mathcal{K}\setminus\mathcal{L}})$,
  ${\mathbf{\bar C}}_{\mathcal{J},\mathcal{K}\setminus\mathcal{L}}
  =E(\bar{\mathbf{R}}_{\mathcal{J},\mathcal{K}\setminus\mathcal{L}};
  \boldsymbol{\Phi}^n_{\mathcal{K}\setminus\mathcal{L}})$,
  $\mathbf{C}_{\mathcal{K}\setminus\mathcal{L},\mathcal{L}}
  =E(\mathbf{R}_{\mathcal{K}\setminus\mathcal{L},\mathcal{L}};
  \boldsymbol{\Phi}^n_{\mathcal{L}})$, and gives
  ${\mathbf{\bar C}}_{\mathcal{K}\setminus\mathcal{L},\mathcal{K}}$ to
  $\mathscr{A}'$.

\item $\mathscr{A}'$ follows step 4) of the TDA experiment to decrypt
  $\mathbf{C}_{\mathcal{K}\setminus\mathcal{L},\mathcal{L}}$, generate
  $\mathbf{R}_{\mathcal{L}, \mathcal{K}}$, encrypt to obtain
  $\mathbf{\bar C}_{\mathcal{L}, \mathcal{K}}$, and send
  $\mathbf{\bar C}_{\mathcal{L}, \mathcal{K}}$ to $\mathscr{A}_3$, who
  then passes on
  $\mathbf{C}_{\mathcal{L},\mathcal{K} \setminus\mathcal{L}}$ to
  $\mathscr{C}_{3}$.

\item $\mathscr{C}_{3}$ calculates
  $\mathbf{R}_{\mathcal{L}, \mathcal{K} \setminus\mathcal{L}} =
  D(\mathbf{C}_{\mathcal{L}, \mathcal{K} \setminus\mathcal{L}} ;
  \mathbf{\Psi}^n_{\mathcal{K} \setminus\mathcal{L}})$. Then,
  $\mathscr{C}_{3}$ selects an independent bit $B\in\{0,1\}$ with
  equal probabilities, computes
  $\mathbf{G}^B_{\mathcal{K}\setminus\mathcal{L}}
  =\mathbf{q}^{B}_{\mathcal{K}\setminus\mathcal{L}} \oplus
  \mathbf{\Sigma}_{\mathbf{R}_{\mathcal{K},\mathcal{K}\setminus
      \mathcal{L}}}$, and gives
  $\mathbf{G}^B_{\mathcal{K}\setminus\mathcal{L}}$ to $\mathscr{A}_3$,
  who passes it on to $\mathscr{A}'$.

\item $\mathscr{A}'$ uses
  $\hat{B}=\hat{B}(\mathbf{q}^0_{\mathcal{K}\setminus
    \mathcal{L}},\mathbf{q}^1_{\mathcal{K}\setminus\mathcal{L}},
  \mathbf{q}_{\mathcal{L}}, {\mathbf{\bar
      C}}_{\mathcal{K},\mathcal{K}},
  \mathbf{G}^B_{\mathcal{K}\setminus\mathcal{L}},\mathbf{R}_{\mathcal{L},\mathcal{K}}, \\
  \mathbf{R}_{\mathcal{K}\setminus
    \mathcal{L},\mathcal{L}},\boldsymbol{\Phi}^n_{\mathcal{K}},
  \boldsymbol{\Psi}^n_{\mathcal{L}})$ in step 6) of the TDA experiment
  with the input arguments as specified to estimate $B$, and reports
  $\hat{B}$ to $\mathscr{A}_{3}$, who then passes it on to
  $\mathscr{C}_{3}$.
\end{enumerate}

We will again use Lemmas~\ref{lem:useful3} and~\ref{lem:useful4} by
letting $Y=\mathbf{G}^B_{\mathcal{K}\setminus\mathcal{L}}$,
$U={\mathbf{\bar R}}_{\mathcal{J},\mathcal{K}\setminus\mathcal{L}}$,
$V=[\mathbf{Q}^0_{\mathcal{K}\setminus\mathcal{L}},
\mathbf{Q}^1_{\mathcal{K}\setminus\mathcal{L}},
\mathbf{R}_{\mathcal{L},\mathcal{K}\setminus\mathcal{L}},
\mathbf{R}_{\mathcal{K}\setminus\mathcal{L},\mathcal{L}},
\boldsymbol{\Phi}^n_{\mathcal{K}\setminus\mathcal{L}}]$,
$Z_1=[\mathbf{Q}_{\mathcal{L}}, {\mathbf{\bar
    C}}_{\mathcal{K}\setminus\mathcal{L},\mathcal{K}},
\boldsymbol{\Phi}^n_{\mathcal{L}},\boldsymbol{\Psi}^n_{\mathcal{L}}]$,
$Z_2=[Y,U,V]$, and
$W=[{\mathbf{\bar C}}_{\mathcal{L},\mathcal{K}}, {\mathbf{R}}_{\mathcal{L},\mathcal{L}}, Z_1]$ this time. 
Like before, it is easy to check that we again have $W$ as a PPT
function of $[Z_1,Z_2]$ in this case. Thus, we may apply
Lemma~\ref{lem:useful4} again to obtain that $W$ is conditionally
independent of $B$ given $Z_2$ as long as $Z_1$ and $B$ are
conditionally independent given $Z_2$. This latter fact is established
by~\eqref{equ:a3-0},
where the equality is based on Assumption~\ref{a:crypto},
\eqref{equ:key_fn}, and the fact that
${\mathbf{\bar C}}_{\mathcal{I},\mathcal{K}\setminus\mathcal{L}} =
E(0^{2(K-L-1)|\mathcal{X}|};
\boldsymbol{\Phi}^n_{\mathcal{K}\setminus\mathcal{L}})$. 
\begin{figure*}
\begin{align}
&p_{\mathbf{Q}_{\mathcal{L}}, {\mathbf{\bar C}}_{\mathcal{K}\setminus\mathcal{L},\mathcal{K}},
\boldsymbol{\Phi}^n_{\mathcal{L}}, \boldsymbol{\Psi}^n_{\mathcal{L}}
\mid 
\mathbf{G}^B_{\mathcal{K}\setminus\mathcal{L}},
{\mathbf{\bar R}}_{{\mathcal{J}},\mathcal{K}\setminus\mathcal{L}},
\mathbf{Q}^0_{\mathcal{K}\setminus\mathcal{L}}, \mathbf{Q}^1_{\mathcal{K}\setminus\mathcal{L}},
{\mathbf{\bar R}}_{\mathcal{L},\mathcal{K}\setminus\mathcal{L}},
\mathbf{R}_{\mathcal{K}\setminus\mathcal{L},\mathcal{L}},
\boldsymbol{\Phi}^n_{\mathcal{K}\setminus\mathcal{L}},B}
(\mathbf{q}^\prime_{\mathcal{L}}, {\mathbf{\bar c}}_{\mathcal{K}\setminus\mathcal{L},\mathcal{K}},
\boldsymbol{\phi}^n_{\mathcal{L}}, \boldsymbol{\psi}^n_{\mathcal{L}}
\mid 
\mathbf{g}_{\mathcal{K}\setminus\mathcal{L}},
{\mathbf{\bar r}}_{{\mathcal{J}},\mathcal{K}\setminus\mathcal{L}},
\mathbf{q}^{\prime}_{\mathcal{K}\setminus\mathcal{L}}, 
\mathbf{q}^{\prime\prime}_{\mathcal{K}\setminus\mathcal{L}},
{\mathbf{\bar r}}_{\mathcal{L},\mathcal{K}\setminus\mathcal{L}},
\notag \\
& \hspace{20pt}
\mathbf{r}_{\mathcal{K}\setminus\mathcal{L},\mathcal{L}},
\boldsymbol{\phi}^n_{\mathcal{K}\setminus\mathcal{L}},b)
\notag \\
&=
\delta_{\mathbf{q}_{\mathcal{L}}}(\mathbf{q}'_{\mathcal{L}}) 
\cdot
p_{{\mathbf{\bar C}}_{\mathcal{J},\mathcal{K}\setminus\mathcal{L}}
\mid {\mathbf{\bar R}}_{\mathcal{J},\mathcal{K}\setminus\mathcal{L}}, 
\boldsymbol{\Phi}^n_{\mathcal{K}\setminus\mathcal{L}}} 
(\mathbf{\bar c}_{\mathcal{J},\mathcal{K}\setminus\mathcal{L}}
\mid {\mathbf{\bar r}}_{\mathcal{J},\mathcal{K}\setminus\mathcal{L}}, 
\boldsymbol{\phi}^n_{\mathcal{K}\setminus\mathcal{L}})
\cdot
p_{{\mathbf{C}}_{\mathcal{K}\setminus\mathcal{L}, \mathcal{L}} 
\mid \mathbf{R}_{\mathcal{K}\setminus\mathcal{L},\mathcal{L}}, 
\boldsymbol{\Phi}^n_{\mathcal{L}}}
( {\mathbf{c}}_{\mathcal{K}\setminus\mathcal{L}, \mathcal{L}} 
\mid \mathbf{r}_{\mathcal{K}\setminus\mathcal{L},\mathcal{L}}, 
\boldsymbol{\phi}^n_{\mathcal{L}} )
\notag \\
& \hspace{20pt}
\cdot 
p_{{\mathbf{\bar C}}_{\mathcal{I},\mathcal{K}\setminus\mathcal{L}}
\mid \boldsymbol{\Phi}^n_{\mathcal{K}\setminus\mathcal{L}}} 
(\mathbf{\bar c}_{\mathcal{I},\mathcal{K}\setminus\mathcal{L}} \mid
\boldsymbol{\phi}^n_{\mathcal{K}\setminus\mathcal{L}})
\cdot p_{\boldsymbol{\Phi}^n_{\mathcal{L}},\boldsymbol{\Psi}^n_{\mathcal{L}}
\mid\boldsymbol{\Phi}^n_{\mathcal{K}\setminus\mathcal{L}}}
(\boldsymbol{\phi}^n_{\mathcal{L}},\boldsymbol{\psi}^n_{\mathcal{L}}
\mid\boldsymbol{\phi}^n_{\mathcal{K}\setminus\mathcal{L}})
\label{equ:a3-0}
\end{align}
\end{figure*}

Further, expressed in the previous notation
$\hat{B}(Y,V,W) = \hat{B}(\boldsymbol{\Xi}, \boldsymbol{\Gamma}_1,
\mathbf{G}^B_{\mathcal{K}\setminus\mathcal{L}})$. Thus,
by applying Lemma~\ref{lem:useful3} with $Y$, $U$, $V$, and $W$ as
specified, we get a reduced PPT estimator
$\hat{B}_3(\mathbf{G}^B_{\mathcal{K}\setminus\mathcal{L}},
\mathbf{Q}^0_{\mathcal{K}\setminus\mathcal{L}},
\mathbf{Q}^1_{\mathcal{K}\setminus\mathcal{L}}, {\mathbf{\bar
  R}}_{\mathcal{K}\setminus\mathcal{I},\mathcal{K}\setminus\mathcal{L}},
\mathbf{R}_{\mathcal{K}\setminus\mathcal{L},\mathcal{L}},
\boldsymbol{\Phi}^n_{\mathcal{K}\setminus\mathcal{L}})$ that satisfies
\begin{align}
& 
  \Pr ( \hat{B}(\boldsymbol{\Xi}, \boldsymbol{\Gamma}_1,
  \mathbf{G}^B_{\mathcal{K}\setminus\mathcal{L}}) = B \mid
  \mathbb{Q}(\mathbf{q}^0_{\mathcal{K}\setminus\mathcal{L}},
  \mathbf{q}^1_{\mathcal{K}\setminus\mathcal{L}},
  \mathbf{q}_{\mathcal{L}}), 
\notag \\
& \hspace{10pt}
  \mathbf{\bar R}_{\mathcal{K}\setminus\mathcal{I}, \mathcal{K}\setminus\mathcal{L}} 
 = \mathbf{\bar r}_{\mathcal{K}\setminus\mathcal{I}, \mathcal{K}\setminus\mathcal{L}}, 
  \mathbf{R}_{\mathcal{K}\setminus\mathcal{L},\mathcal{L}} 
 = \mathbf{r}_{\mathcal{K}\setminus\mathcal{L},\mathcal{L}},
  \boldsymbol{\Phi}^n_{\mathcal{K}\setminus\mathcal{L}}
 = \boldsymbol{\phi}^n_{\mathcal{K}\setminus\mathcal{L}})
\notag \\
&= \Pr ( \hat{B}_3 = B \mid 
  \mathbb{Q}(\mathbf{q}^0_{\mathcal{K}\setminus\mathcal{L}},
  \mathbf{q}^1_{\mathcal{K}\setminus\mathcal{L}},
  \mathbf{q}_{\mathcal{L}}), 
  \mathbf{\bar R}_{\mathcal{K}\setminus\mathcal{I}, \mathcal{K}\setminus\mathcal{L}} 
 = \mathbf{\bar r}_{\mathcal{K}\setminus\mathcal{I}, \mathcal{K}\setminus\mathcal{L}}, 
\notag \\
& \hspace{30pt}
  \mathbf{R}_{\mathcal{K}\setminus\mathcal{L},\mathcal{L}} 
 = \mathbf{r}_{\mathcal{K}\setminus\mathcal{L},\mathcal{L}},
  \boldsymbol{\Phi}^n_{\mathcal{K}\setminus\mathcal{L}}
 = \boldsymbol{\phi}^n_{\mathcal{K}\setminus\mathcal{L}})
\notag \\
& = \frac{1}{2},
\label{equ:A3}
\end{align}
where we have used the triviality of the distribution of
$\mathbf{Q}_{\mathcal{L}}$ in the first equality, and the last
equality can be obtained based on Lemma~\ref{lem:useful2} as shown in
Appendix~\ref{app:A3}.

Since both
$\hat{B} (\boldsymbol{\Xi}, \boldsymbol{\Gamma}_1,
\mathbf{G}^0_{\mathcal{K}\setminus\mathcal{L}})$ and
$\hat{B} (\boldsymbol{\Xi}, \boldsymbol{\Gamma}_1,
\mathbf{G}^1_{\mathcal{K}\setminus\mathcal{L}})$ are conditionally
independent of $B$ given
$[\mathbf{Q}^0_{\mathcal{K}\setminus\mathcal{L}},
\mathbf{Q}^1_{\mathcal{K}\setminus\mathcal{L}},
\mathbf{Q}_{\mathcal{L}}]$, (\ref{equ:A3}) implies
\begin{align}
& \hspace{-5pt}
\frac{1}{2} \Pr( \hat{B}
(\boldsymbol{\Xi},\boldsymbol{\Gamma}_1,\mathbf{G}^0_{\mathcal{K}\setminus\mathcal{L}})=0\mid
\mathbb{Q}(\mathbf{q}^0_{\mathcal{K}\setminus\mathcal{L}},
\mathbf{q}^1_{\mathcal{K}\setminus\mathcal{L}},\mathbf{q}_{\mathcal{L}}))
\notag \\
&
+\frac{1}{2} \Pr( \hat{B}
(\boldsymbol{\Xi},\boldsymbol{\Gamma}_1,\mathbf{G}^1_{\mathcal{K}\setminus\mathcal{L}})=1
\mid \mathbb{Q}(\mathbf{q}^0_{\mathcal{K}\setminus\mathcal{L}},
\mathbf{q}^1_{\mathcal{K}\setminus\mathcal{L}},\mathbf{q}_{\mathcal{L}}))
\notag \\
&=\frac{1}{2}.
\label{equ:a3}
\end{align}

Finally, adding up \eqref{equ:a1}, \eqref{equ:a2}, and \eqref{equ:a3}
from the three experiments constructed above gives
\begin{align}
& \hspace{-5pt}
\Pr( \hat{B}
(\boldsymbol{\Xi}, \boldsymbol{\Gamma}_0,
\mathbf{G}^B_{\mathcal{K}\setminus\mathcal{L}}) = B \mid
\mathbb{Q}(\mathbf{q}^0_{\mathcal{K}\setminus\mathcal{L}},
\mathbf{q}^1_{\mathcal{K}\setminus\mathcal{L}},\mathbf{q}_{\mathcal{L}}))
\notag \\
&= \frac{1}{2} \Pr( \hat{B}
(\boldsymbol{\Xi}, \boldsymbol{\Gamma}_0,
\mathbf{G}^0_{\mathcal{K}\setminus\mathcal{L}}) =0 \mid
\mathbb{Q}(\mathbf{q}^0_{\mathcal{K}\setminus\mathcal{L}},
\mathbf{q}^1_{\mathcal{K}\setminus\mathcal{L}},\mathbf{q}_{\mathcal{L}}))
\notag \\
&
+\frac{1}{2}\Pr(\hat{B}
(\boldsymbol{\Xi}, \boldsymbol{\Gamma}_0,
\mathbf{G}^1_{\mathcal{K}\setminus\mathcal{L}}) = 1
\mid \mathbb{Q}(\mathbf{q}^0_{\mathcal{K}\setminus\mathcal{L}},
\mathbf{q}^1_{\mathcal{K}\setminus\mathcal{L}},\mathbf{q}_{\mathcal{L}})) 
\notag \\
& \leq \frac{1}{2} + 4(K-L-1)|\mathcal{X}| \cdot F_{\mathrm{CPA}}(n).
\label{equ:a1a2a3}
\end{align}
Note that
$\hat{B}(\boldsymbol{\Xi}, \boldsymbol{\Gamma}_0,
\mathbf{G}^B_{\mathcal{K}\setminus\mathcal{L}})$ is exactly the
estimator $\hat{B}$ used by $\mathscr{A}'$ in step 6) of the TDA
experiment. As a result, \eqref{equ:a1a2a3}
establishes~\eqref{equ:tda}.

\end{proof}



\subsection{Proof of Theorem \ref{thm:tea}}

As discussed before, we will reduce the TEA experiment to a TDA
experiment by constructing a TDA attacker $\mathscr{A}'$ and her
estimator $\hat{B}$ (see step 6) of the TDA experiment from the TEA
attacker $\mathscr{A}$ and her estimator
$\mathbf{\hat{Q}}_{\mathcal{K}\setminus \mathcal{L}}$
(see~\eqref{equ:Qhat_fn}). This reduction allows us to express the
winning probability of $\mathscr{A}$ as that of $\mathscr{A}'$, thus
proving~\eqref{equ:edra} using Lemma~\ref{lem:tda}. The steps of the constructed
TDA experiment, shown in Figure~\ref{fig:teaproof}, are
as follows:
\begin{figure}
  \centering \includegraphics[width=0.48\textwidth]{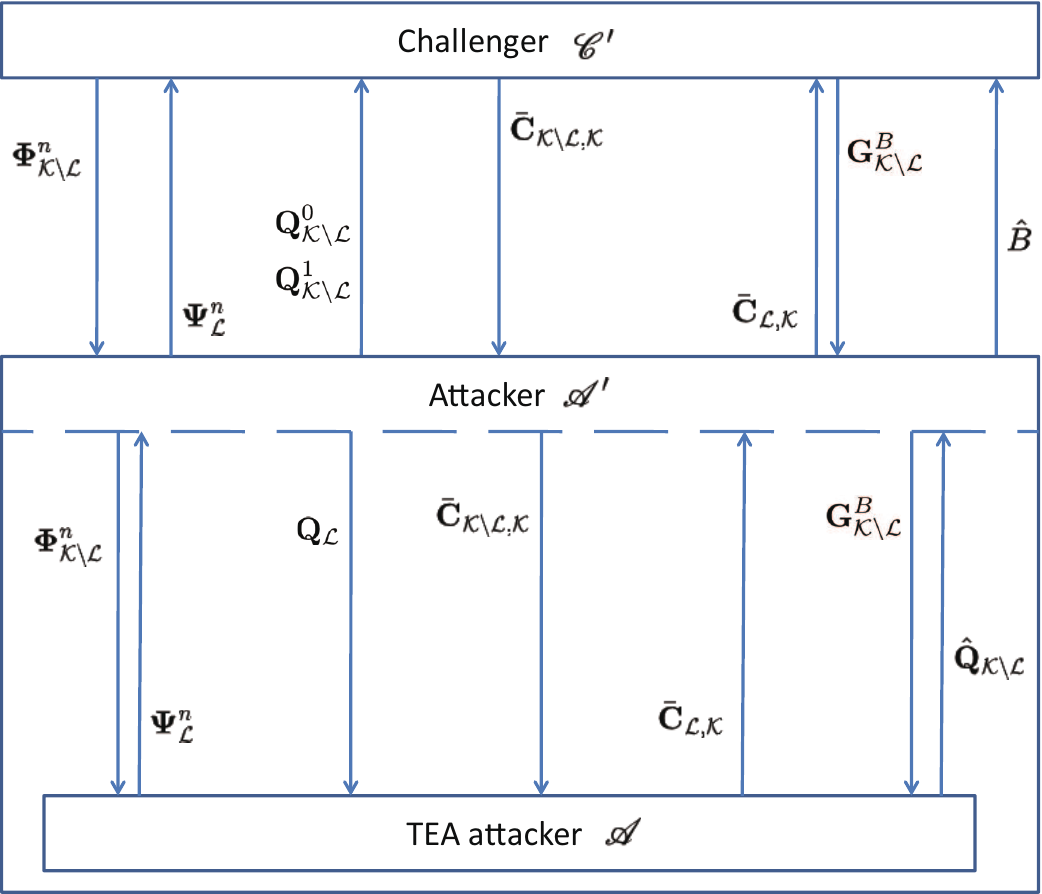}
  \caption{The constructed experiment for proving Theorem \ref{thm:tea}.}
\label{fig:teaproof}
\end{figure}
\begin{enumerate}
\item $\mathscr{C}'$ runs $S(1^{n})$ to get the key pair collection
  $(\boldsymbol{\Phi}^n_{\mathcal{K} \setminus \mathcal{L}},
  \boldsymbol{\Psi}^n_{\mathcal{K} \setminus \mathcal{L}})$ and gives
  $\boldsymbol{\Phi}^n_{\mathcal{K} \setminus \mathcal{L}}$ to
  $\mathscr{A}'$, who passes it on to $\mathscr{A}$. Then,
  $\mathscr{A}$ generates the key pair collection
  $(\boldsymbol{\Phi}_{\mathcal{L}}^n,
  \boldsymbol{\Psi}_{\mathcal{L}}^n)$ according to~\eqref{equ:key_fn}
  and gives $\boldsymbol{\Phi}_{\mathcal{L}}^n$ to $\mathscr{A}'$, who
  then passes it on to $\mathscr{C}'$.

\item $\mathscr{A}'$ draws
  $[\mathbf{Q}^0_{\mathcal{K}\setminus\mathcal{L}},
  \mathbf{Q}_{\mathcal{L}}] \in \mathcal{Q}_t^K(\mathcal{X})$
  according to $p_{\mathbf{Q}_{\mathcal{K}}}$ and
  $\mathbf{Q}^1_{\mathcal{K}\setminus\mathcal{L}} \in
  \mathcal{Q}_t^{K-L}(\mathcal{X})$ according to
  $p_{\mathbf{Q}_{\mathcal{K}\setminus\mathcal{L}} \mid
    \boldsymbol{\Sigma}_{\mathbf{Q}_{\mathcal{K}\setminus\mathcal{L}}},
    \mathbf{Q}_{\mathcal{L}}} (\cdot\mid
  \boldsymbol{\Sigma}_{\mathbf{Q}^0_{\mathcal{K}\setminus\mathcal{L}}},
  \mathbf{Q}_{\mathcal{L}})$.  Then, $\mathscr{A}'$ gives
  $[\mathbf{Q}^0_{\mathcal{K}\setminus\mathcal{L}},
  \mathbf{Q}^1_{\mathcal{K}\setminus\mathcal{L}}]$ to $\mathscr{C}'$
  and $\mathbf{Q}_{\mathcal{L}}$ to $\mathscr{A}$.

\item Same as step 3) of the TDA experiment. Then $\mathscr{A}'$
  passes
  $\mathbf{\bar C}_{\mathcal{K}\setminus \mathcal{L},\mathcal{K}}$ to
  $\mathscr{A}$.

\item $\mathscr{A}$ follows step 4) of the TEA experiment to calculate
  $\mathbf{R}_{\mathcal{K} \setminus\mathcal{L},\mathcal{L}}$,
  $\mathbf{R}_{\mathcal{L},\mathcal{K}}$, and
  ${\mathbf{\bar C}}_{\mathcal{L}, \mathcal{K}}$. Then, she gives
  ${\mathbf{\bar C}}_{\mathcal{L}, \mathcal{K}}$ to $\mathscr{A}'$,
  who passes it on to $\mathscr{C}'$.

\item Same as step 5) of the TDA experiment. Then, $\mathscr{A}'$
  passes $\mathbf{G}^B_{\mathcal{K}\setminus \mathcal{L}}$ on to
  $\mathscr{A}$.

\item $\mathscr{A}$ follows step 6) of the TEA experiment to compute
  $\mathbf{G}_{\mathcal{L}}$ and reports her estimate
  ${\mathbf{\hat Q}}_{\mathcal{K}\setminus \mathcal{L}} = \mathbf{\hat
    Q}_{\mathcal{K}\setminus\mathcal{L}} (\mathbf{Q}_{\mathcal{L}}, 
  \mathbf{\Phi}^n_{\mathcal{K}}, 
  \mathbf{\Psi}^n_{\mathcal{L}},\\ 
  \mathbf{\bar C}_{\mathcal{K},\mathcal{K}},  [\mathbf{G}^B_{\mathcal{K}\setminus \mathcal{L}},
  \mathbf{G}_{\mathcal{L}}],
  \mathbf{R}_{\mathcal{L},\mathcal{K}},
  \mathbf{R}_{\mathcal{K}\setminus \mathcal{L},\mathcal{L}})$ to
  $\mathscr{A}'$.

\item Given $\tau\geq 0$, $\mathscr{A}'$ estimates $B$ by setting
  $\hat{B}=0$ if
  ${\mathbf{\hat Q}}_{\mathcal{K}\backslash \mathcal{L}} \in
  \mathcal{N}_{\tau}(\mathbf{Q}^0_{\mathcal{K}\backslash\mathcal{L}})$
  and $\hat{B}=1$ otherwise. Then, $\mathscr{A}'$ reports $\hat{B}$ to
  $\mathscr{C}'$. 
\end{enumerate}

Note that the estimator
$\hat B = \hat B(\mathbf{Q}^0_{\mathcal{K}\setminus\mathcal{L}},
\mathbf{Q}^1_{\mathcal{K}\setminus\mathcal{L}},
\mathbf{Q}_{\mathcal{L}}, {\mathbf{\bar C}}_{\mathcal{K},\mathcal{K}},\\
\mathbf{G}_{\mathcal{L}},
\mathbf{G}^B_{\mathcal{K}\setminus\mathcal{L}},
\mathbf{R}_{\mathcal{L},\mathcal{K}},
\mathbf{R}_{\mathcal{K}\setminus\mathcal{L},\mathcal{L}},
\boldsymbol{\Phi}^n_{\mathcal{K}}, \boldsymbol{\Psi}^n_{\mathcal{L}})$
because of the functional form of
${\mathbf{\hat Q}}_{\mathcal{K}\backslash \mathcal{L}}$
(see~\eqref{equ:Qhat_fn}).
We will use Lemma~\ref{lem:useful3} by setting
$Y = [{\mathbf{\bar C}}_{\mathcal{K},\mathcal{K}},
\mathbf{G}^B_{\mathcal{K}\setminus\mathcal{L}},
\mathbf{R}_{\mathcal{L},\mathcal{K}},
\mathbf{R}_{\mathcal{K}\setminus\mathcal{L},\mathcal{L}},
\boldsymbol{\Phi}^n_{\mathcal{K}},
\boldsymbol{\Psi}^n_{\mathcal{L}}]$, $U=\emptyset$,
$V=[\mathbf{Q}^0_{\mathcal{K}\setminus\mathcal{L}},
\mathbf{Q}^1_{\mathcal{K}\setminus\mathcal{L}},
\mathbf{Q}_{\mathcal{L}}]$, and $W=\mathbf{G}_{\mathcal{L}}$. Since
$\mathbf{G}_{\mathcal{L}} = \mathbf{Q}_{\mathcal{L}} \oplus
\boldsymbol{\Sigma}_{\mathbf{R}_{\mathcal{K},\mathcal{L}}}$ and
$\boldsymbol{\Sigma}_{\mathbf{R}_{\mathcal{K},\mathcal{L}}}$ is a
deterministic function of
$[\mathbf{R}_{\mathcal{L},\mathcal{K}},
\mathbf{R}_{\mathcal{K}\setminus\mathcal{L},\mathcal{L}}]$, it is
clear that $W$ and $B$ are conditionally independent given $[Y,U,V]$
and the generation of $W$ from $[Y,U,V]$ is PPT.  Thus,
Lemma~\ref{lem:useful3} applies in this case to give a PPT estimator
$\hat B_0(Y,U,V)$ satisfying
\begin{align}
&\hspace{-5pt}
\Pr(\hat{B}(Y,V,W)=B \mid
\mathbb{Q}(\mathbf{q}^0_{\mathcal{K}\setminus\mathcal{L}},
\mathbf{q}^1_{\mathcal{K}\setminus\mathcal{L}},\mathbf{q}_{\mathcal{L}}))
\notag \\
&=
\Pr(\hat{B}_0(Y,U,V)=B \mid
\mathbb{Q}(\mathbf{q}^0_{\mathcal{K}\setminus\mathcal{L}},
\mathbf{q}^1_{\mathcal{K}\setminus\mathcal{L}},\mathbf{q}_{\mathcal{L}}))
\notag \\
& \leq \frac{1}{2}
+4(K-L-1)|\mathcal{X}|\cdot F_{\mathrm{CPA}}(n),
\label{equ:tea-1}
\end{align}
where the last inequality is due to Lemma~\ref{lem:tda} because the
estimator
$\hat{B}_0(Y,U,V) = \hat{B}_0(
\mathbf{Q}^0_{\mathcal{K}\setminus\mathcal{L}},
\mathbf{Q}^1_{\mathcal{K}\setminus\mathcal{L}},
\mathbf{Q}_{\mathcal{L}}, 
{\mathbf{\bar C}}_{\mathcal{K},\mathcal{K}},
\mathbf{G}^B_{\mathcal{K}\setminus\mathcal{L}}, \\
\mathbf{R}_{\mathcal{L},\mathcal{K}}, 
\mathbf{R}_{\mathcal{K}\setminus\mathcal{L},\mathcal{L}},
\boldsymbol{\Phi}^n_{\mathcal{K}}, \boldsymbol{\Psi}^n_{\mathcal{L}})$
is exactly in the form of the estimator in step 6) of the TDA
experiment.

By the definition of $\hat B$ in step 7) of the constructed TDA
experiment, we have
\begin{align}
&
\Pr(\hat{B}(Y,V,W)=B \mid
\mathbb{Q}(\mathbf{q}^0_{\mathcal{K}\setminus\mathcal{L}},
\mathbf{q}^1_{\mathcal{K}\setminus\mathcal{L}},\mathbf{q}_{\mathcal{L}}))
\notag \\
&= \frac{1}{2} 
\Pr({\mathbf{\hat Q}}_{\mathcal{K}
\setminus\mathcal{L}}
\in\mathcal{N}_{\tau}
(\mathbf{Q}^0_{\mathcal{K}\setminus\mathcal{L}})
\mid \mathbb{Q}(\mathbf{q}^0_{\mathcal{K}\setminus\mathcal{L}},
\mathbf{q}^1_{\mathcal{K}\setminus\mathcal{L}},
\mathbf{q}_{\mathcal{L}}),B=0) 
\notag \\
& \hspace{10pt} 
+ \frac{1}{2} \Big\{ 1-
\Pr({\mathbf{\hat Q}}_{\mathcal{K}
\setminus\mathcal{L}}
\in\mathcal{N}_{\tau}
(\mathbf{Q}^0_{\mathcal{K}\setminus\mathcal{L}})
\mid \mathbb{Q}(\mathbf{q}^0_{\mathcal{K}\setminus\mathcal{L}},
\mathbf{q}^1_{\mathcal{K}\setminus\mathcal{L}},
\mathbf{q}_{\mathcal{L}}), 
\notag \\
& \hspace{75pt}  
 B=1) \Big\},
\label{equ:PBh=B}
\end{align}
where we have used the fact that $B$ and $[\mathbf{Q}^0_{\mathcal{K}\setminus\mathcal{L}},
\mathbf{Q}^1_{\mathcal{K}\setminus\mathcal{L}},\mathbf{Q}_{\mathcal{L}}]$
are independent. Putting~\eqref{equ:PBh=B} into 
\eqref{equ:tea-1} gives
\begin{align}
& 
\Pr({\mathbf{\hat Q}}_{\mathcal{K}
\setminus\mathcal{L}}
\in \mathcal{N}_{\tau}
(\mathbf{Q}^0_{\mathcal{K}\setminus\mathcal{L}})
\mid \mathbb{Q}(\mathbf{q}^0_{\mathcal{K}\setminus\mathcal{L}},
\mathbf{q}^1_{\mathcal{K}\setminus\mathcal{L}},
\mathbf{q}_{\mathcal{L}}),B=0)
\notag \\
& \leq
\Pr({\mathbf{\hat Q}}_{\mathcal{K} \setminus\mathcal{L}}
\in\mathcal{N}_{\tau}
(\mathbf{Q}^0_{\mathcal{K}\setminus\mathcal{L}})
\mid \mathbb{Q}(\mathbf{q}^0_{\mathcal{K}\setminus\mathcal{L}},
\mathbf{q}^1_{\mathcal{K}\setminus\mathcal{L}},
\mathbf{q}_{\mathcal{L}}),B=1)
\notag \\
& \hspace{10pt} + 
8(K-L-1)|\mathcal{X}|\cdot F_{\mathrm{CPA}}(n).
\label{equ:tea-2}
\end{align}

To simply notation, let
$\boldsymbol{\Upsilon} = [\mathbf{\Phi}^n_{\mathcal{K}},
\mathbf{\Psi}^n_{\mathcal{L}}, \mathbf{\bar
  C}_{\mathcal{K},\mathcal{K}}, \mathbf{G}_{\mathcal{L}}$,
$\mathbf{R}_{\mathcal{L},\mathcal{K}},
\mathbf{R}_{\mathcal{K}\setminus \mathcal{L},\mathcal{L}}]$.
Conditioned on $B=0$,
$\mathbf{\hat Q}_{\mathcal{K} \setminus\mathcal{L}}$ is a function of
$\boldsymbol{\Upsilon}$, $\mathbf{Q}_{\mathcal{L}}$, and
$\mathbf{G}^0_{\mathcal{K}\setminus \mathcal{L}}$. 
For any
$\mathbf{q}^0_{\mathcal{K}\setminus\mathcal{L}}$,
$\mathbf{q}^1_{\mathcal{K}\setminus\mathcal{L}}$, and
$\boldsymbol{\sigma}$ satisfying
$\boldsymbol{\Sigma}_{\mathbf{q}^0_{\mathcal{K}\setminus\mathcal{L}}}
= \boldsymbol{\Sigma}_{\mathbf{q}^1_{\mathcal{K}\setminus\mathcal{L}}}
= \boldsymbol{\sigma}$, we thus have
\begin{align}
& 
\Pr({\mathbf{\hat Q}}_{\mathcal{K} \setminus\mathcal{L}}
\in \mathcal{N}_{\tau}
(\mathbf{Q}^0_{\mathcal{K}\setminus\mathcal{L}})
\mid \mathbb{Q}(\mathbf{q}^0_{\mathcal{K}\setminus\mathcal{L}},
\mathbf{q}^1_{\mathcal{K}\setminus\mathcal{L}},
\mathbf{q}_{\mathcal{L}}), B=0)
\notag \\
& = 
\sum_{\boldsymbol{\upsilon}, \mathbf{g}_{\mathcal{K}\setminus\mathcal{L}}}
\sum_{\mathbf{q}'_{\mathcal{K}\setminus\mathcal{L}} \in  \mathcal{N}_{\tau}
(\mathbf{q}^0_{\mathcal{K}\setminus\mathcal{L}})}
\hspace{-15pt} p_{\mathbf{\hat Q}_{\mathcal{K} \setminus\mathcal{L}} \mid
\boldsymbol{\Upsilon}, \mathbf{Q}_{\mathcal{L}},
\mathbf{G}^0_{\mathcal{K}\setminus \mathcal{L}}}
(\mathbf{q}'_{\mathcal{K}\setminus\mathcal{L}} \mid
\boldsymbol{\upsilon},
\mathbf{q}_{\mathcal{L}},
\mathbf{g}_{\mathcal{K}\setminus\mathcal{L}}) \cdot 
\notag \\
& \hspace{30pt}
p_{\boldsymbol{\Upsilon}, 
  \mathbf{G}^0_{\mathcal{K}\setminus \mathcal{L}} \mid 
   \mathbf{Q}^0_{\mathcal{K}\setminus\mathcal{L}},
\mathbf{Q}^1_{\mathcal{K}\setminus\mathcal{L}},
\mathbf{Q}_{\mathcal{L}}} (\boldsymbol{\upsilon}, 
  \mathbf{g}_{\mathcal{K}\setminus \mathcal{L}} \mid 
   \mathbf{q}^0_{\mathcal{K}\setminus\mathcal{L}},
\mathbf{q}^1_{\mathcal{K}\setminus\mathcal{L}},
\mathbf{q}_{\mathcal{L}})
\notag \\
& = 
\sum_{\boldsymbol{\upsilon}, \mathbf{g}_{\mathcal{K}\setminus\mathcal{L}}}
\sum_{\mathbf{q}'_{\mathcal{K}\setminus\mathcal{L}} \in  \mathcal{N}_{\tau}
(\mathbf{q}^0_{\mathcal{K}\setminus\mathcal{L}})}
\hspace{-15pt} p_{\mathbf{\hat Q}_{\mathcal{K} \setminus\mathcal{L}} \mid
\boldsymbol{\Upsilon}, \mathbf{Q}_{\mathcal{L}},
\mathbf{G}^0_{\mathcal{K}\setminus \mathcal{L}}}
(\mathbf{q}'_{\mathcal{K}\setminus\mathcal{L}} \mid
\boldsymbol{\upsilon},
\mathbf{q}_{\mathcal{L}},
\mathbf{g}_{\mathcal{K}\setminus\mathcal{L}}) \cdot 
\notag \\
& \hspace{0pt}
p_{\boldsymbol{\Upsilon}, 
  \mathbf{G}^0_{\mathcal{K}\setminus \mathcal{L}} \mid 
   \mathbf{Q}^0_{\mathcal{K}\setminus\mathcal{L}},
   \mathbf{Q}^1_{\mathcal{K}\setminus\mathcal{L}},
\boldsymbol{\Sigma}_{\mathbf{Q}^0_{\mathcal{K}\setminus\mathcal{L}}},
\mathbf{Q}_{\mathcal{L}}} (\boldsymbol{\upsilon}, 
  \mathbf{g}_{\mathcal{K}\setminus \mathcal{L}} \mid 
   \mathbf{q}^0_{\mathcal{K}\setminus\mathcal{L}}, 
   \mathbf{q}^1_{\mathcal{K}\setminus\mathcal{L}}, 
\boldsymbol{\sigma},
\mathbf{q}_{\mathcal{L}})
\notag \\
& = 
\Pr({\mathbf{\hat Q}}_{\mathcal{K} \setminus\mathcal{L}}
\in \mathcal{N}_{\tau}
(\mathbf{q}^0_{\mathcal{K}\setminus\mathcal{L}})
\mid \mathbf{Q}_{\mathcal{K}\setminus\mathcal{L}} 
= \mathbf{q}^0_{\mathcal{K}\setminus\mathcal{L}},
\mathbf{Q}^1_{\mathcal{K}\setminus\mathcal{L}} 
= \mathbf{q}^1_{\mathcal{K}\setminus\mathcal{L}},
\notag \\
& \hspace{40pt}
\boldsymbol{\Sigma}_{\mathbf{Q}_{\mathcal{K}\setminus\mathcal{L}}} 
= \boldsymbol{\sigma}, 
\mathbf{Q}_{\mathcal{L}} = \mathbf{q}_{\mathcal{L}}),
\label{equ:QhatB0}
\end{align}
where the first equality results because $B$ is independent of
$[\boldsymbol{\Upsilon}, \mathbf{G}^0_{\mathcal{K}\setminus
  \mathcal{L}}, \mathbf{Q}^0_{\mathcal{K}\setminus\mathcal{L}},
\mathbf{Q}^1_{\mathcal{K}\setminus\mathcal{L}},
\mathbf{Q}_{\mathcal{L}}]$, the second equality is due to the fact
that
$\boldsymbol{\Sigma}_{\mathbf{Q}^0_{\mathcal{K}\setminus\mathcal{L}}}$
is a deterministic function of
$\mathbf{Q}^0_{\mathcal{K}\setminus\mathcal{L}}$, and the last
equality is simply re-identifying
$\mathbf{G}^0_{\mathcal{K}\setminus\mathcal{L}}$ as
$\mathbf{G}_{\mathcal{K}\setminus\mathcal{L}}$ and
$\mathbf{Q}^0_{\mathcal{K}\setminus\mathcal{L}}$ as
$\mathbf{Q}_{\mathcal{K}\setminus\mathcal{L}}$ to fit the description
in the TEA experiment because
$\mathbf{Q}^0_{\mathcal{K}\setminus\mathcal{L}}$
(resp. $\mathbf{G}^0_{\mathcal{K}\setminus\mathcal{L}}$) has the same
conditional distribution as that of
$\mathbf{Q}_{\mathcal{K}\setminus\mathcal{L}}$
(resp. $\mathbf{G}_{\mathcal{K}\setminus\mathcal{L}}$) given
$[\boldsymbol{\Sigma}_{\mathbf{Q}_{\mathcal{K}\setminus\mathcal{L}}},
\mathbf{Q}_{\mathcal{L}}]$. We will write
$\mathbf{Q}_{\mathcal{K}\setminus\mathcal{L}}$
(resp. $\mathbf{G}_{\mathcal{K}\setminus\mathcal{L}}$) instead of
$\mathbf{Q}^0_{\mathcal{K}\setminus\mathcal{L}}$
(resp. $\mathbf{G}^0_{\mathcal{K}\setminus\mathcal{L}}$) below for
matching the notation in Theorem~\ref{thm:tea}.

Conditioned on $B=1$,
$\mathbf{\hat Q}_{\mathcal{K} \setminus\mathcal{L}}$ is a function of
$\boldsymbol{\Upsilon}$, $\mathbf{Q}_{\mathcal{L}}$, and
$\mathbf{G}^1_{\mathcal{K}\setminus \mathcal{L}}$ instead. To
distinguish from
$\mathbf{\hat Q}_{\mathcal{K} \setminus\mathcal{L}} = \mathbf{\hat
  Q}_{\mathcal{K} \setminus\mathcal{L}}(\boldsymbol{\Upsilon},
\mathbf{Q}_{\mathcal{L}}, \mathbf{G}^0_{\mathcal{K}\setminus
  \mathcal{L}})$, let
$\mathbf{\hat Q}'_{\mathcal{K} \setminus\mathcal{L}} = \mathbf{\hat
  Q}_{\mathcal{K} \setminus\mathcal{L}}(\boldsymbol{\Upsilon},
\mathbf{Q}_{\mathcal{L}}, \mathbf{G}^1_{\mathcal{K}\setminus
  \mathcal{L}})$ in this case. A similar argument as above follows to
show that
\begin{align}
& 
\Pr({\mathbf{\hat Q}}_{\mathcal{K} \setminus\mathcal{L}}
\in \mathcal{N}_{\tau}
(\mathbf{Q}^0_{\mathcal{K}\setminus\mathcal{L}})
\mid \mathbb{Q}(\mathbf{q}^0_{\mathcal{K}\setminus\mathcal{L}},
\mathbf{q}^1_{\mathcal{K}\setminus\mathcal{L}},
\mathbf{q}_{\mathcal{L}}), B=1)
\notag \\
& = 
\Pr(\mathbf{\hat Q}'_{\mathcal{K} \setminus\mathcal{L}}
\in \mathcal{N}_{\tau}
(\mathbf{q}^0_{\mathcal{K}\setminus\mathcal{L}})
\mid \mathbf{Q}_{\mathcal{K}\setminus\mathcal{L}} 
= \mathbf{q}^0_{\mathcal{K}\setminus\mathcal{L}},
\mathbf{Q}^1_{\mathcal{K}\setminus\mathcal{L}} 
= \mathbf{q}^1_{\mathcal{K}\setminus\mathcal{L}},
\notag \\
& \hspace{40pt}
\boldsymbol{\Sigma}_{\mathbf{Q}_{\mathcal{K}\setminus\mathcal{L}}} 
= \boldsymbol{\sigma}, 
\mathbf{Q}_{\mathcal{L}} = \mathbf{q}_{\mathcal{L}}).
\label{equ:QhatB1}
\end{align}
Applying~\eqref{equ:QhatB0} and~\eqref{equ:QhatB1}
to~\eqref{equ:tea-2}, and then conditionally averaging with respect to
$[\mathbf{Q}_{\mathcal{K}\setminus\mathcal{L}},
\mathbf{Q}^1_{\mathcal{K}\setminus\mathcal{L}}]$
gives~\eqref{equ:edra}.

The conditional independence between
$\mathbf{\hat Q}'_{\mathcal{K} \setminus\mathcal{L}}$ and
$\mathbf{Q}_{\mathcal{K}\setminus\mathcal{L}}$ given
$[\boldsymbol{\Sigma}_{\mathbf{Q}_{\mathcal{K}\setminus\mathcal{L}}},
\mathbf{Q}_{\mathcal{L}}]$ follows from
\begin{align}
  & p_{\mathbf{\hat Q}'_{\mathcal{K} \setminus\mathcal{L}} \mid 
    \mathbf{Q}_{\mathcal{K}\setminus\mathcal{L}}, 
    \boldsymbol{\Sigma}_{\mathbf{Q}_{\mathcal{K}\setminus\mathcal{L}}},
    \mathbf{Q}_{\mathcal{L}}} (\mathbf{q}'_{\mathcal{K}\setminus\mathcal{L}} \mid
    \mathbf{q}_{\mathcal{K}\setminus\mathcal{L}}, \boldsymbol{\sigma}, 
    \mathbf{q}_{\mathcal{L}})
    \notag \\
  & = 
    \sum_{\boldsymbol{\upsilon}, \mathbf{g}_{\mathcal{K}\setminus\mathcal{L}}}
    \sum_{\mathbf{q}^1_{\mathcal{K}\setminus\mathcal{L}} :
    \boldsymbol{\Sigma}_{\mathbf{q}^1_{\mathcal{K}\setminus\mathcal{L}}}
    = \boldsymbol{\sigma}} \hspace{-15pt}
    p_{\mathbf{\hat Q}'_{\mathcal{K} \setminus\mathcal{L}} \mid
    \boldsymbol{\Upsilon}, \mathbf{Q}_{\mathcal{L}},
    \mathbf{G}^1_{\mathcal{K}\setminus \mathcal{L}}}
    (\mathbf{q}'_{\mathcal{K}\setminus\mathcal{L}} \mid
    \boldsymbol{\upsilon},
    \mathbf{q}_{\mathcal{L}},
    \mathbf{g}_{\mathcal{K}\setminus\mathcal{L}}) \cdot 
    \notag \\
  & \hspace{00pt}
    p_{\boldsymbol{\Upsilon}, 
    \mathbf{G}^1_{\mathcal{K}\setminus \mathcal{L}},
    \mathbf{Q}^1_{\mathcal{K}\setminus\mathcal{L}}\mid 
    \mathbf{Q}_{\mathcal{K}\setminus\mathcal{L}},
    \boldsymbol{\Sigma}_{\mathbf{Q}_{\mathcal{K}\setminus\mathcal{L}}},
    \mathbf{Q}_{\mathcal{L}}} (\boldsymbol{\upsilon}, 
    \mathbf{g}_{\mathcal{K}\setminus \mathcal{L}},
    \mathbf{q}^1_{\mathcal{K}\setminus\mathcal{L}}        \mid 
    \mathbf{q}_{\mathcal{K}\setminus\mathcal{L}}, 
    \boldsymbol{\sigma},
    \mathbf{q}_{\mathcal{L}})
    \notag \\
  & = 
    \sum_{\boldsymbol{\upsilon}, \mathbf{g}_{\mathcal{K}\setminus\mathcal{L}}}
    p_{\mathbf{\hat Q}'_{\mathcal{K} \setminus\mathcal{L}} \mid
    \boldsymbol{\Upsilon}, \mathbf{Q}_{\mathcal{L}},
    \mathbf{G}^1_{\mathcal{K}\setminus \mathcal{L}}}
    (\mathbf{q}'_{\mathcal{K}\setminus\mathcal{L}} \mid
    \boldsymbol{\upsilon},
    \mathbf{q}_{\mathcal{L}},
    \mathbf{g}_{\mathcal{K}\setminus\mathcal{L}}) \cdot 
    \notag \\
  & \hspace{0pt}
    \sum_{\mathbf{q}^1_{\mathcal{K}\setminus\mathcal{L}} :
    \boldsymbol{\Sigma}_{\mathbf{q}^1_{\mathcal{K}\setminus\mathcal{L}}}
    = \boldsymbol{\sigma}}
    \hspace{-15pt}  p_{\boldsymbol{\Upsilon}, 
    \mathbf{G}^1_{\mathcal{K}\setminus \mathcal{L}} \mid 
    \mathbf{Q}^1_{\mathcal{K}\setminus\mathcal{L}},
    \boldsymbol{\Sigma}_{\mathbf{Q}_{\mathcal{K}\setminus\mathcal{L}}},
    \mathbf{Q}_{\mathcal{L}}} (\boldsymbol{\upsilon}, 
    \mathbf{g}_{\mathcal{K}\setminus \mathcal{L}} \mid 
    \mathbf{q}^1_{\mathcal{K}\setminus\mathcal{L}}, \boldsymbol{\sigma},
    \mathbf{q}_{\mathcal{L}}) \cdot
    \notag \\
  & \hspace{30pt}
    p_{\mathbf{Q}_{\mathcal{K}\setminus\mathcal{L}} \mid
    \boldsymbol{\Sigma}_{\mathbf{Q}_{\mathcal{K}\setminus\mathcal{L}}},
    \mathbf{Q}_{\mathcal{L}}} (
    \mathbf{q}^1_{\mathcal{K}\setminus\mathcal{L}} \mid
    \boldsymbol{\sigma}, \mathbf{q}_{\mathcal{L}}) 
    \notag \\
  & =
    p_{\mathbf{\hat Q}'_{\mathcal{K} \setminus\mathcal{L}} \mid 
    \boldsymbol{\Sigma}_{\mathbf{Q}_{\mathcal{K}\setminus\mathcal{L}}},
    \mathbf{Q}_{\mathcal{L}}} (\mathbf{q}'_{\mathcal{K}\setminus\mathcal{L}} \mid
    \boldsymbol{\sigma}, 
    \mathbf{q}_{\mathcal{L}}),
\label{equ:Qhat'-indept-Q}
\end{align}
where the second equality results because
$\mathbf{Q}_{\mathcal{K}\setminus\mathcal{L}}$ and
$\mathbf{Q}^1_{\mathcal{K}\setminus\mathcal{L}}$ have thee same
conditional distribution and are conditionally
independent given
$[\boldsymbol{\Sigma}_{\mathbf{Q}_{\mathcal{K}\setminus\mathcal{L}}},
\mathbf{Q}_{\mathcal{L}}]$ (see step 2) of the constructed
experiment), and $\mathbf{Q}_{\mathcal{K}\setminus\mathcal{L}}$ and
$[\boldsymbol{\Upsilon},
\mathbf{G}^1_{\mathcal{K}\setminus\mathcal{L}}]$ are conditionally
independent given
$[\mathbf{Q}^1_{\mathcal{K}\setminus\mathcal{L}},
\mathbf{Q}_{\mathcal{K}\setminus\mathcal{L}},
\mathbf{Q}_{\mathcal{L}}]$, which in turn is due to that
$\mathbf{G}^1_{\mathcal{K}\setminus\mathcal{L}}
=\mathbf{Q}^1_{\mathcal{K}\setminus\mathcal{L}} \oplus
\boldsymbol{\Sigma}_{\mathbf{R}_{\mathcal{K},\mathcal{K}\setminus\mathcal{L}}}$
and that $\mathbf{R}_{\mathcal{L},\mathcal{K}}$ is a function in the
form of~\eqref{equ:RLK_fn}. 

Finally, note that
\begin{align}
  & p_{\mathbf{\hat Q}_{\mathcal{K} \setminus\mathcal{L}} \mid 
    \boldsymbol{\Sigma}_{\mathbf{Q}_{\mathcal{K}\setminus\mathcal{L}}},
    \mathbf{Q}_{\mathcal{L}}} (\mathbf{q}'_{\mathcal{K}\setminus\mathcal{L}} \mid
    \boldsymbol{\sigma}, 
    \mathbf{q}_{\mathcal{L}})
    \notag \\
  & = 
    \sum_{\boldsymbol{\upsilon}, \mathbf{g}_{\mathcal{K}\setminus\mathcal{L}}}
    p_{\mathbf{\hat Q}_{\mathcal{K} \setminus\mathcal{L}} \mid
    \boldsymbol{\Upsilon}, \mathbf{Q}_{\mathcal{L}},
    \mathbf{G}_{\mathcal{K}\setminus \mathcal{L}}}
    (\mathbf{q}'_{\mathcal{K}\setminus\mathcal{L}} \mid
    \boldsymbol{\upsilon},
    \mathbf{q}_{\mathcal{L}},
    \mathbf{g}_{\mathcal{K}\setminus\mathcal{L}}) \cdot 
    \notag \\
  & \hspace{0pt}
    \sum_{\mathbf{q}_{\mathcal{K}\setminus\mathcal{L}} :
    \boldsymbol{\Sigma}_{\mathbf{q}_{\mathcal{K}\setminus\mathcal{L}}}
    = \boldsymbol{\sigma}}
    \hspace{-15pt}  p_{\boldsymbol{\Upsilon}, 
    \mathbf{G}_{\mathcal{K}\setminus \mathcal{L}} \mid 
    \mathbf{Q}_{\mathcal{K}\setminus\mathcal{L}},
    \boldsymbol{\Sigma}_{\mathbf{Q}_{\mathcal{K}\setminus\mathcal{L}}},
    \mathbf{Q}_{\mathcal{L}}} (\boldsymbol{\upsilon}, 
    \mathbf{g}_{\mathcal{K}\setminus \mathcal{L}} \mid 
    \mathbf{q}_{\mathcal{K}\setminus\mathcal{L}}, \boldsymbol{\sigma},
    \mathbf{q}_{\mathcal{L}}) \cdot
    \notag \\
  & \hspace{30pt}
    p_{\mathbf{Q}_{\mathcal{K}\setminus\mathcal{L}} \mid
    \boldsymbol{\Sigma}_{\mathbf{Q}_{\mathcal{K}\setminus\mathcal{L}}},
    \mathbf{Q}_{\mathcal{L}}} (
    \mathbf{q}_{\mathcal{K}\setminus\mathcal{L}} \mid
    \boldsymbol{\sigma}, \mathbf{q}_{\mathcal{L}}) 
    \notag \\
  & =
    p_{\mathbf{\hat Q}'_{\mathcal{K} \setminus\mathcal{L}} \mid 
    \boldsymbol{\Sigma}_{\mathbf{Q}_{\mathcal{K}\setminus\mathcal{L}}},
    \mathbf{Q}_{\mathcal{L}}} (\mathbf{q}'_{\mathcal{K}\setminus\mathcal{L}} \mid
    \boldsymbol{\sigma}, 
    \mathbf{q}_{\mathcal{L}}),
\label{equ:pQhat=}
\end{align}
where the second equality results by comparing the expression in the
line above is the same as that in second equality line
of~\eqref{equ:Qhat'-indept-Q} due to the fact that
$p_{\boldsymbol{\Upsilon}, \mathbf{G}_{\mathcal{K}\setminus
    \mathcal{L}} \mid \mathbf{Q}_{\mathcal{K}\setminus\mathcal{L}},
  \boldsymbol{\Sigma}_{\mathbf{Q}_{\mathcal{K}\setminus\mathcal{L}}},
  \mathbf{Q}_{\mathcal{L}}} = p_{\boldsymbol{\Upsilon},
  \mathbf{G}^1_{\mathcal{K}\setminus \mathcal{L}} \mid
  \mathbf{Q}^1_{\mathcal{K}\setminus\mathcal{L}},
  \boldsymbol{\Sigma}_{\mathbf{Q}_{\mathcal{K}\setminus\mathcal{L}}},
  \mathbf{Q}_{\mathcal{L}}}$ and
$p_{\mathbf{\hat Q}_{\mathcal{K} \setminus\mathcal{L}} \mid
  \boldsymbol{\Upsilon}, \mathbf{Q}_{\mathcal{L}},
  \mathbf{G}_{\mathcal{K}\setminus \mathcal{L}}} = p_{\mathbf{\hat
    Q}'_{\mathcal{K} \setminus\mathcal{L}} \mid \boldsymbol{\Upsilon},
  \mathbf{Q}_{\mathcal{L}}, \mathbf{G}^1_{\mathcal{K}\setminus
    \mathcal{L}}}$ as $\mathbf{G}_{\mathcal{K}\setminus \mathcal{L}}$
(resp. $\mathbf{\hat Q}_{\mathcal{K} \setminus\mathcal{L}}$) and
$\mathbf{G}^1_{\mathcal{K}\setminus \mathcal{L}}$ (resp.
$\mathbf{\hat Q}'_{\mathcal{K} \setminus\mathcal{L}}$) are obtained
from the same function with
$\mathbf{Q}_{\mathcal{K}\setminus\mathcal{L}}$
(resp. $\mathbf{G}_{\mathcal{K}\setminus\mathcal{L}}$) and
$\mathbf{Q}^1_{\mathcal{K}\setminus\mathcal{L}}$
(resp. $\mathbf{G}^1_{\mathcal{K}\setminus\mathcal{L}}$) as the
respective input arguments.



\section{Conclusion}
In this paper, we develop a privacy-preserving event detection scheme
for the generalized $K$-sample problem. In the proposed scheme, the
marginal types of sensors' measurements are first obfuscated with ZMS
random numbers, and then sent to the fusion center for the calculation
of a decision statistic based on the Hellinger diameter measure, so
that the privacy of individual sensors' data can be protected. We present
analysis to show that the proposed detection scheme 1) is optimal in
the sense that it achieves the best type-I error exponent when the
type-II error rate is required to be negligible, and 2) is secure
against any PPT attacker in the sense that the probability advantage
of the attacker successfully estimating the sensors' measured type
over independent guessing is negligible. The combination of these two
results implies that the additional requirement of privacy protection
does not fundamentally require any tradeoff in achieving the optimal
type-I error exponent in the generalized $K$-sample problem.

\appendices

\section{Proof of Theorem~\ref{thm:errexp}} \label{app:EXPproof}

In the proof below, we assume that the diameter measure $d(\cdot)$ is
bounded by a positive constant $d_{\max}$. Thus, we have
$0 \leq d_0 < d_1 \leq d_{\max}$. Note that this assumption is not
restrictive because there are simply more edge conditions to check
when $d(\cdot)$ is bounded. If $d(\cdot)$ is not bounded, one may
simply regard $d_{\max} = \infty$ and $\alpha_*(d_{\max}) = \infty$,
and make appropriate changes to the respective edge conditions below.

\subsubsection*{Useful properties}
We start by noting a number of properties of the functions involved in
Theorem~\ref{thm:errexp}. We will use these properties in proving the
various parts of the theorem below.

First, both $\Delta_0(\cdot)$ and $\Delta_1(\cdot)$ are bounded
continuous functions in $\mathcal{P}(\mathcal{X}^K)$ due to the
continuity of the KL divergence. In addition, both have positive
maximum values since $0 \leq d_0 < d_1$.  Next, both
$\alpha_*(\cdot)$ and $\gamma_*(\cdot)$ are clearly
non-decreasing. It is easy to see that $\alpha_*(\gamma) = 0$
for $0 \leq \gamma \leq d_0$, and that
$0< \alpha_*(d_1) \leq \alpha_*(d_{\max})$ as
$0 \leq d_0 < d_1 \leq d_{\max}$.
Because of the continuity of the functions $\Delta_0(\cdot)$ and
$d(\cdot)$ in $\mathcal{P}(\mathcal{X}^K)$, it is also easy to check that
$\alpha_*(\cdot)$ is right-continuous on $[0,d_{\max})$ and is
left-continuous on $(0,d_{\max}]$.
Similarly, $\beta^*(\cdot)$ is non-increasing and is continuous on
$(0, \infty)$.

We note that $\alpha_*(\gamma_*(\alpha)) \geq \alpha$ for
$\alpha \in [0,\alpha_* (d_{\max})]$. Indeed, as a consequence of its
right-continuity on $[0,d_{\max})$,
$\alpha_*(\gamma_*(\alpha)) \geq \alpha$ as long as
$\gamma_*(\alpha) < d_{\max}$. On the other hand, if
$\gamma_*(\alpha) = d_{\max}$, we have
$\alpha_*(\gamma_*(\alpha)) = \alpha_*(d_{\max}) \geq \alpha$
trivially.

In addition, for any $\gamma \in [0,d_{\max}]$, we have
$\gamma_*(\alpha) \geq \gamma$ if $\alpha > \alpha_*(\gamma)$.
Indeed, if $\alpha \in (\alpha_*(\gamma), \alpha_*(d_{\max})]$, we
have $\alpha_*(\gamma_*(\alpha)) \geq \alpha > \alpha_*(\gamma)$,
which in turn gives $\gamma_*(\alpha) \geq \gamma$ as
$\alpha_*(\cdot)$ is non-decreasing.  On the other hand, if
$\alpha > \alpha_*(d_{\max})$, then
$\gamma_*(\alpha) = d_{\max} \geq \gamma$ trivially.

\subsubsection*{Proof of (i)}
For $\alpha \in [0,\alpha_*(d_{\max})]$,
$\alpha_*(\gamma_*(\alpha)) \geq \alpha$ implies that
$\Delta_0(\mathbf{p}_{\mathcal{K}}) \geq \alpha$ if
$d(\mathbf{p}_{\mathcal{K}}) \geq \gamma_*(\alpha)$, which is
equivalent to that $d(\mathbf{p}_{\mathcal{K}}) < \gamma_*(\alpha)$ if
$\Delta_0(\mathbf{p}_{\mathcal{K}}) < \alpha$. This latter assertion
gives $\beta_*(\alpha) \leq \beta^*(\alpha)$.  On the other hand, for
$\alpha > \alpha_*(d_{\max})$, $\gamma_*(\alpha) = d_{\max}$, which
implies $\beta_*(\alpha)=0$.  Hence, we have
$\beta_*(\alpha) \leq \beta^*(\alpha)$ trivially.

\subsubsection*{Proof of (ii)}
We have $\beta^*(\alpha) > 0$ if $\beta_*(\alpha) >0$ from (i). We
need to show the other direction of implication. Suppose
$\beta^*(\alpha) > 0$. Then, there must exist an $\eta > 0$ such that
$d(\mathbf{p}_{\mathcal{K}}) \leq d_1 - \eta$ whenever
$\Delta_0(\mathbf{p}_{\mathcal{K}}) < \alpha$. This implies
$\gamma_*(\alpha) \leq d_1 - \eta$; otherwise there would be a
$\gamma \in (d_1-\eta, \gamma_*(\alpha) )$ and a
$\mathbf{p}_{\mathcal{K}}$ satisfying
$\Delta_0(\mathbf{p}_{\mathcal{K}}) < \alpha$ and
$d(\mathbf{p}_{\mathcal{K}}) \geq \gamma > d_1-\eta$. But
$\gamma_*(\alpha) \leq d_1 - \eta$ then forces $\beta_*(\alpha)>0$.

\subsubsection*{Proof of (iii)}
First, note that $\beta^*(0)=\infty$. Moreover, $\beta_*(\alpha) = 0$
if $\gamma_*(\alpha) \geq d_1$. From (ii), the continuity of
$\beta^*(\cdot)$, and the fact that $\gamma_*(\alpha) \geq d_1$ if
$\alpha > \alpha_*(d_1)$, we then have $\beta^*(\alpha) = 0$ if
$\alpha \geq \alpha_*(d_1)$.

Consider the Hoeffding test:
\begin{equation}\label{equ:besthypo} 
\begin{aligned} 
H_{0} &: \theta=0\quad \quad \text{if} \
  \Delta_0({\mathbf{\tilde Q}}_{\mathcal{K}}) < \gamma \\
H_{1} &: \theta=1\quad \quad \text{if} \
  \Delta_0({\mathbf{\tilde Q}}_{\mathcal{K}}) \geq \gamma 
\end{aligned} 
\end{equation} 
where
$\mathbf{\tilde Q}_{\mathcal{K}} \in
\tilde{\mathcal{Q}}_t(\mathcal{X}^K)$ is the joint type of all sensor
measurements, and $\gamma \geq 0$ is a detection threshold.  Base on
this test, we prove the achievability of $(\alpha, \beta^*(\alpha))$.
By setting the threshold $\gamma$ in the test~\eqref{equ:besthypo} to
$0$, we have $\mathcal{R}_t = \emptyset$, which gives $\mu_t=1$ and
$\lambda_t = 0$. Hence, $(0, \infty)$, i.e., $(0, \beta^*(0))$, is
achievable. On the other hand, by setting
$\gamma > \max_{\mathbf{p}_{\mathcal{K}}}
\Delta_0(\mathbf{p}_{\mathcal{K}})$ in~\eqref{equ:besthypo}, we have
$\mathcal{R}_t = \mathcal{X}^{Kt}$, which gives $\mu_t=0$ and
$\lambda_t = 1$. As a result, $(\infty, 0)$, and hence
$(\alpha, \beta^*(\alpha))$ for all $\alpha \geq \alpha_*(d_1)$, are
also achievable.

It remains to show the achievability of $(\alpha, \beta^*(\alpha))$
for $\alpha \in (0,\alpha_*(d_1))$.  To that end, set the threshold
$\gamma = \alpha$ in the test~\eqref{equ:besthypo}. Then,
$\mathcal{R}_t =\{\mathbf{x}_{\mathcal{K}}\in \mathcal{X}^{Kt}:
\Delta_0(\tilde{\mathbf{q}}_{\mathbf{x}_{\mathcal{K}}}) < \alpha\}$ is
the acceptance region. By Sanov's theorem (see
\cite[Theorem~11.4.1]{info}), we have
  \begin{align*}
  & \mu_t
  =  \max_{\mathbf{p}_{0,\mathcal{K}}\in\mathcal{P}_{0, \mathcal{K}}} \mathbf{p}_{0,\mathcal{K}}
    (\mathcal{R}_t^c)
    \notag \\
  &\leq
    \max_{\mathbf{p}_{0,\mathcal{K}}\in\mathcal{P}_{0, \mathcal{K}}} (t+1)^{|\mathcal{X}|^K}
    2^{-t \min_{\mathbf{\tilde q}_{\mathcal{K}} \in
    \tilde{\mathcal{Q}}_t(\mathcal{X}^K):
    \Delta_0(\mathbf{\tilde q}_{\mathcal{K}}) \geq \alpha}
    D(\mathbf{\tilde q}_{\mathcal{K}}\| \mathbf{p}_{0,\mathcal{K}})}
    \notag \\
  &\leq
    (t+1)^{|\mathcal{X}|^K} \cdot2^{-t \alpha},
\end{align*}
which leads to
$\liminf_{t\rightarrow\infty}-\frac{1}{t}\log_2 \mu_t \geq \alpha$.

Similarly,
\begin{align*}
&\lambda_t
=
\max_{\mathbf{p}_{1,\mathcal{K}}\in\mathcal{P}_{1,\mathcal{K}}}
\mathbf{p}_{1,\mathcal{K}} (\mathcal{R}_t)
\notag \\
&\leq 
\max_{\mathbf{p}_{1,\mathcal{K}}\in\mathcal{P}_{1,\mathcal{K}}}
(t+1)^{|\mathcal{X}|^K}
2^{-t \min_{\mathbf{\tilde q}_{\mathcal{K}} \in
    \tilde{\mathcal{Q}}_t(\mathcal{X}^K):
    \Delta_0(\mathbf{\tilde q}_{\mathcal{K}}) < \alpha}
D(\mathbf{\tilde q}_{\mathcal{K}}\| \mathbf{p}_{1,\mathcal{K}})}
\notag \\
& \leq
(t+1)^{|\mathcal{X}|^K} \cdot2^{-t \beta^*(\alpha)},
\end{align*}
which leads to
$\liminf_{t\rightarrow\infty}-\frac{1}{t}\log_2 \lambda_t \geq
\beta^*(\alpha)$. 

Write $S(\alpha)=\sup\{\beta: (\alpha,\beta) \text{~is
  achievable}\}$. Then, the achievability of
$(\alpha, \beta^*(\alpha))$ implies that
$S(\alpha) \geq \beta^*(\alpha)$. We will show below
$S(\alpha) \leq \beta^*(\alpha)$. If $\beta^*(\alpha) = \infty$, there
is nothing to show. Hence, it suffices to consider the case of
$\beta^*(\alpha) < \infty$, which implies $\alpha>0$. Thus, we may
assume both these restrictions below.

Let $\mathcal{R}_t$ be the acceptance region giving
$\liminf_{t\rightarrow\infty}-\frac{1}{t}\log_2 \mu_t \geq \alpha$.
For any $\epsilon > 0$,
\begin{align}
2^{-t(\alpha-\epsilon)} 
&\geq \mu_t 
= \max_{\mathbf{p}_{0,\mathcal{K}}\in\mathcal{P}_{0,\mathcal{K}}} 
  \mathbf{p}_{0,\mathcal{K}}(\mathcal{R}^c_t)
\notag \\
&
= \max_{\mathbf{p}_{0,\mathcal{K}}\in\mathcal{P}_{0,\mathcal{K}}} 
\sum_{{\mathbf{\tilde q}}_{\mathcal{K}}\in\tilde{\mathcal{Q}}^K_t(\mathcal{X})}
\mathbf{p}_{0,\mathcal{K}}(\mathcal{R}^c_t\cap
\mathcal{T}({\mathbf{\tilde q}}_{\mathcal{K}}))
\notag \\
& \geq 
\max_{\mathbf{p}_{0,\mathcal{K}}\in\mathcal{P}_{0,\mathcal{K}}} 
|\mathcal{R}^c_t\cap\mathcal{T}({\mathbf{\tilde q}}_{\mathcal{K}})| \cdot
2^{-t(H({\mathbf{\tilde q}}_{\mathcal{K}})+ D(\tilde{\mathbf{q}}_{\mathcal{K}}\|\mathbf{p}_{0,\mathcal{K}}))}
\notag \\
& = 
|\mathcal{R}^c_t\cap\mathcal{T}({\mathbf{\tilde q}}_{\mathcal{K}})| \cdot
2^{-t(H({\mathbf{\tilde q}}_{\mathcal{K}})+
            \Delta_0(\tilde{\mathbf{q}}_{\mathcal{K}}))}
\label{equ:cardinality2}
\end{align}
for all
$\tilde{\mathbf{q}}_{\mathcal{K}} \in
\tilde{\mathcal{Q}}_t(\mathcal{X}^K)$, whenever $t$ is sufficiently
large. In~\eqref{equ:cardinality2},
$H({\mathbf{\tilde q}}_{\mathcal{K}})$ is the entropy of
${\mathbf{\tilde q}}_{\mathcal{K}}$, and the second inequality is due to
\cite[Theorem~11.1.2]{info}.

On the other hand, write
\[
  \beta^*_t (\alpha) = \min_{\mathbf{\tilde
      q}_{\mathcal{K}}\in\tilde{\mathcal{Q}}_t(\mathcal{X}^K):
    \Delta_0(\mathbf{\tilde q}_{\mathcal{K}})<\alpha}
  \Delta_1({\mathbf{\tilde q}}_{\mathcal{K}}).
\]
Then, for any small enough $\epsilon > 0$, we have
\begin{align}
\lambda_t
&=
\max_{\mathbf{p}_{1,\mathcal{K}}\in\mathcal{P}_{1,\mathcal{K}}}
\sum_{{\mathbf{\tilde q}}_{\mathcal{K}}\in\tilde{\mathcal{Q}}_t(\mathcal{X}^K)}
\mathbf{p}_{1,\mathcal{K}}(\mathcal{R}_t\cap
\mathcal{T}({\mathbf{\tilde q}}_{\mathcal{K}}))
\notag\\
&\geq \hspace{-20pt}
\max_{\substack{
\mathbf{p}_{1,\mathcal{K}}\in\mathcal{P}_{1,\mathcal{K}} \\
\mathbf{\tilde q}_{\mathcal{K}}\in\tilde{\mathcal{Q}}_t(\mathcal{X}^K):
\Delta_0(\mathbf{\tilde q}_{\mathcal{K}})<\alpha-2\epsilon}
} \hspace{-25pt}
|\mathcal{R}_t\cap\mathcal{T}({\mathbf{\tilde q}}_{\mathcal{K}})|
\cdot 2^{-t(H({\mathbf{\tilde q}}_{\mathcal{K}})+
D({\mathbf{\tilde q}}_{\mathcal{K}}\|\mathbf{p}_{1,\mathcal{K}}))}
\notag \\
&=  \hspace{-20pt}
\max_{
\mathbf{\tilde q}_{\mathcal{K}}\in\tilde{\mathcal{Q}}_t(\mathcal{X}^K):
\Delta_0(\mathbf{\tilde q}_{\mathcal{K}})<\alpha-2\epsilon}
\hspace{-0pt}
\begin{aligned}[t]
&\left(|\mathcal{T}({\mathbf{\tilde
  q}}_{\mathcal{K}})|  - |\mathcal{R}^c_t\cap\mathcal{T}({\mathbf{\tilde
  q}}_{\mathcal{K}})| \right)
\cdot 
\\
&\hspace{30pt}
2^{-t(H({\mathbf{\tilde q}}_{\mathcal{K}})+
\Delta_1({\mathbf{\tilde q}}_{\mathcal{K}}))}
\end{aligned}
\notag \\            
& \geq
\hspace{-20pt}
\max_{
\mathbf{\tilde q}_{\mathcal{K}}\in\tilde{\mathcal{Q}}_t(\mathcal{X}^K):
\Delta(\mathbf{\tilde q}_{\mathcal{K}})<\alpha-2\epsilon}
\hspace{-35pt}
\begin{aligned}[t]
&\big((t+1)^{-|\mathcal{X}|^K}
-2^{t(\Delta_0({\mathbf{\tilde
      q}}_{\mathcal{K}})-\alpha+\epsilon)}\big)
\cdot
2^{-t \Delta_1({\mathbf{\tilde q}}_{\mathcal{K}})} 
\end{aligned}
\notag \\
& \geq
\hspace{-20pt}
\max_{
\mathbf{\tilde q}_{\mathcal{K}}\in\tilde{\mathcal{Q}}_t(\mathcal{X}^K):
\Delta_0(\mathbf{\tilde q}_{\mathcal{K}})<\alpha-2\epsilon}
\hspace{-20pt}
\big((t+1)^{-|\mathcal{X}|^K} - 2^{-t\epsilon} \big) \cdot
2^{-t \Delta_1({\mathbf{\tilde q}}_{\mathcal{K}})} 
\notag \\
& \geq
\frac{1}{2} (t+1)^{-|\mathcal{X}|^K} \cdot
2^{-t \beta^*_t (\alpha - 2\epsilon)},
\label{equ:lambdabound}
\end{align}
whenever $t$ is sufficiently large, where the second inequality is due
to~\eqref{equ:cardinality2} and~\cite[Theorem~11.1.3]{info}. 

Because $\Delta_0(\cdot)$ and $\Delta_1(\cdot) $ are continuous in
$\mathcal{P}(\mathcal{X}^K)$,
$\bigcup_{t\geq 1}\tilde{\mathcal{Q}}_t(\mathcal{X}^K)$ is dense in
$\mathcal{P}(\mathcal{X}^K)$ and $\beta^*(\cdot)$ is continuous,
we have for every $\eta>0$,
\begin{equation}\label{equ:beta'_t2}
\beta_t^*(\alpha-2\epsilon)\leq \beta^*(\alpha-2\epsilon)+\eta \leq
\beta^*(\alpha) + 2\eta,
\end{equation}
whenever $t$ is sufficiently large and $\epsilon >0$ is sufficiently
small.  Putting~\eqref{equ:beta'_t2} back
into~\eqref{equ:lambdabound}, we get 
\begin{equation*}
-\frac{1}{t}\log \lambda_t\leq
\frac{1}{t}\left(|\mathcal{X}|^K \log(t+1)+1\right)+\beta^*(\alpha)+2\eta,
\end{equation*}
which implies $S(\alpha) \leq \beta^*(\alpha)$ by letting $\eta
\rightarrow 0$.

\subsubsection*{Proof of (iv)}
First, for any $\gamma < d_1$ and $\alpha \leq \alpha_*(\gamma)$, we
clearly have $\gamma_*(\alpha) \leq \gamma < d_1$ and thus
$\beta_*(\alpha) > 0$, which also implies $\beta^*(\alpha) > 0$ from
(ii).  From (iii), $(\alpha, \beta^*(\alpha))$ is an achievable pair
for every $\alpha \geq 0$. Hence, for any $\gamma < d_1$ and
$\alpha \leq \alpha_*(\gamma)$, $\alpha$ is also achievable. Write
$s_* = \sup\{ \alpha: \alpha \text{~is achievable}\}$. Then, the
continuity and non-decreasing nature of $\alpha_*(\cdot)$ give
$s_* \geq \alpha_*(d_1)$. It remains to prove $s_* \leq \alpha_*(d_1)$
below.

Let $\mathcal{R}_t$ be the acceptance region giving
$\lim_{t\rightarrow \infty} \lambda_t = 0$, which implies
$\lim_{t\rightarrow \infty} \mathbf{p}_{1,\mathcal{K}}(\mathcal{R}_t)
= 0$ for every
$\mathbf{p}_{1,\mathcal{K}}\in\mathcal{P}_{1,\mathcal{K}}$.  Then,
for each $\epsilon > 0$, whenever $t$ is sufficiently large, we have
\begin{align}
&1-\epsilon 
\leq 
\mathbf{p}_{1,\mathcal{K}} (\mathcal{R}^c_t)
\notag \\
& = \hspace{-10pt}
\sum_{\mathbf{\tilde q}_{\mathcal{K}} \in \tilde{\mathcal{Q}}_t(\mathcal{X}^K):
            D(\mathbf{\tilde q}_{\mathcal{K}} \|
            \mathbf{p}_{1,\mathcal{K}}) \geq \epsilon}
\hspace{-20pt}
|\mathcal{R}^c_t\cap \mathcal{T}(\mathbf{\tilde q}_{\mathcal{K}})| \cdot
2^{-t(H(\mathbf{\tilde q}_{\mathcal{K}} ) + D(\mathbf{\tilde
  q}_{\mathcal{K}}  \| \mathbf{p}_{1,\mathcal{K}}))}
\notag \\
& \hspace{10pt} + \hspace{-18pt}
\sum_{\mathbf{\tilde q}_{\mathcal{K}} \in \tilde{\mathcal{Q}}_t(\mathcal{X}^K):
            D(\mathbf{\tilde q}_{\mathcal{K}} \|
            \mathbf{p}_{1,\mathcal{K}}) < \epsilon}
\hspace{-20pt}
|\mathcal{R}^c_t\cap \mathcal{T}(\mathbf{\tilde q}_{\mathcal{K}})| \cdot
2^{-t(H(\mathbf{\tilde q}_{\mathcal{K}} ) + D(\mathbf{\tilde
  q}_{\mathcal{K}}  \| \mathbf{p}_{1,\mathcal{K}}))}
\notag \\
& \leq
\hspace{-10pt}
\sum_{\mathbf{\tilde q}_{\mathcal{K}} \in \tilde{\mathcal{Q}}_t(\mathcal{X}^K):
            D(\mathbf{\tilde q}_{\mathcal{K}} \|
            \mathbf{p}_{1,\mathcal{K}}) \geq \epsilon}
\hspace{-20pt}
|\mathcal{T}(\mathbf{\tilde q}_{\mathcal{K}})| \cdot
2^{-t(H(\mathbf{\tilde q}_{\mathcal{K}} ) + \epsilon)}
\notag \\
& \hspace{10pt} + \hspace{-18pt}
\sum_{\mathbf{\tilde q}_{\mathcal{K}} \in \tilde{\mathcal{Q}}_t(\mathcal{X}^K):
            D(\mathbf{\tilde q}_{\mathcal{K}} \|
            \mathbf{p}_{1,\mathcal{K}}) < \epsilon}
\hspace{-20pt}
|\mathcal{R}^c_t\cap \mathcal{T}(\mathbf{\tilde q}_{\mathcal{K}})| \cdot
2^{-tH(\mathbf{\tilde q}_{\mathcal{K}} )} 
\notag \\
& \leq 
(t+1)^{|\mathcal{X}|^K} 2^{-t\epsilon}
+ \hspace{-25pt}
\sum_{\mathbf{\tilde q}_{\mathcal{K}} \in \tilde{\mathcal{Q}}_t(\mathcal{X}^K):
            D(\mathbf{\tilde q}_{\mathcal{K}} \|
            \mathbf{p}_{1,\mathcal{K}}) < \epsilon}
\hspace{-25pt}
|\mathcal{R}^c_t\cap \mathcal{T}(\mathbf{\tilde q}_{\mathcal{K}})|
            \cdot 2^{-tH(\mathbf{\tilde q}_{\mathcal{K}} )},
\label{equ:1-lambda}
\end{align}
where the equality results from~\cite[Theorem~11.1.2]{info} and the
last inequality is the consequence of~\cite[Theorems~11.1.1
and~11.1.3]{info}.  From~\eqref{equ:1-lambda}, for each $\epsilon >0$,
whenever $t$ is sufficiently large,
\begin{align}
\mu_t & \geq 
  \mathbf{p}_{0,\mathcal{K}}(\mathcal{R}^c_t) 
\notag \\
& \geq
\hspace{-10pt}
\sum_{\mathbf{\tilde q}_{\mathcal{K}} \in \tilde{\mathcal{Q}}_t(\mathcal{X}^K):
            D(\mathbf{\tilde q}_{\mathcal{K}} \|
            \mathbf{p}_{1,\mathcal{K}}) < \epsilon}
\hspace{-30pt}
|\mathcal{R}^c_t\cap \mathcal{T}(\mathbf{\tilde q}_{\mathcal{K}})| \cdot
2^{-t(H(\mathbf{\tilde q}_{\mathcal{K}} ) + D(\mathbf{\tilde
  q}_{\mathcal{K}}  \| \mathbf{p}_{0,\mathcal{K}}))}
\notag \\
& \geq 
2^{-t\max_{\mathbf{\tilde q}_{\mathcal{K}} \in \tilde{\mathcal{Q}}_t(\mathcal{X}^K):
            D(\mathbf{\tilde q}_{\mathcal{K}} \| \mathbf{p}_{1,\mathcal{K}}) < \epsilon} 
            D(\mathbf{\tilde q}_{\mathcal{K}}  \| \mathbf{p}_{0,\mathcal{K}})}
\notag \\
& \hspace{20pt} \cdot
\left( 1 - \epsilon - (t+1)^{|\mathcal{X}|^K} 2^{-t\epsilon} \right)
\notag \\
& \geq
2^{-t\sup_{\mathbf{p}_{\mathcal{K}} \in \mathcal{P}(\mathcal{X}^K):
            D(\mathbf{p}_{\mathcal{K}} \| \mathbf{p}_{1,\mathcal{K}}) < \epsilon} 
            D(\mathbf{p}_{\mathcal{K}}  \| \mathbf{p}_{0,\mathcal{K}})}
\notag \\
& \hspace{20pt} \cdot
\left( 1 - \epsilon - (t+1)^{|\mathcal{X}|^K} 2^{-t\epsilon} \right),
\label{equ:mu>}
\end{align}
for every $\mathbf{p}_{0,\mathcal{K}}\in\mathcal{P}_{0, \mathcal{K}}$
and $\mathbf{p}_{1,\mathcal{K}}\in\mathcal{P}_{1,\mathcal{K}}$.
Further, since
\[
\lim_{\epsilon \rightarrow 0} \sup_{\mathbf{p}_{\mathcal{K}} \in
    \mathcal{P}(\mathcal{X}^K): D(\mathbf{p}_{\mathcal{K}} \|
    \mathbf{p}_{1,\mathcal{K}}) < \epsilon} D(\mathbf{p}_{\mathcal{K}}
  \| \mathbf{p}_{0,\mathcal{K}})
=
D(\mathbf{p}_{1,\mathcal{K}} \| \mathbf{p}_{0,\mathcal{K}}),
\]
we have
$\liminf_{t\rightarrow\infty}-\frac{1}{t}\log_2 \mu_t \leq
D(\mathbf{p}_{1,\mathcal{K}} \| \mathbf{p}_{0,\mathcal{K}})$ for every
$\mathbf{p}_{0,\mathcal{K}}\in\mathcal{P}_{0,\mathcal{K}}$ and
$\mathbf{p}_{1,\mathcal{K}}\in\mathcal{P}_{1,\mathcal{K}}$
from~\eqref{equ:mu>}.  As a result,
\[
\liminf_{t\rightarrow\infty}-\frac{1}{t}\log_2 \mu_t 
\leq \min_{\substack{\mathbf{p}_{0,\mathcal{K}}\in\mathcal{P}_0^K \\
  \mathbf{p}_{1,\mathcal{K}}\in\mathcal{P}_1^K}}
D(\mathbf{p}_{1,\mathcal{K}} \| \mathbf{p}_{0,\mathcal{K}}) =
\alpha_*(d_1), 
\]
which implies $s_* \leq \alpha_*(d_1)$.

\subsubsection*{Proof of (v)}
As in the proof of (iii) above, we have $\beta_*(0) = \infty$ and
$\beta_*(\alpha) = 0$ if $\alpha > \alpha_*(d_1)$.  By setting the
threshold $\gamma$ in the test~\eqref{equ:hypo} to $0$ and to any
value strictly larger than $d_{\max}$, we have $(0,\beta_*(0))$ and
$(\alpha, \beta_*(\alpha))$ for all $\alpha > \alpha_*(d_1)$
achievable using the test~\eqref{equ:hypo}. 

It remains to show the achievability of $(\alpha, \beta_*(\alpha))$
for $\alpha \in (0,\alpha_*(d_1)]$. To that end, set the threshold
$\gamma$ in the test~\eqref{equ:hypo} to $\gamma_*(\alpha)$. Then,
$\mathcal{R}_t =\{\mathbf{x}_{\mathcal{K}}\in \mathcal{X}^{Kt}:
d(\tilde{\mathbf{q}}_{\mathbf{x}_{\mathcal{K}}}) < \gamma_*(\alpha)\}$
is the acceptance region.  By Sanov's theorem again, 
\begin{align*}
  & \mu_t
  =
    \max_{\mathbf{p}_{0,\mathcal{K}}\in\mathcal{P}_{0,\mathcal{K}}} \mathbf{p}_{0,\mathcal{K}}
    (\mathcal{R}_t^c)
    \notag \\
  &\leq
    \max_{\mathbf{p}_{0,\mathcal{K}}\in\mathcal{P}_{0,\mathcal{K}}} (t+1)^{|\mathcal{X}|^K}
    2^{-t \min_{\mathbf{\tilde q}_{\mathcal{K}} \in
    \tilde{\mathcal{Q}}_t(\mathcal{X}^K):
    d(\mathbf{\tilde q}_{\mathcal{K}}) \geq \gamma_*(\alpha)}
    D(\mathbf{\tilde q}_{\mathcal{K}}\| \mathbf{p}_{0,\mathcal{K}})}
    \notag \\
  & \leq
    (t+1)^{|\mathcal{X}|^K} \cdot
    2^{-t\alpha_*(\gamma_*(\alpha))}
    \notag \\
  &\leq
    (t+1)^{|\mathcal{X}|^K} \cdot2^{-t \alpha},
\end{align*}
which leads to
$\liminf_{t\rightarrow\infty}-\frac{1}{t}\log_2 \mu_t \geq \alpha$.
Similarly,
\begin{align*}
&\lambda_t
=
\max_{\mathbf{p}_{1,\mathcal{K}}\in\mathcal{P}_{1,\mathcal{K}}}
\mathbf{p}_{1,\mathcal{K}} (\mathcal{R}_t)
\notag \\
&\leq 
\max_{\mathbf{p}_{1,\mathcal{K}}\in\mathcal{P}_{1,\mathcal{K}}}
(t+1)^{|\mathcal{X}|^K}
2^{-t \min_{\mathbf{\tilde q}_{\mathcal{K}} \in
    \tilde{\mathcal{Q}}_t(\mathcal{X}^K):
    d(\mathbf{\tilde q}_{\mathcal{K}}) < \gamma_*(\alpha)}
D(\mathbf{\tilde q}_{\mathcal{K}}\| \mathbf{p}_{1,\mathcal{K}})}
\notag \\
& \leq
(t+1)^{|\mathcal{X}|^K} \cdot2^{-t \beta_*(\alpha)},
\end{align*}
which leads to
$\liminf_{t\rightarrow\infty}-\frac{1}{t}\log_2 \lambda_t \geq
\beta_*(\alpha)$. 

As shown in the proof of (iv) above, $\beta_*(\alpha) > 0$ for any
$\gamma < d_1$ and $\alpha \leq \alpha_*(\gamma)$. Thus, the
achievability of $(\alpha, \beta_*(\alpha))$ by the
test~\eqref{equ:hypo} and the continuity of $\alpha_*(\cdot)$ implies
that
$\sup\{ \alpha \text{~achieved by the test~\eqref{equ:hypo}}\} =
\alpha_*(d_1)$.

\section{Proofs of Useful Lemmas} \label{sec:prooflemmas}
In this appendix, we give the proofs of Lemmas~\ref{lem:useful2}
and~\ref{lem:useful3}, The proof of Lemma~\ref{lem:useful4} is trivial
and is omitted.

\subsection{Proof of Lemma~\ref{lem:useful2}}

Let
$\mathcal{J}={\mathcal{K}} \setminus (\mathcal{I} \cup
\mathcal{L} )$. Then for any
$\boldsymbol{\sigma}_{\mathcal{K}\setminus\mathcal{L}} \in
\mathcal{N}_m^{(K-L)|\mathcal{X}|}$,
\begin{align}
&
p_{\boldsymbol{\Sigma}_{\mathbf{R}_{\mathcal{I},\mathcal{K}\setminus\mathcal{L}}}
\mid \mathbf{R}_{\mathcal{I},\mathcal{L}}}
(\boldsymbol{\sigma}_{\mathcal{K}\setminus\mathcal{L}} \mid
\mathbf{r}_{\mathcal{I},\mathcal{L}})
\notag \\
&=
\sum_{\mathbf{r}_{\mathcal{I},\mathcal{J}}\in\mathcal{N}_m^{2(K-L-2)|\mathcal{X}|}}
p_{\boldsymbol{\Sigma}_{\mathbf{R}_{\mathcal{I},\mathcal{K}\setminus\mathcal{L}}}
\mid \mathbf{R}_{\mathcal{I},\mathcal{J}} , \mathbf{R}_{\mathcal{I},\mathcal{L}}}
(\boldsymbol{\sigma}_{\mathcal{K}\setminus\mathcal{L}} \mid
            \mathbf{r}_{\mathcal{I},\mathcal{J}}, \mathbf{r}_{\mathcal{I},\mathcal{L}})
\notag \\
& \hspace{60pt}
\cdot p_{\mathbf{R}_{\mathcal{I},\mathcal{J}}\mid  \mathbf{R}_{\mathcal{I},\mathcal{L}}}
(\mathbf{r}_{\mathcal{I},\mathcal{J}}\mid  \mathbf{r}_{\mathcal{I},\mathcal{L}})
\notag \\
&=
2^{-2(K-L-2)m|\mathcal{X}|} 
\notag \\
& \hspace{10pt} \cdot \hspace{-10pt}
\sum_{\mathbf{r}_{\mathcal{I},\mathcal{J}}\in\mathcal{N}_m^{2(K-L-2)|\mathcal{X}|}}
p_{\boldsymbol{\Sigma}_{\mathbf{R}_{\mathcal{I},\mathcal{I}}
\mid \mathbf{R}_{\mathcal{I},\mathcal{J}},
            \mathbf{R}_{\mathcal{I},\mathcal{L}}}}
(\boldsymbol{\sigma}_{\mathcal{I}}
\mid \mathbf{r}_{\mathcal{I},\mathcal{J}}, \mathbf{r}_{\mathcal{I},\mathcal{L}})
\notag \\
& \hspace{40pt} \cdot
p_{\boldsymbol{\Sigma}_{\mathbf{R}_{\mathcal{I},\mathcal{J}} \mid \mathbf{R}_{\mathcal{I},\mathcal{J}}}}
(\boldsymbol{\sigma}_{\mathcal{J}} \mid \mathbf{r}_{\mathcal{I},\mathcal{J}}),
\label{equ:lem2-0}
\end{align}
where the second equality results because
$\mathbf{R}_{\mathcal{I},\mathcal{J}}$ and $\mathbf{R}_{\mathcal{I},\mathcal{L}}$ contains i.i.d. uniform
elements and
$\boldsymbol{\Sigma}_{\mathbf{R}_{\mathcal{I},\mathcal{J}}}$ is a
function of $\mathbf{R}_{\mathcal{I},\mathcal{J}}$. More specifically,
this latter fact gives
\begin{equation}\label{equ:lem2-1}
p_{\boldsymbol{\Sigma}_{\mathbf{R}_{\mathcal{I},\mathcal{J}}}\mid
\mathbf{R}_{\mathcal{I},\mathcal{J}}}
(\boldsymbol{\sigma}_{\mathcal{J}}\mid \mathbf{r}_{\mathcal{I},\mathcal{J}})
=\prod_{j\in\mathcal{J}}\delta\left(\boldsymbol{\sigma}_j\ominus\boldsymbol{\Sigma}_{\mathbf{r}_{\mathcal{I},j}}\right).
\end{equation}

Since $\mathbf{R}_{1,2}$, $\mathbf{R}_{2,1}$, and the elements of
$\mathbf{R}_{\mathcal{I},\mathcal{K} \setminus \mathcal{I}}$ are
i.i.d. uniform, letting
$\mathbf{W}=\mathbf{R}_{2,1} \ominus \mathbf{R}_{1,2}$ gives that
$\mathbf{W}$ is uniform and independent of
$\mathbf{R}_{\mathcal{I},\mathcal{K} \setminus \mathcal{I}}$, i.e.,
for any $\mathbf{w}\in\mathcal{N}_m^{|\mathcal{X}|}$ and
$\mathbf{r}_{\mathcal{I},\mathcal{K}\setminus \mathcal{I}} \in
\mathcal{N}_m^{2(K-2)|\mathcal{X}|}$,
$p_{\mathbf{W}\mid \mathbf{R}_{\mathcal{I},\mathcal{K}\setminus
    \mathcal{I}}} (\mathbf{w}\mid
\mathbf{r}_{\mathcal{I},\mathcal{K}\setminus \mathcal{I}})
=p_{\mathbf{W}}(\mathbf{w})=2^{-m|\mathcal{X}|}$.  In addition,
because
$\boldsymbol{\Sigma}_{\mathbf{R}_{\mathcal{I},1}}= \mathbf{W} \ominus
\bigoplus_{k\in\mathcal{K} \setminus \mathcal{I}} \mathbf{R}_{1,k}$
and
$\boldsymbol{\Sigma}_{\mathbf{R}_{\mathcal{I},2}}=
\ominus\mathbf{W}\ominus\bigoplus_{k\in\mathcal{K} \setminus
  \mathcal{I}} \mathbf{R}_{2,k}$, we have
\begin{align}
&p_{\boldsymbol{\Sigma}_{\mathbf{R}_{\mathcal{I},\mathcal{I}}} \mid
 \mathbf{R}_{\mathcal{I},\mathcal{K}\setminus \mathcal{I}}}
(\boldsymbol{\sigma}_{\mathcal{I}}\mid
                \mathbf{r}_{\mathcal{I},\mathcal{K} \setminus \mathcal{I}})
\notag \\
&=
p_{\mathbf{W}} \left(\boldsymbol{\sigma}_1 \oplus \bigoplus_{k\in
            \mathcal{K} \setminus \mathcal{I}} \mathbf{r}_{1,k} \right)
\cdot \delta\left(\boldsymbol{\sigma}_1\oplus\boldsymbol{\sigma}_2
\oplus\bigoplus_{k\in \mathcal{K} \setminus \mathcal{I}}
            \boldsymbol{\Sigma}_{\mathbf{r}_{\mathcal{I},k}} \right)
\notag \\
&=
2^{-m|\mathcal{X}|}\cdot \delta\left(\boldsymbol{\sigma}_1\oplus\boldsymbol{\sigma}_2
\oplus\bigoplus_{j\in\mathcal{J}}\boldsymbol{\Sigma}_{\mathbf{r}_{\mathcal{I},j}}
\oplus\bigoplus_{l\in\mathcal{L}}\boldsymbol{\Sigma}_{\mathbf{r}_{\mathcal{I},l}}
\right).
\label{equ:lem2-2}
\end{align}
Now, inserting~\eqref{equ:lem2-1} and~\eqref{equ:lem2-2}
into~\eqref{equ:lem2-0} yields
\begin{align*}
&
p_{\boldsymbol{\Sigma}_{\mathbf{R}_{\mathcal{I}, \mathcal{K} \setminus
                 \mathcal{L}}}\mid \mathbf{R}_{\mathcal{I},\mathcal{L}}}
(\boldsymbol{\sigma}_{\mathcal{K}\setminus\mathcal{L}}\mid \mathbf{r}_{\mathcal{I},\mathcal{L}})
\\
&=
2^{-(2(K-L-2)+1) m |\mathcal{X}|} \hspace{-10pt}
\sum_{\mathbf{r}_{\mathcal{I},\mathcal{J}}\in\mathcal{N}_m^{2(K-L-2)|\mathcal{X}|}}
\prod_{j\in\mathcal{J}} \delta\left( \boldsymbol{\sigma}_j \ominus
     \boldsymbol{\Sigma}_{\mathbf{r}_{\mathcal{I},j}} \right)
\\
& \hspace{10pt}
\cdot \delta\left(\boldsymbol{\sigma}_1\oplus
\boldsymbol{\sigma}_2
\oplus \bigoplus_{j\in\mathcal{J}}\boldsymbol{\Sigma}_{\mathbf{r}_{\mathcal{I},j}}
\oplus\bigoplus_{l\in\mathcal{L}} \boldsymbol{\Sigma}_{\mathbf{r}_{\mathcal{I},l}}
\right)\\
&=
2^{-m(K-L-1)|\mathcal{X}|}\cdot
\delta\left(\boldsymbol{\Sigma}_{\boldsymbol{\sigma}_{\mathcal{K}\setminus\mathcal{L}}}
\oplus\bigoplus_{l\in\mathcal{L}}\boldsymbol{\Sigma}_{\mathbf{r}_{\mathcal{I},l}}\right),
\end{align*}
where the second equality is due to simple counting.

\subsection{Proof of Lemma~\ref{lem:useful3}}

Since the generation of $W$ is PPT and $\hat{B}(Y,V,W)$ is PPT,
$\hat{B}_0(Y,U,V)$ is PPT by construction. In addition, for any
$b\in\{0,1\}$, $y\in\mathcal{Y}$, $u\in\mathcal{U}$, and
$v\in\mathcal{V}$,
\begin{align*}
&p_{\hat{B}_0\mid Y,U,V}(b\mid y,u,v) 
\\
&=\sum_{w\in\mathcal{W}} P_{\hat{B}\mid Y,V,W}(b\mid y,v,w)
\cdot p_{W\mid Y,U,V}(w\mid y,u,v).
\end{align*}
Hence,
\begin{align*}
&
\Pr(\hat{B}(Y,V,W)=B\mid U=u,V=v)
\\
&=
\sum_{b\in\{0,1\}}\sum_{y\in\mathcal{Y}}\sum_{w\in\mathcal{W}}
p_{\hat{B}\mid Y,V,W}(b\mid y,v,w)
\\
& \hspace{30pt}
\cdot p_{W\mid Y,U,V}(w\mid y,u,v)
\cdot p_{B,Y\mid U,V}(b,y\mid u,v)
\\
&=
\sum_{b\in\{0,1\}}\sum_{y\in\mathcal{Y}}
p_{\hat{B}_0\mid Y,U,V}(b\mid y,u,v)
\cdot  p_{B,Y\mid U,V}(b,y\mid u,v)
\\
&=
\Pr(\hat{B}_0(Y,U,V)=B\mid U=u,V=v).
\end{align*}

\section{Proof of~\eqref{equ:A3}} \label{app:A3}
\begin{figure*}
\begin{align}
& \hspace{-5pt}
\Pr(\hat{B}_3 ( \mathbf{G}^B_{\mathcal{K}\setminus\mathcal{L}},
\mathbf{Q}^0_{\mathcal{K}\setminus\mathcal{L}},
\mathbf{Q}^1_{\mathcal{K}\setminus\mathcal{L}},
{\mathbf{\bar R}}_{\mathcal{K}\setminus\mathcal{I},\mathcal{K}\setminus\mathcal{L}},
\mathbf{R}_{\mathcal{K}\setminus\mathcal{L},\mathcal{L}},
\boldsymbol{\Phi}^n_{\mathcal{K}\setminus\mathcal{L}})=B
\mid 
\mathbb{Q}(\mathbf{q}^0_{\mathcal{K}\setminus\mathcal{L}},
\mathbf{q}^1_{\mathcal{K}\setminus\mathcal{L}},\mathbf{q}_{\mathcal{L}}),
{\mathbf{\bar R}}_{\mathcal{K}\setminus\mathcal{I},\mathcal{K}\setminus\mathcal{L}}
={\mathbf{\bar r}}_{\mathcal{K}\setminus\mathcal{I},\mathcal{K}\setminus\mathcal{L}},
\notag \\
& \hspace{20pt}
\mathbf{R}_{\mathcal{K}\setminus\mathcal{L},\mathcal{L}}
=\mathbf{r}_{\mathcal{K}\setminus\mathcal{L},\mathcal{L}},
\boldsymbol{\Phi}^n_{\mathcal{K}\setminus\mathcal{L}}
=\boldsymbol{\phi}^n_{\mathcal{K}\setminus\mathcal{L}})
\notag \\
&=
\frac{1}{2}
\sum_{\mathbf{g}_{\mathcal{K}\setminus \mathcal{L}}\in\mathcal{N}_m^{(K-L)|\mathcal{X}|}}
\sum_{b\in\{0,1\}}
p_{\hat{B}_3\mid 
\mathbf{G}^B_{\mathcal{K}\setminus\mathcal{L}},
\mathbf{Q}^0_{\mathcal{K}\setminus\mathcal{L}},
\mathbf{Q}^1_{\mathcal{K}\setminus\mathcal{L}},
{\mathbf{\bar R}}_{\mathcal{K}\setminus\mathcal{I},\mathcal{K}\setminus\mathcal{L}},
\mathbf{R}_{\mathcal{K}\setminus\mathcal{L},\mathcal{L}},
\boldsymbol{\Phi}^n_{\mathcal{K}\setminus\mathcal{L}}}
(b \mid 
\mathbf{g}_{\mathcal{K}\setminus\mathcal{L}},
\mathbf{q}^0_{\mathcal{K}\setminus\mathcal{L}},
\mathbf{q}^1_{\mathcal{K}\setminus\mathcal{L}},
{\mathbf{\bar r}}_{\mathcal{K}\setminus\mathcal{I},\mathcal{K}\setminus\mathcal{L}},
\mathbf{r}_{\mathcal{K}\setminus\mathcal{L},\mathcal{L}},
\boldsymbol{\phi}^n_{\mathcal{K}\setminus\mathcal{L}})
\notag \\
& \hspace{60pt}
\cdot p_{\mathbf{G}^B_{\mathcal{K}\setminus\mathcal{L}}
\mid\mathbf{Q}^0_{\mathcal{K}\setminus\mathcal{L}},
\mathbf{Q}^1_{\mathcal{K}\setminus\mathcal{L}}, \mathbf{Q}_{\mathcal{L}},
{\mathbf{\bar R}}_{\mathcal{K}\setminus\mathcal{I},\mathcal{K}\setminus\mathcal{L}},
\mathbf{R}_{\mathcal{K}\setminus\mathcal{L},\mathcal{L}},
\boldsymbol{\Phi}^n_{\mathcal{K}\setminus\mathcal{L}},
B}
(\mathbf{g}_{\mathcal{K}\setminus\mathcal{L}}
\mid \mathbf{q}^0_{\mathcal{K}\setminus\mathcal{L}},
\mathbf{q}^1_{\mathcal{K}\setminus\mathcal{L}}, \mathbf{q}_{\mathcal{L}},
{\mathbf{\bar r}}_{\mathcal{K}\setminus\mathcal{I},\mathcal{K}\setminus\mathcal{L}},
\mathbf{r}_{\mathcal{K}\setminus\mathcal{L},\mathcal{L}},
\boldsymbol{\phi}^n_{\mathcal{K}\setminus\mathcal{L}},
b)
\notag \\
&=
\frac{1}{2}
\sum_{\mathbf{g}_{\mathcal{K}\setminus \mathcal{L}}\in\mathcal{N}_m^{(K-L)|\mathcal{X}|}}
\sum_{b\in\{0,1\}}
p_{\hat{B}_3 \mid 
\mathbf{G}^B_{\mathcal{K}\setminus\mathcal{L}},
\mathbf{Q}^0_{\mathcal{K}\setminus\mathcal{L}},
\mathbf{Q}^1_{\mathcal{K}\setminus\mathcal{L}},
{\mathbf{\bar R}}_{\mathcal{K}\setminus\mathcal{I},\mathcal{K}\setminus\mathcal{L}},
\mathbf{R}_{\mathcal{K}\setminus\mathcal{L},\mathcal{L}},
\boldsymbol{\Phi}^n_{\mathcal{K}\setminus\mathcal{L}}}
(b \mid
\mathbf{g}_{\mathcal{K}\setminus\mathcal{L}},
\mathbf{q}^0_{\mathcal{K}\setminus\mathcal{L}},
\mathbf{q}^1_{\mathcal{K}\setminus\mathcal{L}},
{\mathbf{\bar r}}_{\mathcal{K}\setminus\mathcal{I},\mathcal{K}\setminus\mathcal{L}},
\mathbf{r}_{\mathcal{K}\setminus\mathcal{L},\mathcal{L}},
\boldsymbol{\phi}^n_{\mathcal{K}\setminus\mathcal{L}})
\notag \\
& \hspace{80pt}
\cdot p_{\boldsymbol{\Sigma}_{\mathbf{R}_{\mathcal{I},\mathcal{K}\setminus\mathcal{L}}}
\mid
\mathbf{R}_{\mathcal{I},\mathcal{L}}}
(\mathbf{g}_{\mathcal{K}\setminus\mathcal{L}} \ominus \mathbf{q}^b_{\mathcal{K}\setminus\mathcal{L}}
\ominus \boldsymbol{\Sigma}_{\mathbf{r}_{\mathcal{K}\setminus\mathcal{I},\mathcal{K}\setminus\mathcal{L}}}
\mid 
\mathbf{r}_{\mathcal{I},\mathcal{L}})
\notag \\
&=
\frac{1}{2}
\cdot 2^{-m(K-L-1)|\mathcal{X}|} \cdot
\sum_{\mathbf{g}_{\mathcal{K}\setminus \mathcal{L}}\in\mathcal{N}_m^{(K-L)|\mathcal{X}|}}
\delta\left(
\boldsymbol{\Sigma}_{\mathbf{g}_{\mathcal{K}\setminus\mathcal{L}}}
\ominus 
\boldsymbol{\Sigma}_{\mathbf{q}^0_{\mathcal{K}\setminus\mathcal{L}}}
\ominus 
\bigoplus_{l\in\mathcal{K}\setminus\mathcal{L}}
\boldsymbol{\Sigma}_{\mathbf{r}_{\mathcal{K}\setminus\mathcal{I},l}}
\oplus
\bigoplus_{l\in\mathcal{L}}
\boldsymbol{\Sigma}_{\mathbf{r}_{\mathcal{I},l}}
\right)
\notag \\
& \hspace{30pt}
\cdot \underbrace{\sum_{b\in\{0,1\}}
p_{\hat{B}_3 \mid 
\mathbf{G}^B_{\mathcal{K}\setminus\mathcal{L}},
\mathbf{Q}^0_{\mathcal{K}\setminus\mathcal{L}},
\mathbf{Q}^1_{\mathcal{K}\setminus\mathcal{L}},
{\mathbf{\bar R}}_{\mathcal{K}\setminus\mathcal{I},\mathcal{K}\setminus\mathcal{L}},
\mathbf{R}_{\mathcal{K}\setminus\mathcal{L},\mathcal{L}},
\boldsymbol{\Phi}^n_{\mathcal{K}\setminus\mathcal{L}}}
(b\mid
\mathbf{g}_{\mathcal{K}\setminus\mathcal{L}},
\mathbf{q}^0_{\mathcal{K}\setminus\mathcal{L}},
\mathbf{q}^1_{\mathcal{K}\setminus\mathcal{L}},
{\mathbf{\bar r}}_{\mathcal{K}\setminus\mathcal{I},\mathcal{K}\setminus\mathcal{L}},
\mathbf{r}_{\mathcal{K}\setminus\mathcal{L},\mathcal{L}},
\boldsymbol{\phi}^n_{\mathcal{K}\setminus\mathcal{L}})}_{1}
\notag \\
&=
\frac{1}{2}
\cdot 
2^{-m(K-L-1)|\mathcal{X}|} \cdot
\sum_{\mathbf{g}_{\mathcal{K}\setminus \mathcal{L}}\in\mathcal{N}_m^{(K-L)|\mathcal{X}|}}
\delta\left(
\boldsymbol{\Sigma}_{\mathbf{g}_{\mathcal{K}\setminus\mathcal{L}}}
\ominus 
\boldsymbol{\Sigma}_{\mathbf{q}^0_{\mathcal{K}\setminus\mathcal{L}}}
\ominus 
\bigoplus_{l\in\mathcal{K}\setminus\mathcal{L}}
\boldsymbol{\Sigma}_{\mathbf{r}_{\mathcal{K}\setminus\mathcal{I},l}}
\oplus
\bigoplus_{l\in\mathcal{L}}
\boldsymbol{\Sigma}_{\mathbf{r}_{\mathcal{I},l}}
\right)
=\frac{1}{2}.
\label{equ:a3-1}
\end{align}
\hrule
\end{figure*}
Equation~\eqref{equ:a3-1} shown on top of the next page provides the
steps to establish the second equality in~\eqref{equ:A3}, where the
first equality is due to the functional form of $\hat{B}_{3}$, the
second equality results because
\[
  \mathbf{G}^B_{\mathcal{K}\setminus\mathcal{L}}
=\mathbf{Q}^B_{\mathcal{K}\setminus\mathcal{L}} \oplus
\boldsymbol{\Sigma}_{\mathbf{R}_{\mathcal{K}\setminus\mathcal{I},\mathcal{K}\setminus\mathcal{L}}}
\oplus
\boldsymbol{\Sigma}_{\mathbf{R}_{\mathcal{I},\mathcal{K}\setminus\mathcal{L}}},
\]
$\mathbf{R}_{\mathcal{I},\mathcal{K}\setminus\mathcal{L}}$, and hence
$\boldsymbol{\Sigma}_{\mathbf{R}_{\mathcal{I},\mathcal{K}\setminus\mathcal{L}}}$,
are conditionally independent of
$[\mathbf{Q}^0_{\mathcal{K}\setminus\mathcal{L}},
\mathbf{Q}^1_{\mathcal{K}\setminus\mathcal{L}},
\mathbf{Q}_{\mathcal{L}}, {\mathbf{\bar
    R}}_{\mathcal{K}\setminus\mathcal{I},\mathcal{K}\setminus\mathcal{L}},
\mathbf{R}_{\mathcal{J},\mathcal{L}},
\boldsymbol{\Phi}^n_{\mathcal{K}\setminus\mathcal{L}}, B]$ given
$\mathbf{R}_{\mathcal{I},\mathcal{L}}$, and
$\boldsymbol{\Sigma}_{\mathbf{R}_{\mathcal{K}\setminus\mathcal{I},
    \mathcal{K}\setminus\mathcal{L}}}$ is a deterministic function of
$[{\mathbf{\bar    R}}_{\mathcal{K}\setminus\mathcal{I},\mathcal{K}\setminus\mathcal{L}},
\mathbf{R}_{\mathcal{K}\setminus\mathcal{L},\mathcal{L}}]$, the third
equality is due to Lemma~\ref{lem:useful2} and that
$\boldsymbol{\Sigma}_{\mathbf{g}_{\mathcal{K}\setminus\mathcal{L}}
  \ominus \mathbf{q}^b_{\mathcal{K}\setminus\mathcal{L}} \ominus
  \boldsymbol{\Sigma}_{\mathbf{R}_{\mathcal{K}\setminus\mathcal{I},\mathcal{K}\setminus\mathcal{L}}}}
= \boldsymbol{\Sigma}_{\mathbf{g}_{\mathcal{K}\setminus\mathcal{L}}}
\ominus
\boldsymbol{\Sigma}_{\mathbf{q}^0_{\mathcal{K}\setminus\mathcal{L}}}
\ominus \bigoplus_{l\in\mathcal{K}\setminus\mathcal{L}}
\boldsymbol{\Sigma}_{\mathbf{r}_{\mathcal{K}\setminus\mathcal{I},l}}$
for both $b=0$ and $1$ (recall
$\boldsymbol{\Sigma}_{\mathbf{q}^0_{\mathcal{K}\setminus\mathcal{L}}}
=
\boldsymbol{\Sigma}_{\mathbf{q}^1_{\mathcal{K}\setminus\mathcal{L}}}$),
and the last equality results because the number of elements
$\mathbf{g}_{\mathcal{K}\setminus\mathcal{L}}
\in\mathcal{N}_m^{(K-L)|\mathcal{X}|}$ that
$\boldsymbol{\Sigma}_{\mathbf{g}_{\mathcal{K}\setminus\mathcal{L}}}$
equals any specific element in $\mathcal{N}_m^{(K-L)|\mathcal{X}|}$ is
exactly $2^{m(K-L-1)|\mathcal{X}|}$.

\bibliographystyle{IEEEtran}

\begin{IEEEbiographynophoto}{Xiaoshan Wang}
received the B.E. degree in automation and the M.S. degree in control theory and control engineering from Beijing Jiaotong University, China, in 2008 and 2011 respectively, and the Ph.D. degree in information security from the University of Chinese Academy of Sciences, China, in 2018. He is currently pursuing the Ph.D. degree with the University of Florida, USA. His research interests include privacy preservation, wireless communication security, information theory, and machine learning.
\end{IEEEbiographynophoto}


\begin{IEEEbiographynophoto}{Tan F. Wong} received the B.Sc. degree (Hons.) from the Chinese University of Hong Kong in 1991, and the M.S.E.E. and Ph.D. degrees from Purdue University in 1992 and 1997, respectively, all in electrical engineering. He was a Research Engineer with the Department of Electronics, Macquarie University, Sydney, Australia. He also served as a Postdoctoral Research Associate with the School of Electrical and Computer Engineering, Purdue University. Since 1998, he has been with the University of Florida, where he is currently a Professor of Electrical and Computer Engineering. 

\end{IEEEbiographynophoto}

\end{document}